\documentclass[10pt,journal,compsoc]{IEEEtran}

\usepackage{tikz}
\usepackage{amssymb}
\usepackage{amsthm}
\usepackage{amsmath}
\usepackage{mathtools}
\usepackage{balance}
\usepackage{subcaption}
\usepackage{pifont}
\usepackage{enumitem}
\usepackage{booktabs}
\usepackage{multirow}
\usepackage{lipsum}
\usepackage{pgfplots}
\usepackage{graphicx}
\usepackage{epstopdf}
\usepackage{array}
\usepackage{comment}
\usepackage{multirow}
\usepackage{makecell}
\usepackage{cite}
\usepackage{titlesec}
\usetikzlibrary{shapes}
\usepackage{float}
\usepackage{dblfloatfix}
\usepackage[bookmarks=false]{hyperref}
\hypersetup{colorlinks=true,linkcolor=black,citecolor=blue}
\usepackage[capitalise]{cleveref}
\usepackage{glossaries}
\usepackage{verbatim}
\usepackage{textcomp}
\usepackage[table]{xcolor}
\usepackage{tabstackengine}
\usepackage{booktabs,array}
\usepackage{stackengine}
\usepackage{adjustbox}
\usepackage[absolute,showboxes]{textpos}

\usepackage{mathrsfs}

\newcommand{\bcmark}{\ding{52}}%
\newcommand{\xmark}{\ding{55}}%
\newcommand\mytextsf{\bfseries\sffamily\fontsize{9pt}{9pt}\selectfont}

\newcommand*\circled[1]{\tikz[baseline=(char.base)]{
            \node[shape=circle,draw,inner sep=2pt] (char) {#1};}}

\newtheorem{theorem}{Theorem}
\newtheorem{proposition}{Proposition}

\newtheorem{lemma}[theorem]{Lemma}

\newtheoremstyle{myremark}
  {\topsep}
  {\topsep}
  {\itshape}
  {0pt}
  {\scshape}
  {.}
  { }
  {}

\theoremstyle{myremark}
\newtheorem{remark}{Remark}

\newtheoremstyle{myexample}
  {\topsep}
  {\topsep}
  {\itshape}
  {0pt}
  {\scshape}
  {.}
  { }
  {}

\theoremstyle{myexample}
\newtheorem*{example}{Example}

\newcommand{\doubletilde}[1]{\tilde{\raisebox{0pt}[0.85\height]{$\tilde{#1}$}}}

\makeatletter
\let\save@mathaccent\mathaccent
\newcommand*\if@single[3]{%
  \setbox0\hbox{${\mathaccent"0362{#1}}^H$}%
  \setbox2\hbox{${\mathaccent"0362{\kern0pt#1}}^H$}%
  \ifdim\ht0=\ht2 #3\else #2\fi
  }
\newcommand*\rel@kern[1]{\kern#1\dimexpr\macc@kerna}
\newcommand*\wideaccent[2]{\@ifnextchar^{{\wide@accent{#1}{#2}{0}}}{\wide@accent{#1}{#2}{1}}}
\newcommand*\wide@accent[3]{\if@single{#2}{\wide@accent@{#1}{#2}{#3}{1}}{\wide@accent@{#1}{#2}{#3}{2}}}
\newcommand*\wide@accent@[4]{%
  \begingroup
  \def\mathaccent##1##2{%
    \let\mathaccent\save@mathaccent
    \if#42 \let\macc@nucleus\first@char \fi
    \setbox\z@\hbox{$\macc@style{\macc@nucleus}_{}$}%
    \setbox\tw@\hbox{$\macc@style{\macc@nucleus}{}_{}$}%
    \dimen@\wd\tw@
    \advance\dimen@-\wd\z@
    \divide\dimen@ 3
    \@tempdima\wd\tw@
    \advance\@tempdima-\scriptspace
    \divide\@tempdima 10
    \advance\dimen@-\@tempdima
    \ifdim\dimen@>\z@ \dimen@0pt\fi
    \rel@kern{0.6}\kern-\dimen@
    \if#41
      #1{\rel@kern{-0.6}\kern\dimen@\macc@nucleus\rel@kern{0.4}\kern\dimen@}%
      \advance\dimen@0.4\dimexpr\macc@kerna
      \let\final@kern#3%
      \ifdim\dimen@<\z@ \let\final@kern1\fi
      \if\final@kern1 \kern-\dimen@\fi
    \else
      #1{\rel@kern{-0.6}\kern\dimen@#2}%
    \fi
  }%
  \macc@depth\@ne
  \let\math@bgroup\@empty \let\math@egroup\macc@set@skewchar
  \mathsurround\z@ \frozen@everymath{\mathgroup\macc@group\relax}%
  \macc@set@skewchar\relax
  \let\mathaccentV\macc@nested@a
  \if#41
    \macc@nested@a\relax111{#2}%
  \else
    \def\gobble@till@marker##1\endmarker{}%
    \futurelet\first@char\gobble@till@marker#2\endmarker
    \ifcat\noexpand\first@char A\else
      \def\first@char{}%
    \fi
    \macc@nested@a\relax111{\first@char}%
  \fi
  \endgroup
}
\makeatother

\newcommand\widebar{\wideaccent\overline}

\makeatletter
\newcommand{\doublewidetilde}[1]{{%
  \mathpalette\double@widetilde{#1}%
}}
\newcommand{\double@widetilde}[2]{%
  \sbox\z@{$\m@th#1\widetilde{#2}$}%
  \ht\z@=.9\ht\z@
  \widetilde{\box\z@}%
}
\makeatother

\newcommand\scaleddot{\scalebox{.89}{.}}

\makeatletter
\renewcommand{\dddot}[1]{%
  {\mathop{\kern\z@#1}\limits^{\makebox[0pt][c]{\vbox to-2.2\ex@{\kern-\tw@\ex@
   \hbox{\normalfont\scaleddot\kern-0.5pt\scaleddot\kern-0.5pt\scaleddot}\vss}}}}}
\renewcommand{\ddddot}[1]{%
  {\mathop{\kern\z@#1}\limits^{\makebox[0pt][c]{\vbox to-2.2\ex@{\kern-\tw@\ex@
   \hbox{\normalfont\scaleddot\kern-0.5pt\scaleddot\kern-0.5pt\scaleddot\kern-0.5pt\scaleddot}\vss}}}}}
\makeatother

\makeglossaries

\newacronym{ISAC}{ISAC}{integrated sensing and communications}
\newacronym{BS}{BS}{base station}
\newacronym{RF}{RF}{radio-frequency}
\newacronym{DAC}{DAC}{digital-to-analog converter}
\newacronym{IRS}{IRS}{intelligent reflecting surface}
\newacronym{PAPR}{PAPR}{peak-to-average power ratio}
\newacronym{CSI}{CSI}{channel state information}

\newacronym{COTS}{COTS}{commercial-off-the-shelf}

\newacronym{RRM}{RRM}{radio resource management}
 
\newacronym{AWGN}{AWGN}{additive white Gaussian noise}
\newacronym{SNR}{SNR}{signal-to-noise ratio}
\newacronym{DPG}{DPG}{directional power gain}
\newacronym{DR}{DR}{data rate}
\newacronym{SINR}{SINR}{signal-to-interference-plus-noise ratio}
\newacronym{SDR}{SDR}{semidefinite relaxation}
\newacronym{SDP}{SDP}{semidefinite programming}
\newacronym{SCA}{SCA}{successive convex approximation}
 
\newacronym{MILP}{MILP}{mixed-integer linear program} 
\newacronym{MINLP}{MINLP}{mixed-integer nonlinear program} 
\newacronym{MISDP}{MISDP}{mixed-integer semidefinite program} 

\newacronym{AOA}{AOA}{angle of arrival}
\newacronym{AOD}{AOD}{angle of departure}
\newacronym{RC}{RC}{reflection coefficient}
\newacronym{RSI}{RSI}{residual self-interference}

\newacronym{LoS}{LoS}{line-of-sight}
\newacronym{NLoS}{NLoS}{non-LoS}

\newacronym{LHS}{LHS}{left-hand-side}
\newacronym{RHS}{RHS}{right-hand-side}

\newacronym{ES}{ES}{exhaustive search}
\newacronym{BnC}{BnC}{branch-and-cut}
\definecolor{blue1}{HTML}{b3cde0}
\definecolor{blue2}{HTML}{6497b1}
\definecolor{blue3}{HTML}{005b96}
\definecolor{blue4}{HTML}{03396c}
\definecolor{blue5}{HTML}{011f4b}

\definecolor{dcolor1}{HTML}{253494}
\definecolor{dcolor2}{HTML}{636363}
\definecolor{dcolor3}{HTML}{fcbba1}
\definecolor{dcolor4}{HTML}{fb6a4a}
\definecolor{dcolor5}{HTML}{cb181d}
\definecolor{dcolor6}{HTML}{67000d}
\definecolor{dcolor7}{HTML}{9ecae1}
\definecolor{dcolor8}{HTML}{4292c6}
\definecolor{dcolor9}{HTML}{08519c}

\definecolor{layer1}{HTML}{003f5c}
\definecolor{layer2}{HTML}{444e86}
\definecolor{layer3}{HTML}{955196}
\definecolor{layer4}{HTML}{dd5182}
\definecolor{layer5}{HTML}{ff6e54}
\definecolor{layer6}{HTML}{ffa600}

\definecolorset{rgb}{}{}{%
AliceBlue,.94,.972,1;%
AntiqueWhite,.98,.92,.844;%
Aqua,0,1,1;%
Aquamarine,.498,1,.83;%
Azure,.94,1,1;%
Beige,.96,.96,.864;%
Bisque,1,.894,.77;%
Black,0,0,0;%
BlanchedAlmond,1,.92,.804;%
Blue,0,0,1;%
BlueViolet,.54,.17,.888;%
Brown,.648,.165,.165;%
BurlyWood,.87,.72,.53;%
CadetBlue,.372,.62,.628;%
Chartreuse,.498,1,0;%
Chocolate,.824,.41,.116;%
Coral,1,.498,.312;%
CornflowerBlue,.392,.585,.93;%
Cornsilk,1,.972,.864;%
Crimson,.864,.08,.235;%
Cyan,0,1,1;%
DarkBlue,0,0,.545;%
DarkCyan,0,.545,.545;%
DarkGoldenrod,.72,.525,.044;%
DarkGray,.664,.664,.664;%
DarkGreen,0,.392,0;%
DarkGrey,.664,.664,.664;%
DarkKhaki,.74,.716,.42;%
DarkMagenta,.545,0,.545;%
DarkOliveGreen,.332,.42,.185;%
DarkOrange,1,.55,0;%
DarkOrchid,.6,.196,.8;%
DarkRed,.545,0,0;%
DarkSalmon,.912,.59,.48;%
DarkSeaGreen,.56,.736,.56;%
DarkSlateBlue,.284,.24,.545;%
DarkSlateGray,.185,.31,.31;%
DarkSlateGrey,.185,.31,.31;%
DarkTurquoise,0,.808,.82;%
DarkViolet,.58,0,.828;%
DeepPink,1,.08,.576;%
DeepSkyBlue,0,.75,1;%
DimGray,.41,.41,.41;%
DimGrey,.41,.41,.41;%
DodgerBlue,.116,.565,1;%
FireBrick,.698,.132,.132;%
FloralWhite,1,.98,.94;%
ForestGreen,.132,.545,.132;%
Fuchsia,1,0,1;%
Gainsboro,.864,.864,.864;%
GhostWhite,.972,.972,1;%
Gold,1,.844,0;%
Goldenrod,.855,.648,.125;%
Gray,.5,.5,.5;%
Grey,.5,.5,.5;%
Green,0,.5,0;%
GreenYellow,.68,1,.185;%
Honeydew,.94,1,.94;%
HotPink,1,.41,.705;%
IndianRed,.804,.36,.36;%
Indigo,.294,0,.51;%
Ivory,1,1,.94;%
Khaki,.94,.9,.55;%
Lavender,.9,.9,.98;%
LavenderBlush,1,.94,.96;%
LawnGreen,.488,.99,0;%
LemonChiffon,1,.98,.804;%
LightBlue,.68,.848,.9;%
LightCoral,.94,.5,.5;%
LightCyan,.88,1,1;%
LightGoldenrodYellow,.98,.98,.824;%
LightGray,.828,.828,.828;%
LightGreen,.565,.932,.565;%
LightGrey,.828,.828,.828;%
LightPink,1,.712,.756;%
LightSalmon,1,.628,.48;%
LightSeaGreen,.125,.698,.668;%
LightSkyBlue,.53,.808,.98;%
LightSlateGray,.468,.532,.6;%
LightSlateGrey,.468,.532,.6;%
LightSteelBlue,.69,.77,.87;%
LightYellow,1,1,.88;%
Lime,0,1,0;%
LimeGreen,.196,.804,.196;%
Linen,.98,.94,.9;%
Magenta,1,0,1;%
Maroon,.5,0,0;%
MediumAquamarine,.4,.804,.668;%
MediumBlue,0,0,.804;%
MediumOrchid,.73,.332,.828;%
MediumPurple,.576,.44,.86;%
MediumSeaGreen,.235,.7,.444;%
MediumSlateBlue,.484,.408,.932;%
MediumSpringGreen,0,.98,.604;%
MediumTurquoise,.284,.82,.8;%
MediumVioletRed,.78,.084,.52;%
MidnightBlue,.098,.098,.44;%
MintCream,.96,1,.98;%
MistyRose,1,.894,.884;%
Moccasin,1,.894,.71;%
NavajoWhite,1,.87,.68;%
Navy,0,0,.5;%
OldLace,.992,.96,.9;%
Olive,.5,.5,0;%
OliveDrab,.42,.556,.136;%
Orange,1,.648,0;%
OrangeRed,1,.27,0;%
Orchid,.855,.44,.84;%
PaleGoldenrod,.932,.91,.668;%
PaleGreen,.596,.985,.596;%
PaleTurquoise,.688,.932,.932;%
PaleVioletRed,.86,.44,.576;%
PapayaWhip,1,.936,.835;%
PeachPuff,1,.855,.725;%
Peru,.804,.52,.248;%
Pink,1,.752,.796;%
Plum,.868,.628,.868;%
PowderBlue,.69,.88,.9;%
Purple,.5,0,.5;%
Red,1,0,0;%
RosyBrown,.736,.56,.56;%
RoyalBlue,.255,.41,.884;%
SaddleBrown,.545,.27,.075;%
Salmon,.98,.5,.448;%
SandyBrown,.956,.644,.376;%
SeaGreen,.18,.545,.34;%
Seashell,1,.96,.932;%
Sienna,.628,.32,.176;%
Silver,.752,.752,.752;%
SkyBlue,.53,.808,.92;%
SlateBlue,.415,.352,.804;%
SlateGray,.44,.5,.565;%
SlateGrey,.44,.5,.565;%
Snow,1,.98,.98;%
SpringGreen,0,1,.498;%
SteelBlue,.275,.51,.705;%
Tan,.824,.705,.55;%
Teal,0,.5,.5;%
Thistle,.848,.75,.848;%
Tomato,1,.39,.28;%
Turquoise,.25,.88,.815;%
Violet,.932,.51,.932;%
Wheat,.96,.87,.7;%
White,1,1,1;%
WhiteSmoke,.96,.96,.96;%
Yellow,1,1,0;%
YellowGreen,.604,.804,.196}

\definecolorset{rgb}{}{}{%
AntiqueWhite1,1,.936,.86;%
AntiqueWhite2,.932,.875,.8;%
AntiqueWhite3,.804,.752,.69;%
AntiqueWhite4,.545,.512,.47;%
Aquamarine1,.498,1,.83;%
Aquamarine2,.464,.932,.776;%
Aquamarine3,.4,.804,.668;%
Aquamarine4,.27,.545,.455;%
Azure1,.94,1,1;%
Azure2,.88,.932,.932;%
Azure3,.756,.804,.804;%
Azure4,.512,.545,.545;%
Bisque1,1,.894,.77;%
Bisque2,.932,.835,.716;%
Bisque3,.804,.716,.62;%
Bisque4,.545,.49,.42;%
Blue1,0,0,1;%
Blue2,0,0,.932;%
Blue3,0,0,.804;%
Blue4,0,0,.545;%
Brown1,1,.25,.25;%
Brown2,.932,.23,.23;%
Brown3,.804,.2,.2;%
Brown4,.545,.136,.136;%
Burlywood1,1,.828,.608;%
Burlywood2,.932,.772,.57;%
Burlywood3,.804,.668,.49;%
Burlywood4,.545,.45,.332;%
CadetBlue1,.596,.96,1;%
CadetBlue2,.556,.898,.932;%
CadetBlue3,.48,.772,.804;%
CadetBlue4,.325,.525,.545;%
Chartreuse1,.498,1,0;%
Chartreuse2,.464,.932,0;%
Chartreuse3,.4,.804,0;%
Chartreuse4,.27,.545,0;%
Chocolate1,1,.498,.14;%
Chocolate2,.932,.464,.13;%
Chocolate3,.804,.4,.112;%
Chocolate4,.545,.27,.075;%
Coral1,1,.448,.336;%
Coral2,.932,.415,.312;%
Coral3,.804,.356,.27;%
Coral4,.545,.244,.185;%
Cornsilk1,1,.972,.864;%
Cornsilk2,.932,.91,.804;%
Cornsilk3,.804,.785,.694;%
Cornsilk4,.545,.532,.47;%
Cyan1,0,1,1;%
Cyan2,0,.932,.932;%
Cyan3,0,.804,.804;%
Cyan4,0,.545,.545;%
DarkGoldenrod1,1,.725,.06;%
DarkGoldenrod2,.932,.68,.055;%
DarkGoldenrod3,.804,.585,.048;%
DarkGoldenrod4,.545,.396,.03;%
DarkOliveGreen1,.792,1,.44;%
DarkOliveGreen2,.736,.932,.408;%
DarkOliveGreen3,.635,.804,.352;%
DarkOliveGreen4,.43,.545,.24;%
DarkOrange1,1,.498,0;%
DarkOrange2,.932,.464,0;%
DarkOrange3,.804,.4,0;%
DarkOrange4,.545,.27,0;%
DarkOrchid1,.75,.244,1;%
DarkOrchid2,.698,.228,.932;%
DarkOrchid3,.604,.196,.804;%
DarkOrchid4,.408,.132,.545;%
DarkSeaGreen1,.756,1,.756;%
DarkSeaGreen2,.705,.932,.705;%
DarkSeaGreen3,.608,.804,.608;%
DarkSeaGreen4,.41,.545,.41;%
DarkSlateGray1,.592,1,1;%
DarkSlateGray2,.552,.932,.932;%
DarkSlateGray3,.475,.804,.804;%
DarkSlateGray4,.32,.545,.545;%
DeepPink1,1,.08,.576;%
DeepPink2,.932,.07,.536;%
DeepPink3,.804,.064,.464;%
DeepPink4,.545,.04,.312;%
DeepSkyBlue1,0,.75,1;%
DeepSkyBlue2,0,.698,.932;%
DeepSkyBlue3,0,.604,.804;%
DeepSkyBlue4,0,.408,.545;%
DodgerBlue1,.116,.565,1;%
DodgerBlue2,.11,.525,.932;%
DodgerBlue3,.094,.455,.804;%
DodgerBlue4,.064,.305,.545;%
Firebrick1,1,.19,.19;%
Firebrick2,.932,.172,.172;%
Firebrick3,.804,.15,.15;%
Firebrick4,.545,.1,.1;%
Gold1,1,.844,0;%
Gold2,.932,.79,0;%
Gold3,.804,.68,0;%
Gold4,.545,.46,0;%
Goldenrod1,1,.756,.145;%
Goldenrod2,.932,.705,.132;%
Goldenrod3,.804,.608,.112;%
Goldenrod4,.545,.41,.08;%
Green1,0,1,0;%
Green2,0,.932,0;%
Green3,0,.804,0;%
Green4,0,.545,0;%
Honeydew1,.94,1,.94;%
Honeydew2,.88,.932,.88;%
Honeydew3,.756,.804,.756;%
Honeydew4,.512,.545,.512;%
HotPink1,1,.43,.705;%
HotPink2,.932,.415,.655;%
HotPink3,.804,.376,.565;%
HotPink4,.545,.228,.385;%
IndianRed1,1,.415,.415;%
IndianRed2,.932,.39,.39;%
IndianRed3,.804,.332,.332;%
IndianRed4,.545,.228,.228;%
Ivory1,1,1,.94;%
Ivory2,.932,.932,.88;%
Ivory3,.804,.804,.756;%
Ivory4,.545,.545,.512;%
Khaki1,1,.965,.56;%
Khaki2,.932,.9,.52;%
Khaki3,.804,.776,.45;%
Khaki4,.545,.525,.305;%
LavenderBlush1,1,.94,.96;%
LavenderBlush2,.932,.88,.898;%
LavenderBlush3,.804,.756,.772;%
LavenderBlush4,.545,.512,.525;%
LemonChiffon1,1,.98,.804;%
LemonChiffon2,.932,.912,.75;%
LemonChiffon3,.804,.79,.648;%
LemonChiffon4,.545,.536,.44;%
LightBlue1,.75,.936,1;%
LightBlue2,.698,.875,.932;%
LightBlue3,.604,.752,.804;%
LightBlue4,.408,.512,.545;%
LightCyan1,.88,1,1;%
LightCyan2,.82,.932,.932;%
LightCyan3,.705,.804,.804;%
LightCyan4,.48,.545,.545;%
LightGoldenrod1,1,.925,.545;%
LightGoldenrod2,.932,.864,.51;%
LightGoldenrod3,.804,.745,.44;%
LightGoldenrod4,.545,.505,.298;%
LightPink1,1,.684,.725;%
LightPink2,.932,.635,.68;%
LightPink3,.804,.55,.585;%
LightPink4,.545,.372,.396;%
LightSalmon1,1,.628,.48;%
LightSalmon2,.932,.585,.448;%
LightSalmon3,.804,.505,.385;%
LightSalmon4,.545,.34,.26;%
LightSkyBlue1,.69,.888,1;%
LightSkyBlue2,.644,.828,.932;%
LightSkyBlue3,.552,.712,.804;%
LightSkyBlue4,.376,.484,.545;%
LightSteelBlue1,.792,.884,1;%
LightSteelBlue2,.736,.824,.932;%
LightSteelBlue3,.635,.71,.804;%
LightSteelBlue4,.43,.484,.545;%
LightYellow1,1,1,.88;%
LightYellow2,.932,.932,.82;%
LightYellow3,.804,.804,.705;%
LightYellow4,.545,.545,.48;%
Magenta1,1,0,1;%
Magenta2,.932,0,.932;%
Magenta3,.804,0,.804;%
Magenta4,.545,0,.545;%
Maroon1,1,.204,.7;%
Maroon2,.932,.19,.655;%
Maroon3,.804,.16,.565;%
Maroon4,.545,.11,.385;%
MediumOrchid1,.88,.4,1;%
MediumOrchid2,.82,.372,.932;%
MediumOrchid3,.705,.32,.804;%
MediumOrchid4,.48,.215,.545;%
MediumPurple1,.67,.51,1;%
MediumPurple2,.624,.475,.932;%
MediumPurple3,.536,.408,.804;%
MediumPurple4,.365,.28,.545;%
MistyRose1,1,.894,.884;%
MistyRose2,.932,.835,.824;%
MistyRose3,.804,.716,.71;%
MistyRose4,.545,.49,.484;%
NavajoWhite1,1,.87,.68;%
NavajoWhite2,.932,.81,.63;%
NavajoWhite3,.804,.7,.545;%
NavajoWhite4,.545,.475,.37;%
OliveDrab1,.752,1,.244;%
OliveDrab2,.7,.932,.228;%
OliveDrab3,.604,.804,.196;%
OliveDrab4,.41,.545,.132;%
Orange1,1,.648,0;%
Orange2,.932,.604,0;%
Orange3,.804,.52,0;%
Orange4,.545,.352,0;%
OrangeRed1,1,.27,0;%
OrangeRed2,.932,.25,0;%
OrangeRed3,.804,.215,0;%
OrangeRed4,.545,.145,0;%
Orchid1,1,.512,.98;%
Orchid2,.932,.48,.912;%
Orchid3,.804,.41,.79;%
Orchid4,.545,.28,.536;%
PaleGreen1,.604,1,.604;%
PaleGreen2,.565,.932,.565;%
PaleGreen3,.488,.804,.488;%
PaleGreen4,.33,.545,.33;%
PaleTurquoise1,.732,1,1;%
PaleTurquoise2,.684,.932,.932;%
PaleTurquoise3,.59,.804,.804;%
PaleTurquoise4,.4,.545,.545;%
PaleVioletRed1,1,.51,.67;%
PaleVioletRed2,.932,.475,.624;%
PaleVioletRed3,.804,.408,.536;%
PaleVioletRed4,.545,.28,.365;%
PeachPuff1,1,.855,.725;%
PeachPuff2,.932,.796,.68;%
PeachPuff3,.804,.688,.585;%
PeachPuff4,.545,.468,.396;%
Pink1,1,.71,.772;%
Pink2,.932,.664,.72;%
Pink3,.804,.57,.62;%
Pink4,.545,.39,.424;%
Plum1,1,.732,1;%
Plum2,.932,.684,.932;%
Plum3,.804,.59,.804;%
Plum4,.545,.4,.545;%
Purple1,.608,.19,1;%
Purple2,.57,.172,.932;%
Purple3,.49,.15,.804;%
Purple4,.332,.1,.545;%
Red1,1,0,0;%
Red2,.932,0,0;%
Red3,.804,0,0;%
Red4,.545,0,0;%
RosyBrown1,1,.756,.756;%
RosyBrown2,.932,.705,.705;%
RosyBrown3,.804,.608,.608;%
RosyBrown4,.545,.41,.41;%
RoyalBlue1,.284,.464,1;%
RoyalBlue2,.264,.43,.932;%
RoyalBlue3,.228,.372,.804;%
RoyalBlue4,.152,.25,.545;%
Salmon1,1,.55,.41;%
Salmon2,.932,.51,.385;%
Salmon3,.804,.44,.33;%
Salmon4,.545,.298,.224;%
SeaGreen1,.33,1,.624;%
SeaGreen2,.305,.932,.58;%
SeaGreen3,.264,.804,.5;%
SeaGreen4,.18,.545,.34;%
Seashell1,1,.96,.932;%
Seashell2,.932,.898,.87;%
Seashell3,.804,.772,.75;%
Seashell4,.545,.525,.51;%
Sienna1,1,.51,.28;%
Sienna2,.932,.475,.26;%
Sienna3,.804,.408,.224;%
Sienna4,.545,.28,.15;%
SkyBlue1,.53,.808,1;%
SkyBlue2,.494,.752,.932;%
SkyBlue3,.424,.65,.804;%
SkyBlue4,.29,.44,.545;%
SlateBlue1,.512,.435,1;%
SlateBlue2,.48,.404,.932;%
SlateBlue3,.41,.35,.804;%
SlateBlue4,.28,.235,.545;%
SlateGray1,.776,.888,1;%
SlateGray2,.725,.828,.932;%
SlateGray3,.624,.712,.804;%
SlateGray4,.424,.484,.545;%
Snow1,1,.98,.98;%
Snow2,.932,.912,.912;%
Snow3,.804,.79,.79;%
Snow4,.545,.536,.536;%
SpringGreen1,0,1,.498;%
SpringGreen2,0,.932,.464;%
SpringGreen3,0,.804,.4;%
SpringGreen4,0,.545,.27;%
SteelBlue1,.39,.72,1;%
SteelBlue2,.36,.675,.932;%
SteelBlue3,.31,.58,.804;%
SteelBlue4,.21,.392,.545;%
Tan1,1,.648,.31;%
Tan2,.932,.604,.288;%
Tan3,.804,.52,.248;%
Tan4,.545,.352,.17;%
Thistle1,1,.884,1;%
Thistle2,.932,.824,.932;%
Thistle3,.804,.71,.804;%
Thistle4,.545,.484,.545;%
Tomato1,1,.39,.28;%
Tomato2,.932,.36,.26;%
Tomato3,.804,.31,.224;%
Tomato4,.545,.21,.15;%
Turquoise1,0,.96,1;%
Turquoise2,0,.898,.932;%
Turquoise3,0,.772,.804;%
Turquoise4,0,.525,.545;%
VioletRed1,1,.244,.59;%
VioletRed2,.932,.228,.55;%
VioletRed3,.804,.196,.47;%
VioletRed4,.545,.132,.32;%
Wheat1,1,.905,.73;%
Wheat2,.932,.848,.684;%
Wheat3,.804,.73,.59;%
Wheat4,.545,.494,.4;%
Yellow1,1,1,0;%
Yellow2,.932,.932,0;%
Yellow3,.804,.804,0;%
Yellow4,.545,.545,0;%
Gray0,.333,.333,.333;%
Green0,0,1,0;%
Grey0,.745,.745,.745;%
Maroon0,.69,.19,.376;%
Purple0,.628,.125,.94}

\def\genbox#1#2#3#4#5#6{
    \leavevmode\raise#4bp\hbox to#5bp{\vrule height#5bp depth0bp width0bp
    \pdfliteral{q .5 w \csname #2COLOR\endcsname\space RG
                       \csname #3PDF\endcsname{#5}{#6} S Q
             \ifx1#1 q \csname #2COLOR\endcsname\space rg 
                       \csname #3PDF\endcsname{#5}{#6} f Q\fi}\hss}}

\begin{document}
	
\begin{textblock*}{18.2cm}(1.7cm,26.5cm) 
	\footnotesize
	\textcolor{red}{
	Accepted for publication in IEEE Transactions on Mobile Computing. © 2026 IEEE. Personal use of this material is permitted. Permission from IEEE must be obtained for all other uses, in any current or future media, including reprinting/republishing this material for advertising or promotional purposes, creating new collective works, for resale or redistribution to servers or lists, or reuse of any copyrighted component of this work in other works.}
\end{textblock*}




\title{\Huge Optimal Radio Resource Management for ISAC Under Imperfect Information: A Resource Economy-Driven Perspective}

\author{Luis F. Abanto-Leon and Setareh Maghsudi
\thanks{Luis F. Abanto-Leon and Setareh Maghsudi are affiliated to Ruhr University Bochum, with emails luis.abantoleon@ruhr-uni-bochum.de and setareh.maghsudi@ruhr-uni-bochum.de.}
}


\markboth{Journal of \LaTeX\ Class Files,~Vol.~15, No.~27, June~2019}%
{Shell \MakeLowercase{\textit{et al.}}: A Sample Article Using IEEEtran.cls for IEEE Journals}


\maketitle


\begin{abstract}

This work investigates the \gls{RRM} design for downlink \gls{ISAC} systems, jointly optimizing timeslot allocation, beam adaptation, functionality selection, and user-target pairing, with the goal of economizing resource consumption under imperfect information. Timeslot allocation assigns a number of discrete channel uses to targets and users, while beam adaptation selects transmit and receive beams with suitable directions, power levels, and beamwidths. Functionality selection determines whether each timeslot is used for sensing, communication, or their simultaneous operation, while user-target pairing specifies which users and targets are jointly served within the same timeslot. To ensure reliable operation, information imperfections arising from motion, quantization, feedback delays, and hardware limitations are considered. Resource economization is achieved by minimizing energy and time consumption through a multi-objective function, with strict prioritization of time savings. 

The resulting \gls{RRM} problem is formulated as a semi-infinite, nonconvex \gls{MINLP}. Given the lack of generic methods for solving such problems, we propose a tailor-made approach that exploits the underlying structure of the problem to uncover hidden convexities. This enables an exact reformulation as a \gls{MISDP}, which can be solved to global optimality. Simulations reveal important interdependencies among the considered \gls{RRM} components and show that the proposed approach achieves substantial performance improvements over baseline schemes, with gains up to $ 88\% $.

\end{abstract}


\begin{IEEEkeywords}
Integrated sensing and communications, radio resource management, beam adaptation, timeslot allocation, functionality selection, user-target pairing, imperfect information.
\end{IEEEkeywords}


\glsresetall
\section{Introduction} \label{sec:introduction}

\Gls{ISAC} marks a groundbreaking advancement in wireless technology by seamlessly combining sensing and communications functionalities within the same hardware, spectrum, and waveform  \cite{liu2020:joint-radar-communication-design-applications-state-art-road-ahead, balef2023:adaptive-energy-efficient-waveform-design-joint-communication-sensing-multiobjective-multiarmed-bandits}. This tight integration promises to maximize radio resource utilization efficiency and reduce costs, enabling enhanced performance and capabilities across a wide range of applications~\cite{liu2022:integrated-sensing-communications-toward-dual-functional-wireless-networks-6g-beyond}.


Millimeter-wave and terahertz frequencies offer compelling advantages for both sensing and communications. Their shorter wavelengths enable finer angular resolution and higher Doppler sensitivity, which translate into improved localization accuracy, enhanced target discrimination, and high-resolution imaging \cite{yao2014:terahertz-active-imaging-radar-preprocessing-experiment-results}. These frequencies also provide access to vast amounts of underutilized spectrum, facilitating high-throughput communications \cite{askar2024:mobilizing-terahertz-beam-d-band-analog-beamforming-front-end-prototyping-long-range-6g-trials}. This dual benefit has motivated numerous studies to explore the synergy between \gls{ISAC} and higher frequency bands as a key enabler for fulfilling the stringent and diverse requirements of next-generation wireless networks \cite{mao2022:waveform-design-joint-sensing-communications-millimeter-wave-low-terahertz-bands, zhuo2024:multi-beam-integrated-sensing-communication-state-of-the-art-challenges-opportunities, abanto2024:hierarchical-functionality-prioritization-multicast-isac-optimal-admission-control-discrete-phase-beamforming}.

Unlocking the full potential of \gls{ISAC} at high frequencies hinges on the effectiveness of \gls{RRM} design. Beamforming stands out as a crucial enabler among \gls{ISAC}'s \gls{RRM} components, drawing significant attention across a wide range of applications, including multicasting \cite{abanto2024:hierarchical-functionality-prioritization-multicast-isac-optimal-admission-control-discrete-phase-beamforming, ren2024:fundamental-crb-rate-tradeoff-multi-antenna-isac-systems-information-multicasting-multi-target-sensing}, autonomous driving \cite{xu2023:joint-antenna-selection-beamforming-integrated-automotive-radar-sensing-communications-quantized-double-phase-shifters, cong2023:vehicular-behavior-aware-beamforming-design-integrated-sensing-communication-systems}, physical-layer security \cite{ma2023:covert-beamforming-design-integrated-radar-sensing-communication-systems, bazzi2024:secure-full-duplex-integrated-sensing-communications}, and industrial IoT \cite{dong2024:beamforming-design-integrated-sensing-over-the-air-computation-communication-internet-robotic-things}. While digital beamforming offers fine-grained control over the beampattern, its hardware complexity and power consumption scale unfavorably with both operating frequency and array size, making it impractical for large antenna arrays at millimeter-wave and terahertz frequencies \cite{abanto2022:sequential-parametric-optimization-rate-splitting-precoding-non-orthogonal-unicast-multicast-transmissions, abanto2020:fairness-aware-hybrid-precoding-mmwave-noma-unicast-multicast-transmissions-industrial-iot, elbir2024:hybrid-beamforming-integrated-sensing-communications-low-resolution-dacs}. In this context, analog beamforming constitutes an attractive alternative due to its superior energy and cost efficiency. Although its beampattern control is inherently more constrained, analog beamforming aligns well with current and near-term high-frequency hardware capabilities and can still deliver competitive performance when combined with appropriate beam selection strategies.

Motivated by this, numerous studies on analog beamforming for \gls{ISAC} have focused on beam selection as a central design mechanism \cite{jain2024:commrad-context-aware-sensing-driven-millimeter-wave-networks, zhao2020:m-cube:-millimeter-wave-massive-mimo-software-radio}. However, most existing works optimize only the beam direction while assuming constant power and constant beamwidth, e.g., \cite{xiao2024:simultaneous-multi-beam-sweeping-mmwave-massive-mimo-integrated-sensing-communication, rahman2019:joint-communication-radar-sensing-5g-mobile-network-compressive-sensing}. Such assumptions can be restrictive, as \gls{ISAC} systems must accommodate heterogeneous user and target requirements, necessitating beams with adaptive power and beamwidth. While some works explored power control, e.g., \cite{wang2024:resource-allocation-isac-networks-application-target-tracking, hersyandika2024:guard-beams-coverage-enhancement-ue-centered-isac-analog-multi-beamforming, ding2018:beam-index-modulation-wireless-communication-analog-beamforming}, they assumed infinitesimal power resolution, which is incompatible with the discrete power levels supported by practical hardware.
Beamwidth adaptation has received even less attention within \gls{ISAC} \cite{du2023:integrated-sensing-communications-v2i-networks-dynamic-predictive-beamforming-extended-vehicle-targets, zhang2023:robust-beamforming-design-uav-communications-integrated-sensing-communication}, despite its established role in communication systems, e.g., \cite{karacora2023:event-based-beam-tracking-dynamic-beamwidth-adaptation-terahertz-communications, feng2021:beamwidth-optimization-5g-nr-millimeter-wave-cellular-networks-multi-armed-bandit-approach, chung2021:adaptive-beamwidth-control-mmwave-beam-tracking}. Moreover, existing \gls{ISAC} beamforming research has predominantly focused on the transmit side, often overlooking the receive side, which is critical for effective self-interference management \cite{barneto2022:beamformer-design-optimization-joint-communication-full-duplex-sensing-mmwaves, liu2023:joint-transmit-receive-beamforming-design-full-duplex-integrated-sensing-communications}. \emph{This highlights the need for a comprehensive \gls{ISAC} \gls{RRM} design that jointly adapts beam direction, power, and beamwidth for both transmit and receive sides, while maintaining strong practical relevance.}

From a system-level perspective, another fundamental aspect of analog beamforming is time allocation. Owing to the shared \gls{RF} chain inherent to analog architectures, the system cannot support the simultaneous handling of multiple data streams, which makes time-sharing an essential design consideration. Although time allocation has been studied extensively in standalone sensing e.g., \cite{liu2020:cognitive-dwell-time-allocation-distributed-radar-sensor-networks-tracking-cone-programming , zhang2023:fast-solver-dwell-time-allocation-phased-array-radar }, and communication systems, e.g., \cite{abanto2020:hydrawave-multi-group-multicast-hybrid-precoding-low-latency-scheduling-ubiquitous-industry-4-0-mmwave-communications , nguyen2017:joint-fractional-time-allocation-beamforming-downlink-multiuser-miso-systems }, its integration into \gls{ISAC} remains limited, e.g., \cite{xu2023:sensing-enhanced-secure-communication-joint-time-allocation-beamforming-design, wang2024:resource-allocation-isac-networks-application-target-tracking}. Notably, \cite{xu2023:sensing-enhanced-secure-communication-joint-time-allocation-beamforming-design} considered continuous-time allocation, whereas \cite{wang2024:resource-allocation-isac-networks-application-target-tracking} examined discrete-time allocation. The latter is particularly relevant for practical systems, as it aligns with the discretized frame structures of modern wireless standards and facilitates flexible timeslot sharing among users and targets. \emph{Thus, timeslot allocation becomes essential for meeting heterogeneous sensing and communication demands in analog beamforming-based \gls{ISAC} systems, highlighting the need for \gls{RRM} designs that explicitly account for this key resource.}

While the concurrent operation of sensing and communication is commonly assumed in the literature, enforcing this requirement can reduce flexibility and degrade performance, as the two functionalities may not always be jointly feasible or efficient, given system requirements and resource constraints. To alleviate this, a few studies explored adaptive functionality selection strategies. For instance, \cite{li2022:multi-point-integrated-sensing-communication-fusion-model-functionality-selection} considered mutually exclusive sensing and communication operation, while \cite{abanto2024:hierarchical-functionality-prioritization-multicast-isac-optimal-admission-control-discrete-phase-beamforming} introduced a priority-based approach in which one functionality is opportunistically enabled according to predefined hierarchies. However, these approaches did not account for the temporal domain of resources and lack support for flexible functionality selection, where each timeslot can be dynamically assigned to sensing, communication, or their simultaneous operation. \emph{Hence, incorporating flexible, timeslot-level functionality selection into the \gls{RRM} design is essential for improving performance, as it enables the system to dynamically determine the active functionalities in each timeslot based on instantaneous requirements and resource availability.}

Most existing studies assumed that user-target pairs are predetermined and provided as prior information. However, recent research, e.g., \cite{dou2024:channel-sharing-aided-integrated-sensing-communication-energy-efficient-sensing-scheduling-approach, cazzella2024:deep-learning-based-target-to-user-association-integrated-sensing-communication-systems, abanto2024:optimal-user-target-scheduling-user-target-pairing-low-resolution-phase-only-beamforming-isac-systems}, has demonstrated that optimizing user-target pairing can significantly enhance system efficiency. Despite this progress, most of these studies, e.g., \cite{dou2024:channel-sharing-aided-integrated-sensing-communication-energy-efficient-sensing-scheduling-approach, cazzella2024:deep-learning-based-target-to-user-association-integrated-sensing-communication-systems}, assumed independent waveforms for communication and sensing, failing to fully exploit the benefits of single-waveform \gls{ISAC}. An exception is \cite{abanto2024:optimal-user-target-scheduling-user-target-pairing-low-resolution-phase-only-beamforming-isac-systems}, which addressed user-target pairing under a single-waveform assumption, albeit restricted to a single channel use. \emph{Incorporating dynamic user-target pairing is essential, particularly under resource-constrained conditions, as optimally pairing users and targets to share a common waveform can significantly improve system performance and reduce overall resource consumption.}

Information available for \gls{RRM} design is often compromised by errors caused by factors such as motion, quantization, feedback delay, and hardware deficiencies. Despite their relevance, these errors are frequently overlooked in \gls{ISAC} literature, with only a few exceptions accounting for them, e.g., \cite{xu2023:sensing-enhanced-secure-communication-joint-time-allocation-beamforming-design, jia2024:physical-layer-security-optimization-cramer-rao-bound-metric-isac-systems-sensing-specific-imperfect-csi-model, lyu2024:dual-robust-integrated-sensing-communication-beamforming-csi-imperfection-location-uncertainty, khalili2024:advanced-isac-design-movable-antennas-accounting-dynamic-rcs, tang2021:self-interference-resistant-ieee-802-11-ad-based-joint-communication-automotive-radar-design}. Such errors can result in imperfect estimations of \gls{CSI} \cite{jia2024:physical-layer-security-optimization-cramer-rao-bound-metric-isac-systems-sensing-specific-imperfect-csi-model, lyu2024:dual-robust-integrated-sensing-communication-beamforming-csi-imperfection-location-uncertainty}, \gls{AOD} \cite{xu2023:sensing-enhanced-secure-communication-joint-time-allocation-beamforming-design, lyu2024:dual-robust-integrated-sensing-communication-beamforming-csi-imperfection-location-uncertainty}, \gls{RC} \cite{xu2023:sensing-enhanced-secure-communication-joint-time-allocation-beamforming-design, khalili2024:advanced-isac-design-movable-antennas-accounting-dynamic-rcs}, and \gls{RSI} \cite{tang2021:self-interference-resistant-ieee-802-11-ad-based-joint-communication-automotive-radar-design}. Ignoring potential imperfections in the \gls{RRM} design can lead to infeasible solutions, thereby resulting in degraded system performance, unmet quality-of-service requirements, and frequent, inefficient resource re-allocations. \emph{Therefore, robust \gls{RRM} formulations that explicitly account for information uncertainty are essential to ensure stable and reliable \gls{ISAC} operation.}

Adopting an economy-driven perspective in \gls{RRM} design is essential for achieving efficient radio resource utilization. While a substantial body of \gls{ISAC} literature has focused on minimizing energy consumption, owing to its direct impact on interference mitigation and carbon footprint reduction, e.g., \cite{he2023:full-duplex-communication-isac-joint-beamforming-power-optimization, huang2022:coordinated-power-control-network-integrated-sensing-communication, khalili2024:advanced-isac-design-movable-antennas-accounting-dynamic-rcs}, other critical resources, particularly time, remain understudied. Minimizing time resource consumption is essential not only for meeting latency requirements in time-sensitive applications, but also because analog beamforming supports only a single signal stream at a time. \emph{Hence, the joint minimization of time and energy consumption is paramount for achieving resource efficiency and for provisioning \gls{ISAC} systems to meet future demands.}

Building on the above motivation, this paper advances the ISAC literature by investigating a novel and comprehensive \gls{RRM} problem with unique characteristics. Particularly, \textbf{\cref{tab:related-literature}} in \textbf{Appendix~\ref{app:related-work}} presents a detailed comparison of our work against existing literature\footnote{This paper includes an appendix that provides additional material.}. The key contributions of this work are as follows.

\begin{itemize}

	\item To address the inherent limitation of analog beamforming in supporting only a single data stream at a time, we introduce a timeslot allocation mechanism that efficiently assigns timeslots to users and targets according to their specific service requirements. In parallel, we optimize beam adaptation on a per-timeslot basis, including direction, power, and beamwidth, at both the transmit and receive sides, by employing a column-generation approach and formulating beam selection as a multiple-choice constraint.
	
	\item To enable flexible control over sensing and communication operations, we incorporate  timeslot-level functionality selection that allows each timeslot to be allocated exclusively to sensing, communication, or both. Furthermore, motivated by recent evidence on the benefits of optimized user-target pairing, we extend the pairing process to a dynamic, per-timeslot basis, thereby further enhancing overall system efficiency.
	
	\item To ensure reliability under practical imperfections, we develop a robust design accounting for uncertainties in \gls{CSI}, \gls{AOD}, \gls{RC}, and \gls{RSI}. We also introduce a multi-objective function that jointly minimizes energy and time consumption. Energy consumption is modeled as the sum of transmit and receive energies, while time consumption is modeled as a weighted sum of timeslots, with weights designed to promote contiguous allocation and improve temporal cohesion. The multi-objective function is then transformed into a weighted trade-off function, allowing the concurrent optimization of energy and time consumption.  
	
	\item Integrating the above elements, we formulate a comprehensive \gls{RRM} problem as a semi-infinite, nonconvex \gls{MINLP}, posing significant challenges for its solution. To address this, we propose a tailored solution approach that exploits the problem structure to uncover hidden convexities, enabling an exact reformulation as a \gls{MISDP} solvable to global optimality using general-purpose solvers.
	
	\item We evaluate the proposed \gls{RRM} framework under a wide range of parameter settings and varying levels of imperfections to demonstrate its robustness and adaptability. We also benchmark its performance against baselines, demonstrating gains of up to $ 88\% $.
	
\end{itemize}

\emph{Notation}: Boldface capital letters $ \mathbf{A} $ and boldface lowercase letters $ \mathbf{a} $ denote matrices and vectors, respectively. The transpose, Hermitian transpose, and trace of $ \mathbf{A} $ are denoted by $ \mathbf{A}^\mathrm{T} $, $ \mathbf{A}^\mathrm{H} $, and $ \mathrm{Tr} \left(  \mathbf{A} \right) $, respectively. Also, $ \mathrm{j} \triangleq \sqrt{-1} $ is the imaginary unit, $ \mathbb{E} \left\lbrace \cdot \right\rbrace  $ denotes statistical expectation, and $ \mathcal{CN} \left( \upsilon, \xi^2 \right) $ represents the complex Gaussian distribution with mean $ \upsilon $ and variance $ \xi^2 $. Symbols $ \left| \cdot \right| $, $ \left\| \cdot \right\|_2 $, and  $ \left\| \cdot \right\|_\mathrm{F} $ denote the absolute value, $ \ell_2 $-norm, and Frobenius norm, respectively. In addition, $ \wedge $ and $ \lor $ represent logical `AND' and logical `OR' operators, respectively. Vectorization, Kronecker product, and Hadamard product are represented by $ \mathrm{vec} \left( \cdot \right) $, $ \otimes $, and $ \circ $, respectively. Finally, a generic set is defined as $ \mathcal{X} = \left\lbrace \mathbf{X} \mid \mathrm{Condition}  (\mathbf{X}) \right\rbrace $, where $ \mathrm{Condition}  (\mathbf{X}) $ specifies the defining property of the set.


\section{System Model and Problem Formulation} \label{sec:system-model-problem-formulation}

\subsection{Preliminaries}

Consider an \gls{ISAC} system where a \gls{BS}, equipped with $ N_\mathrm{tx} $ transmit antennas and $ N_\mathrm{rx} $ receive antennas, serves $ U $ single-antenna users and senses $ T $ targets. Users and targets are assumed to lie in the far field of the \gls{BS}, i.e., at distances exceeding the Rayleigh distance. The \gls{BS} operates in a full-duplex downlink mode, utilizing analog beamforming for both transmission and reception with perfectly calibrated arrays. The transmit antenna array emits signals that serve a dual purpose: (i) delivering data to users and (ii) acting as radar waveforms for target sensing. Meanwhile, the receive antenna array captures reflected signals from targets.

The analog transmit beamformer can steer signals across a predefined set of angular directions, with multiple configurable options for power levels and beamwidths at each direction. Similarly, the analog receive beamformer is selected from a discrete set of angular directions, power levels, and beamwidths. Since analog beamforming is limited to processing one signal at a time, the \gls{BS} employs timeslot allocation to efficiently coordinate the communication and sensing demands of users and targets over a finite horizon of $ S $ timeslots. These timeslots may be exclusively allocated for communication, exclusively for sensing, or configured to support both functionalities simultaneously. Enabling a timeslot to support both functionalities implies that a user-target pair is jointly served within that timeslot. Additionally, measurements of \gls{CSI}, \gls{AOD}, \gls{RC}, and \gls{RSI} are available at the \gls{BS} for \gls{RRM} design, albeit impaired by unknown errors that introduce uncertainty.

The allocation of timeslots and beams, along with user-target pairing and timeslot functionality decisions, is driven by the goal of minimizing energy and time consumption while ensuring that the communication and sensing requirements are met. For simplicity, we denote the $u$-th user, $t$-th target, and $s$-th timeslot as $ \mathsf{U}_u $, $ \mathsf{T}_t $, and, $ \mathsf{S}_s $, respectively. We define the sets of users, targets, and timeslots as $ \mathcal{U} = \left\lbrace 1, \dots, U \right\rbrace $, $ \mathcal{T} = \left\lbrace 1, \dots, T \right\rbrace $, and $ \mathcal{S} = \left\lbrace 1, \dots, \widetilde{S} \right\rbrace $, respectively, where $ \widetilde{S} $ is the maximal number of timeslots required by the \gls{BS} for RRM. The system model is illustrated in \cref{fig:system-model}, and all parameters and variables used in this work are summarized in \textbf{\cref{tab:parameters-variables}} in \textbf{Appendix~\ref{app:table-parameters}}.

\begin{figure}[!t]
	\centering
	\includegraphics[width=0.9\columnwidth]{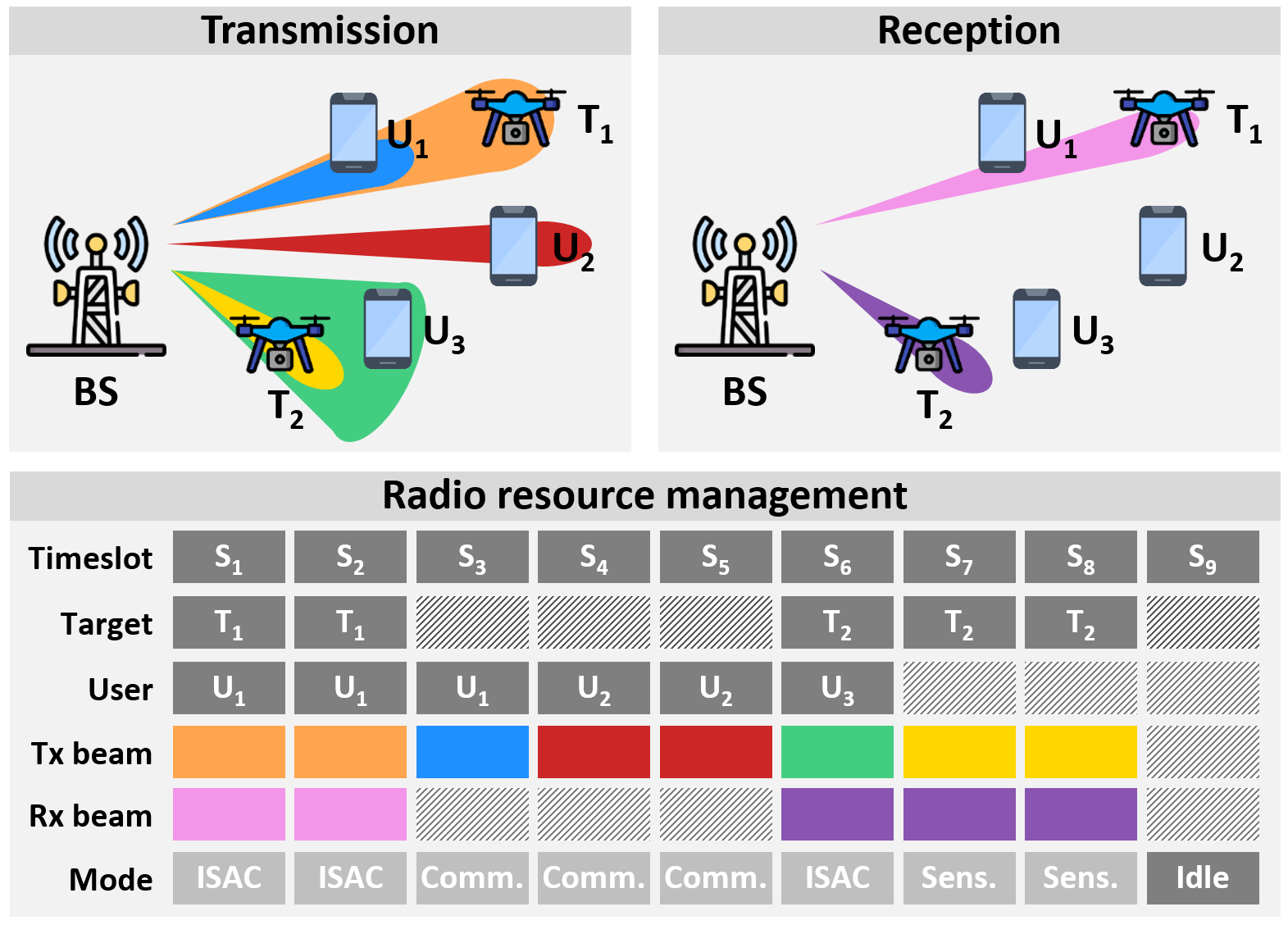}
	\caption{System model comprising a BS, users, and targets.}	
	\label{fig:system-model}
	\vspace{-3mm}
\end{figure}

\begin{example}
\cref{fig:system-model} illustrates $ 3 $ users ($ \mathsf{U}_1 $, $ \mathsf{U}_2 $, $ \mathsf{U}_3 $) and $ 2 $ targets ($ \mathsf{T}_1 $, $ \mathsf{T}_2 $), which must be serviced over $ 9 $ timeslots ($ \mathsf{S}_1 $ to $ \mathsf{S}_9 $), aiming to minimize time and energy consumption. In $ \mathsf{S}_1 - \mathsf{S}_2 $, $ \mathsf{U}_1 $ and $ \mathsf{T}_1 $ are paired and jointly serviced by the orange beam, leveraging their angular alignment. During $\mathsf{S}_1 - \mathsf{S}_2$, the pink beam enables sensing of $\mathsf{T}_1$. In $ \mathsf{S}_3 $, $ \mathsf{U}_1 $ is served alone by the blue beam, as $ \mathsf{T}_1 $'s requirements have been fully met in $ \mathsf{S}_1 - \mathsf{S}_2 $, while $ \mathsf{U}_1 $ requires an additional timeslot. Although the orange beam could have been maintained in $ \mathsf{S}_3 $, switching to the lower-power blue beam minimizes energy consumption. No receive beam is used in $ \mathsf{S}_3 $ since no target is sensed. In $ \mathsf{S}_4 - \mathsf{S}_5 $, $ \mathsf{U}_2 $ is served alone by the red beam. In $ \mathsf{S}_6 $, $ \mathsf{T}_2 $ and $ \mathsf{U}_3 $ are simultaneously served by the broad green beam, reducing time usage. In $ \mathsf{S}_7 - \mathsf{S}_8 $, $ \mathsf{T}_2 $ is served alone by the narrower yellow beam. This shift occurs because $ \mathsf{U}_3 $'s requirements are met in $ \mathsf{S}_6 $, while $ \mathsf{T}_2 $ still requires additional sensing time. Continuing to serve $ \mathsf{T}_2 $ with the green beam beyond $ \mathsf{S}_6 $ would be energy-inefficient. Thus, the narrower yellow beam is selected as a more efficient alternative. Throughout $ \mathsf{S}_6 - \mathsf{S}_8 $, the purple beam enables sensing of $ \mathsf{T}_2 $. Finally, $ \mathsf{S}_9 $ remains idle, ensuring no unnecessary resource expenditure.
\end{example}

\subsection{Functionality selection} 
To determine which functionalities take place in each timeslot, we introduce constraints
\begin{align*}  
	& \mathrm{C}_{1}: \kappa_s = \left\lbrace 0, 1 \right\rbrace, \forall s \in \mathcal{S},   
	\\
	& \mathrm{C}_{2}: \zeta_s = \left\lbrace 0, 1 \right\rbrace, \forall s \in \mathcal{S},   
\end{align*}
where $ \kappa_s = 1 $ indicates that $ \mathsf{S}_s $ is used for communication, and $ \kappa_s = 0 $ otherwise. Similarly, $ \zeta_s = 1 $ indicates that $ \mathsf{S}_s $ is used for sensing, and $ \zeta_s = 0 $ otherwise. In particular, any timeslot can support both functionalities simultaneously when feasible.

To determine whether a given timeslot is active, meaning it is in use, we introduce constraint
\begin{align*}  
	\mathrm{C}_{3}: \gamma_s = \kappa_s \lor \zeta_s, \forall s \in \mathcal{S}. 
\end{align*}

Here, $ \gamma_s = 1 $ indicates that $ \mathsf{S}_s $ is active, while $ \gamma_s = 0 $ indicates that $ \mathsf{S}_s $ is idle. The former case occurs when sensing, communication, or both take place in $ \mathsf{S}_s $, and the latter case occurs when neither functionality is implemented in $ \mathsf{S}_s $. Additionally, we include constraint
\begin{align*}  
	\textstyle
	\mathrm{C}_{4}: \sum_{ s \in \mathcal{S} } \gamma_s \leq S, 
\end{align*}
to ensure that the \gls{RRM} design is confined within a finite horizon of $ S $ timeslots. In practice, we choose $ S \ll \widetilde{S} $ because a smaller value of $ S $ promotes the pairing of users and targets so that they can share timeslots, which is key to realizing the benefits of ISAC. Additional details on parameters $ S $ and $ \widetilde{S} $ are provided in \textbf{Appendix~\ref{app:finite-horizon}}.

\subsection{Timeslot allocation and user-target pairing} 
To allocate timeslots to users, we introduce constraint
\begin{align*}  
	\mathrm{C}_{5}: \mu_{u,s} \in \left\lbrace 0, 1 \right\rbrace, \forall u \in \mathcal{U}, s \in \mathcal{S},   
\end{align*}
where $ \mu_{u,s} = 1 $ indicates that $ \mathsf{U}_u $ is served in $ \mathsf{S}_s $, and $ \mu_{u,s} = 0 $ otherwise. We also include constraint
\begin{align*}  
	\textstyle
	\mathrm{C}_{6}: \kappa_s = \sum_{u \in \mathcal{U}} \mu_{u,s}, \forall s \in \mathcal{S},  
\end{align*}
to ensure that each timeslot serves at most one user. This is guaranteed because $ \kappa_s $ is binary, meaning that at most one of the variables $ \mu_{u,s} $ can equal one. This restriction arises from the use of analog beamforming at the \gls{BS}, which confines signal processing to a single stream within any given timeslot.

To allocate timeslots to targets, we introduce constraint 
\begin{align*}  
	& \mathrm{C}_{7}: \tau_{t,s} \in \left\lbrace 0, 1 \right\rbrace, \forall t \in \mathcal{T}, s \in \mathcal{S},   
\end{align*}
where $ \tau_{t,s} = 1 $ indicates that $ \mathsf{T}_t $ is served in $ \mathsf{S}_s $, and $ \tau_{t,s} = 0 $ otherwise. Furthermore, we add constraint 
\begin{align*}  
	\textstyle
 	\mathrm{C}_{8}: \zeta_s = \sum_{t \in \mathcal{T}} \tau_{t,s}, \forall s \in \mathcal{S},   
\end{align*}
to restrict each timeslot to serve no more than one target. This is guaranteed because $ \zeta_s $ is binary, ensuring that at most one of the variables $ \tau_{t,s} $ can equal one.

User-target pairing determines which users and targets are jointly served within a given timeslot. This departs from existing approaches that rely on fixed associations or consider pairing for a single channel use, without exploiting the time dimension. In our framework, pairing is enabled only when it is feasible and leads to reduced time and energy consumption. Pairing is particularly beneficial when a user and a target are closely aligned in the angular domain, allowing them to be served simultaneously through a shared beam. However, angular similarity alone is not sufficient, as all system factors also influence the pairing decision. Therefore, treating user-target pairing as an optimizable variable is essential for achieving adaptive and resource-efficient \gls{RRM}. In our framework, user-target pairing is enforced through constraints $ \mathrm{C}_{1} $, $ \mathrm{C}_{2} $, $ \mathrm{C}_{5} $, $ \mathrm{C}_{6} $, $ \mathrm{C}_{7} $, and $ \mathrm{C}_{8} $, which collectively ensure a one-to-one pairing between users and targets whenever feasible. Complementary details on this aspect are provided in \textbf{Appendix~\ref{app:pairing-details}}.

\subsection{Beam adaptation} 

To achieve highly flexible beamforming and accommodate diverse sensing and communication requirements, the \gls{BS} can adjust the direction, power, and beamwidth of its transmit and receive beampatterns. For practical implementation, these adaptations are restricted to a finite set of feasible configurations. To achieve this effectively, we adopt a column generation approach \cite{desaulniers2005:column-generation} and construct a predefined codebook, where each codeword corresponds to a unique combination of direction, power, and beamwidth. In each active timeslot, the \gls{BS} selects one codeword for transmission, and another for reception when sensing is enabled.

Consider $ {D}_\mathrm{tx} $ different angular directions $ \widetilde{\theta}_i $, $ i \in \left\lbrace 1, \dots, {D}_\mathrm{tx} \right\rbrace $, in which the transmit signal can be steered. For each $ \widetilde{\theta}_i $, assume $ {B}_\mathrm{tx} $ different beamwidths, leading to normalized codewords $ \widetilde{\mathbf{b}}_{i,j} $, such that $ \big\| \widetilde{\mathbf{b}}_{i,j} \big\|_2 = 1 $, where $ j \in \left\lbrace 1, \dots, B_\mathrm{tx} \right\rbrace $. Additionally, for each codeword $ \widetilde{\mathbf{b}}_{i,j} $, up to $ {P}_\mathrm{tx} $ different powers can be applied, resulting in codewords $ \widehat{\mathbf{b}}_{i,j,k} $, where $ k \in \left\lbrace 1, \dots, P_\mathrm{tx} \right\rbrace $. The complete set of codewords $ \widehat{\mathbf{b}}_{i,j,k} $ results in $ L_\mathrm{tx} = D_\mathrm{tx} B_\mathrm{tx} P_\mathrm{tx} $ distinct transmit beampatterns. To facilitate a column-generation formulation, we replace the nested indices $ i $, $ j $, and $ k $ with a single subindex $ b \in \mathcal{L}_\mathrm{tx} $, where $ \mathcal{L}_\mathrm{tx} = \left\lbrace 1, \dots, L_\mathrm{tx} \right\rbrace $. The effective transmit codewords are thus denoted as $ \mathbf{b}_b \in \mathbb{C}^{N_\mathrm{tx} \times 1} $, with $ \mathbf{b}_b \in \left\lbrace \widehat{\mathbf{b}}_{1,1,1}, \widehat{\mathbf{b}}_{1,1,2}, \dots, \widehat{\mathbf{b}}_{D_\mathrm{tx},B_\mathrm{tx},P_\mathrm{tx}} \right\rbrace $. Similarly, on the receive side, we consider $ D_\mathrm{rx} $ directions, $ B_\mathrm{rx} $ beamwidths, and $ P_\mathrm{rx} $ power levels, resulting in $ L_\mathrm{rx} = D_\mathrm{rx} B_\mathrm{rx} P_\mathrm{rx} $ distinct receive beampatterns. These are indexed by $ c \in \mathcal{L}_\mathrm{rx} $, with $ \mathcal{L}_\mathrm{rx} = \left\lbrace 1, \dots, L_\mathrm{rx} \right\rbrace $, and the corresponding effective receive codewords are denoted as $ \mathbf{c}_c \in \mathbb{C}^{N_\mathrm{rx} \times 1} $, where $ \mathbf{c}_c \in \left\lbrace \widehat{\mathbf{c}}_{1,1,1}, \widehat{\mathbf{c}}_{1,1,2}, \dots, \widehat{\mathbf{c}}_{D_\mathrm{rx},B_\mathrm{rx},P_\mathrm{rx}} \right\rbrace $

%


To enable transmit beamforming, we introduce 
\begin{align*}  
	\mathrm{C}_{9}: \chi_{b,s} \in \left\lbrace 0, 1 \right\rbrace, \forall b \in \mathcal{L}_\mathrm{tx}, s \in \mathcal{S}, 
\end{align*}
where $ \chi_{b,s} = 1 $ indicates that codeword $ \mathbf{b}_b $ is used in $ \mathsf{S}_s $, and $ \chi_{b,s} = 0 $ otherwise. Furthermore, we introduce constraint
\begin{align*}  
	\textstyle
	\mathrm{C}_{10}: \sum_{b \in \mathcal{L}_\mathrm{tx}} \chi_{b,s} = \gamma_s, \forall s \in \mathcal{S}, 
\end{align*}
to ensure that transmit codewords are selected solely for timeslots that are active, as indicated by $ \gamma_s $. Specifically, no codeword is selected for idle timeslots. Thus, the transmit beamformer employed in $ \mathsf{S}_s $ is defined by constraint
\begin{align*}  
	\textstyle
	\mathrm{C}_{11}: \mathbf{w}_s = \sum_{b \in \mathcal{L}_\mathrm{tx}} \mathbf{b}_b \cdot \chi_{b,s}, \forall s \in \mathcal{S}.
\end{align*}

To enable receive beamforming, we introduce 
\begin{align*}  
	\mathrm{C}_{12}: \rho_{c,s} \in \left\lbrace 0, 1 \right\rbrace, \forall c \in \mathcal{L}_\mathrm{rx}, s \in \mathcal{S}, 
\end{align*}
where $ \rho_{c,s} = 1 $ indicates that codeword $ \mathbf{c}_c $ is used in $ \mathsf{S}_s $, and  $ \rho_{c,s} = 0 $ otherwise. Additionally, we include constraint
\begin{align*}  
	\textstyle
	\mathrm{C}_{13}: \sum_{c \in \mathcal{L}_\mathrm{rx}} \rho_{c,s} = \zeta_s, \forall s \in \mathcal{S}, 
\end{align*}
to ensure that receive codewords are selected only for timeslots performing sensing, as indicated by $ \zeta_s $. The receive beamformer used in $ \mathsf{S}_s $ is given by constraint
\begin{align*} 
	\textstyle 
	\mathrm{C}_{14}: \mathbf{v}_s = \sum_{c \in \mathcal{L}_\mathrm{rx}} \mathbf{c}_c \cdot \rho_{c,s}, \forall s \in \mathcal{S}. 
\end{align*}


\subsection{Communication metric}

Let $ d_s \in \mathbb{C} $ denote the symbol transmitted by the \gls{BS} in timeslot $ \mathsf{S}_s $.  This symbol may be used for communication, sensing, or both functionalities simultaneously. The transmitted symbol follows a complex Gaussian distribution with zero mean and unit variance, i.e., $ \mathbb{E} \left\lbrace d_s d_s^{*} \right\rbrace = 1 $.

Although calibrated antenna arrays are assumed, impairments such as mutual coupling among neighboring antennas may still be present. Mutual coupling alters the beampattern shape by modifying the effective beamforming vectors applied at the antenna ports. Accordingly, when a nominal transmit codeword $ \mathbf{w}_{s} $ is selected in timeslot $ \mathsf{S}_s $, the effective beamformed transmit signal is given by
\begin{align}
	\mathbf{x}_s = \mathbf{Z}_\mathrm{tx} \mathbf{w}_s d_s,
\end{align}
where $ \mathbf{Z}_\mathrm{tx} \in \mathbb{C}^{N_\mathrm{tx} \times N_\mathrm{tx}} $ denotes the transmit-side mutual coupling matrix, defined with banded Toeplitz structure,
\begin{align} \label{eqn:coupling-matrix}
	\mathbf{Z}_\mathrm{tx} = 
	\begin{bmatrix}
		1  & \delta_\mathrm{tx} & 0 & 0 & \cdots & 0 & 0
		\\
		\delta_\mathrm{tx}  & 1 & \delta_\mathrm{tx} & 0 & \cdots & 0 & 0		  
		\\
		0 & \delta_\mathrm{tx}  & 1 & \delta_\mathrm{tx} &  \cdots & 0 & 0	
		\\
		\vdots & \vdots & \vdots & \vdots & \ddots & \vdots & \vdots
		\\
		0 & 0 & 0 & 0 & \cdots & 1 & \delta_\mathrm{tx}
		\\
		0 & 0 & 0 & 0 & \cdots & \delta_\mathrm{tx} & 1
	\end{bmatrix},
\end{align}
where $ \left| \delta_\mathrm{tx} \right| \ll 1 $ represents the coupling strength between adjacent antenna elements \cite{wolosinski2020:closed-form-characterization-mutual-coupling-uniform-linear-arrays, pan2022:joint-estimation-doa-mutual-coupling-imposing-annihilation-constraint}.

The signal received by $ \mathsf{U}_u $ in $ \mathsf{S}_s $ is therefore expressed as
\begin{align}
	\begin{split}
	y_\mathrm{com}^{u,s} & = \textstyle \mathbf{h}_u^\mathrm{H} \mathbf{x}_s \cdot \mu_{u,s} + n_{\mathrm{com},u}, 
	\\
	& = \textstyle \mathbf{h}_u^\mathrm{H} \mathbf{Z}_\mathrm{tx} \mathbf{w}_s d_s \cdot \mu_{u,s} + n_{\mathrm{com},u}, 
	\end{split}
\end{align}
where $ \mathbf{h}_u \in \mathbb{C}^{N_\mathrm{tx} \times 1} $ is the communication channel between the \gls{BS} and $ \mathsf{U}_u $, while $ n_{\mathrm{com},u} \sim \mathcal{CN} \left( 0,\sigma_\mathrm{com}^2 \right) $ represents \gls{AWGN}. The communication \gls{SNR} for $ \mathsf{U}_u $ in $ \mathsf{S}_s $ is defined as
\begin{align} \label{eqn:communication-snr}
	\mathsf{SNR}_\mathrm{com}^{u,s} = \frac{\left| \mathbf{h}_u^\mathrm{H} \mathbf{Z}_\mathrm{tx} \mathbf{w}_s \cdot \mu_{u,s} \right|^2}{\sigma_\mathrm{com}^2}.
\end{align}

In practice, \glspl{BS} operate with imperfect \gls{CSI} due to factors such as quantization and feedback delay, which introduce errors. These errors can significantly affect communication performance, making it essential to incorporate them into the \gls{RRM} design. Imperfect \gls{CSI} is modeled via constraint 
\begin{align*}
	\resizebox{1\columnwidth}{!}
	{$
	\mathrm{C}_{15}: \mathcal{H}_u \triangleq \left\lbrace \mathbf{h}_u \mid \mathbf{h}_u = \widebar{\mathbf{h}}_u + \boldsymbol{\Delta} \mathbf{h}_u, \left\| \boldsymbol{\Delta} \mathbf{h}_u \right\|_2^2 \leq \epsilon_\mathrm{CSI}^2 \right\rbrace, \forall u \in \mathcal{U},
	$}
\end{align*}
using the model from \cite{zhao2020:non-orthogonal-unicast-broadcast-transmission-joint-beamforming-ldm-cellular-networks, ren2023:robust-transmit-beamforming-secure-integrated-sensing-communication}, where $ \mathbf{h}_u $ is the actual but unknown channel, $ \widebar{\mathbf{h}}_u $ is the estimated channel, $ \boldsymbol{\Delta} \mathbf{h}_u $ is the random channel error whose power is bounded by $ \epsilon_\mathrm{CSI}^2 $, and $ \mathcal{H}_u $ represents the uncertainty set of all possible channel vectors for $ \mathsf{U}_u $. 
 
To ensure reliable communication performance, we introduce constraint 
\begin{align*}
	\displaystyle \mathrm{C}_{16}: \min_{ \mathbf{h}_u \in \mathcal{H}_u }  \mathsf{SNR}_\mathrm{com}^{u,s} 
	\geq \Upsilon_{\mathrm{snr},u} \cdot \mu_{u,s}, \forall u \in \mathcal{U}, s \in \mathcal{S}, 
\end{align*}
which guarantees a predefined \gls{SNR} $ \Upsilon_{\mathrm{snr},u} $ for each user $ \mathsf{U}_u $ across its allocated timeslots, while accounting for imperfect \gls{CSI}. Additionally, we include constraint
\begin{align*}  
	\textstyle
	\mathrm{C}_{17}: \sum_{s \in \mathcal{S}} \mu_{u,s} = S_{\mathrm{com},u}, \forall u \in \mathcal{U},  
\end{align*}
to ensure that each $ \mathsf{U}_u $ is allocated exactly $ S_{\mathrm{com},u} $ timeslots. 

\subsection{Sensing metric}
Targets are assumed to lie in the far field of the \gls{BS} and are modeled as single point scatters. The \gls{BS} operates in a monostatic configuration, implying identical \gls{AOD} and \gls{AOA}. Although the far-field assumption removes any explicit dependence of the \gls{AOD} on the physical extent of the target, practical \gls{AOD} estimates remain affected by estimation errors, discrete-time tracking effects, and target micro-motion. To capture these imperfections, we adopt the modeling approach in \cite{xu2020:multiuser-miso-uav-communications-uncertain-environments-no-fly-zones-robust-trajectory-resource-allocation-design, xu2018:robust-resource-allocation-uav-systems-uav-jittering-user-location-uncertainty} and introduce constraint
\begin{align*} 
	\textstyle
	\mathrm{C}_{18}: \Theta_t \triangleq \left\lbrace \theta_t \mid \theta_t = \widebar{\theta}_t + \Delta \theta_t, \left| \Delta \theta_t \right|^2 \leq \epsilon_\mathrm{AOD}^2 \right\rbrace, \forall t \in \mathcal{T},
\end{align*}
where $ \theta_t $ is the actual but unknown \gls{AOD}, $ \widebar{\theta}_t $ is the estimated \gls{AOD}, $ \Delta \theta_t $ is the random \gls{AOD} error whose power is bounded by $ \epsilon_\mathrm{AOD}^2 $, and $ \Theta_t $ represents the uncertainty set of all possible \glspl{AOD} for $ \mathsf{T}_t $.

Another important factor is the \gls{RC}, which captures both the target's radar cross-section and the round-trip path-loss relative to the \gls{BS} \cite{liu2022:cramer-rao-bound-optimization-joint-radar-communication-beamforming}. However, due to quantization errors and interference from clutter, the \gls{RC} is typically not perfectly known. To account for potential errors in this parameter, we adopt the model from \cite{xu2023:sensing-enhanced-secure-communication-joint-time-allocation-beamforming-design} and introduce
\begin{align*} 
	\textstyle
	\mathrm{C}_{19}: \Psi_t \triangleq \left\lbrace \psi_t \mid \psi_t = \widebar{\psi}_t + \Delta \psi_t, \left| \Delta \psi_t \right|^2 \leq \epsilon_\mathrm{RC}^2 \right\rbrace, \forall t \in \mathcal{T},
\end{align*}
where $ \psi_t $ is the actual but unknown \gls{RC}, $ \widebar{\psi}_t $ is the estimated \gls{RC}, $ \Delta \psi_t $ is the random \gls{RC} error whose power is bounded by $ \epsilon_\mathrm{RC}^2 $, and $ \Psi_t $ represents the uncertainty set of all possible \glspl{RC} for $ \mathsf{T}_t $.

\begin{remark} \label{rmk:clarification-aspect-angle}
	The nominal RC $ \widebar{\psi}_t $ corresponds to the estimated AOD $ \widebar{\theta}_t $. In practice, the actual RC varies with the aspect angle. This effect is captured through the bounded uncertainty term $ \Delta \psi_t $ in $ \mathrm{C}_{19} $, where parameter $ \epsilon_\mathrm{RC} $ encompasses both measurement errors and aspect-angle RC fluctuations, thereby including even the worst-case deviation within the angular uncertainty set $ \Theta_t $.
\end{remark}

Considering the \gls{AOD} and \gls{RC}, the sensing channel between the \gls{BS} and $ \mathsf{T}_t $ is defined as
\begin{align} \label{eqn:response-matrix}
	\begin{aligned}
	\mathbf{G}_t & = \psi_t \cdot \mathbf{a}_\mathrm{rx} \left( \theta_t \right) \mathbf{a}^\mathrm{H}_\mathrm{tx} \left( \theta_t \right) 
	 = \psi_t \cdot \mathbf{A} \left( \theta_t \right),
	\end{aligned}
\end{align}
where the transmit and receive steering vectors are given by
\begin{align} \label{eqn:steering-vectors}
	\begin{aligned}
		\mathbf{a}_{l} \left( \theta_t \right) = \tfrac{1}{\sqrt{N_l}} \mathrm{e}^{\mathrm{j} \boldsymbol{\phi}_l \cos \left( \theta_t \right)},
	\end{aligned}
\end{align}
with $ l  = \left\lbrace \mathrm{tx}, \mathrm{rx} \right\rbrace $. Besides, vector $ \boldsymbol{\phi}_l $ is defined as
\begin{align} \label{eqn:steering-angles}
	\begin{aligned}
		\boldsymbol{\phi}_l = \tfrac{2 \pi d_l}{\lambda} \left[  \tfrac{-N_l+1}{2}, \dots,  \tfrac{N_l-1}{2} \right]^\mathrm{T},
	\end{aligned}
\end{align}
where $ \lambda $ represents the wavelength and $ d_l $ denotes the antenna spacings of the transmit/receive array.

\begin{remark} \label{rmk:clarification-point-scatter}
	An extended target can be modeled as a set of $ M $ scattering centers, with the sensing channel expressed as $ \mathbf{G}_t = \sum_{m \in \mathcal{M}} \psi_{t,m} \cdot \mathbf{A} \left( \theta_{t,m} \right) $, where $ \theta_{t,m} $ and $ \psi_{t,m} $ denote the AOD and RC of the $m$-th scattering center, respectively, for $ m \in \mathcal{M} = \left\lbrace 1, \dots, M \right\rbrace $. In this work, we focus on the special case of a single effective scattering center, i.e., $ M = 1 $. Notably, the proposed framework and solution can be directly extended to an arbitrary number of resolved scattering centers ($ M > 1 $), where the sensing SINR follows from incoherent power summation across all centers.
\end{remark}



Note that matrix $ \mathbf{G}_t $ is linear with respect to $ \psi_t $ but nonlinear with respect to $ \theta_t $. The nonlinearity in $ \theta_t $ complicates the model's tractability for further optimization. To address it, we adopt the linear model employed in \cite{xu2020:multiuser-miso-uav-communications-uncertain-environments-no-fly-zones-robust-trajectory-resource-allocation-design, xu2018:robust-resource-allocation-uav-systems-uav-jittering-user-location-uncertainty}, originally designed for transmit and receive steering vectors of equal length. We extend this model to accommodate transmit and receive arrays of different lengths, resulting in a more general construct. Since $ \mathbf{A} (\theta_t) $ is the component of $ \mathbf{G}_t $ that encapsulates the nonlinear dependence on $ \theta_t $, we adopt a linearized approximation of $ \mathbf{A} (\theta_t) $, given by
\begin{align} \label{eqn:linearized-response-matrix}
	\begin{aligned}
	\mathbf{A} \left( \theta_t \right) & \approx  \mathbf{A} \left( \widebar{\theta}_t \right) + \widetilde{\mathbf{A}} \left( \widebar{\theta}_t \right) \Delta \theta_t,
	\end{aligned}
\end{align}
where
$\mathbf{A} \left( \widebar{\theta}_t \right) = \tfrac{1}{\sqrt{N_\mathrm{rx} N_\mathrm{tx}}} \mathrm{e}^{\mathrm{j} \boldsymbol{\Phi}  \cos \left( \widebar{\theta}_t \right)}, \boldsymbol{\Phi} = \mathbf{1}^\mathrm{T} \otimes \boldsymbol{\phi}_\mathrm{rx} - \mathbf{1} \otimes \boldsymbol{\phi}_\mathrm{tx}^\mathrm{T}, \widetilde{\mathbf{A}} \left( \widebar{\theta}_t \right) = \mathbf{A} \left( \widebar{\theta}_t \right) \circ \mathbf{E} \left( \widebar{\theta}_t \right), \mathbf{E} \left( \widebar{\theta}_t \right) = - \mathrm{j} \boldsymbol{\Phi} \sin \left( \widebar{\theta}_t \right) $. The derivation of this model is detailed in \textbf{Appendix~\ref{app:linearized-model}}.

In full-duplex systems, the transmit and receive arrays at the \gls{BS} are typically placed in close proximity, which causes the receive array to capture strong leakage from the transmit array. As a result, in addition to the echoes reflected from the target of interest and environmental clutter, the received signal contains a self-interference component arising from direct coupling between the transmit and receive arrays.

Before processing the target reflections, the BS applies standard monostatic radar clutter-filtering techniques \cite{skolnik2001:introduction-radar-systems, richards2005:fundamentals-radar-signal-processing, cao2020:clutter-suppression-target-tracking-low-rank-representation-airborne-maritime-surveillance-radar}.
Static clutter (e.g., buildings) is typically removed using Doppler windowing or high-pass filters, while dynamic clutter (e.g., moving objects) is mitigated through subspace projection or low-rank filtering. These operations primarily suppress external environmental reflections and are assumed not to significantly distort the deterministic self-interference component. After clutter suppression, the signal received at the \gls{BS} from target $ \mathsf{T}_t $ during $ \mathsf{S}_s $ is 
\begin{align} \label{eqn:signal-received-bs}
	\begin{aligned}
	y_\mathrm{sen}^{t,s} & = \mathbf{v}_s^\mathrm{H} \mathbf{Z}_\mathrm{rx}^\mathrm{H} \big( \mathbf{G}_t + \tilde{\mathbf{I}} \big) \mathbf{x}_s \cdot \tau_{t,s} + \mathbf{v}_s^\mathrm{H} \mathbf{Z}_\mathrm{rx}^\mathrm{H} \mathbf{n}_{\mathrm{sen},t},
	\\
	& = \underbrace{\mathbf{v}_s^\mathrm{H} \mathbf{Z}_\mathrm{rx}^\mathrm{H} \mathbf{G}_t \mathbf{Z}_\mathrm{tx} \mathbf{w}_s d_s \cdot \tau_{t,s}}_\text{desired signal} 
	+ \underbrace{\mathbf{v}_s^\mathrm{H} \mathbf{Z}_\mathrm{rx}^\mathrm{H} \tilde{\mathbf{I}} \mathbf{Z}_\mathrm{tx} \mathbf{w}_s d_s \cdot \tau_{t,s}}_\text{self-interference}
	\\
	& ~~~~~~~~~~~~~~~~~~~~~~~~~~~~~~~~~~~~~~~~~~ + \underbrace{\mathbf{v}_s^\mathrm{H} \mathbf{Z}_\mathrm{rx}^\mathrm{H} \mathbf{n}_{\mathrm{sen},t}}_\text{noise},
	\end{aligned}
\end{align}
where $ \tilde{\mathbf{I}} $ denotes the direct self-interference channel between the transmit and receive arrays, and $ \mathbf{n}_{\mathrm{sen},t} \sim \mathcal{CN} \left( \mathbf{0}, \sigma_\mathrm{sen}^2 \mathbf{I} \right) $ models thermal noise and any residual clutter remaining after suppression \cite{skolnik2001:introduction-radar-systems, richards2005:fundamentals-radar-signal-processing}. Receive-side mutual coupling is modeled via the matrix $ \mathbf{Z}_\mathrm{rx} \in \mathbb{C}^{N\mathrm{rx} \times N_\mathrm{rx}} $, parameterized by coupling strength $ \delta_\mathrm{rx} $, similarly to (\ref{eqn:coupling-matrix}).

Afterwards, the \gls{BS} employs analog and/or digital signal processing techniques to mitigate self-interference \cite{kim2023:performance-analysis-self-interference-cancellation-full-duplex-massive-mimo-systems-subtraction-versus-spatial-suppression }. These techniques operate on the already clutter-filtered signal and rely on estimates of the self-interference channel. Due to hardware imperfections and channel estimation errors, cancellation is inherently imperfect and results in \gls{RSI}. Specifically, the \gls{BS} performs self-interference cancellation by subtracting $ \mathbf{v}_s^\mathrm{H} \mathbf{Z}_\mathrm{rx}^\mathrm{H} \doubletilde{\mathbf{I}} \mathbf{Z}_\mathrm{tx} \mathbf{w}_s d_s \cdot \tau_{t,s} $ from $ y_\mathrm{sen}^{t,s} $, where $ \doubletilde{\mathbf{I}} $ is an estimate of $ \tilde{\mathbf{I}} $. The post-cancellation signal is expressed as
\begin{align} \label{eqn:signal-received-bs-rsi}
\begin{aligned}
	\widebar{y}_\mathrm{sen}^{t,s} & = y_\mathrm{sen}^{t,s} - \mathbf{v}_s^\mathrm{H} \mathbf{Z}_\mathrm{rx}^\mathrm{H} \doubletilde{\mathbf{I}} \mathbf{Z}_\mathrm{tx} \mathbf{w}_s d_s \cdot \tau_{t,s}
	\\
	& = \mathbf{v}_s^\mathrm{H} \mathbf{Z}_\mathrm{rx}^\mathrm{H} \mathbf{G}_t \mathbf{Z}_\mathrm{tx} \mathbf{w}_s d_s \cdot \tau_{t,s} 
	\\
	& ~~~~~~~~~~~~~ + \mathbf{v}_s^\mathrm{H} \mathbf{Z}_\mathrm{rx}^\mathrm{H} \mathbf{R} \mathbf{Z}_\mathrm{tx} \mathbf{w}_s d_s \cdot \tau_{t,s} + \mathbf{v}_s^\mathrm{H} \mathbf{Z}_\mathrm{rx}^\mathrm{H} \mathbf{n}_{\mathrm{sen},t},
\end{aligned}
\end{align}
where $ \mathbf{R} = \tilde{\mathbf{I}} - \doubletilde{\mathbf{I}} $ is the \gls{RSI} channel and $ \mathbf{v}_s^\mathrm{H} \mathbf{Z}_\mathrm{rx}^\mathrm{H} \mathbf{R} \mathbf{Z}_\mathrm{tx} \mathbf{w}_s d_s \cdot \tau_{t,s} $ represents the interfering signal. The primary challenge in achieving complete interference cancellation lies in the imperfect knowledge of $ \mathbf{R} $. Following the models in \cite{rodriguez2014:optimal-power-allocation-capacity-full-duplex-af-relaying-residual-self-interference, liu2017:analog-self-interference-cancellation-full-duplex-communications-imperfect-channel-state-information}, the \gls{RSI} channel is characterized by constraint
\begin{align*}
	\mathrm{C}_{20}: \mathcal{R} \triangleq \left\lbrace \mathbf{R} \mid \mathbf{R} = \upsilon \widebar{\mathbf{R}} + \upsilon \boldsymbol{\Delta} \mathbf{R}, \left\| \boldsymbol{\Delta} \mathbf{R} \right\|_\mathrm{F}^2 \leq \epsilon_\mathrm{RSI}^2 \right\rbrace,
\end{align*}
where $ \mathcal{R} $ is the uncertainty set encompassing all \gls{RSI} possibilities. Here, $ \mathbf{R} $ consists of a deterministic and a random component, denoted by $ \widebar{\mathbf{R}} $ and $ \boldsymbol{\Delta} \mathbf{R} $, respectively. The deterministic component depends on the relative geometry between arrays, and is defined as
\begin{align} \label{eqn:definition-rsi}
	\left[ \widebar{\mathbf{R}} \right]_{\widebar{m},\widebar{n}} = \tfrac{\lambda}{4 \pi \widebar{d}_{\widebar{m},\widebar{n}}} \tfrac{1}{\sqrt{N_\mathrm{rx} N_\mathrm{tx}}} \mathrm{e}^{-\mathrm{j} \frac{2 \pi}{\lambda} \widebar{d}_{\widebar{m},\widebar{n}}}.
\end{align}

Specifically, $ \widebar{d}_{\widebar{m},\widebar{n}} $ denotes the distance between the $ \widebar{m} $-th receive antenna and the $ \widebar{n} $-th transmit antenna \cite{wang2023:near-field-integrated-sensing-communications, tse2005:fundamentals-wireless-ommunication}, with the centers of the antenna arrays separated by a distance $ \widebar{d}_c $. Meanwhile, the random component accounts for spurious self-interference whose power is bounded by $ \epsilon_\mathrm{RSI}^2 $. Parameter $ \upsilon $ quantifies the severity of self-interference in the processed signal. Thus,  $ \upsilon = 0 $ indicates perfect self-interference cancellation, while $ \upsilon = 1 $ signifies the absence of any cancellation.

\begin{remark} \label{rmk:clarification-rsi}
	The norm-bounded uncertainty set that models RSI is designed to capture the aggregate residual self-interference after analog/digital cancellation. This modeling approach accounts not only for linear estimation errors but also for distortions arising from nonlinear RF impairments, such as power amplifier saturation and IQ imbalance. These nonlinear effects are implicitly captured through the parameter $ \epsilon_\mathrm{RSI} $, which upper-bounds the total residual interference energy after cancellation.
\end{remark}

From (\ref{eqn:signal-received-bs-rsi}), the sensing \gls{SINR} for target $ \mathsf{T}_t $ in $ \mathsf{S}_s $ is defined as
\begin{align} \label{eqn:sensing-sinr}
	\mathsf{SINR}_\mathrm{sen}^{t,s} = \frac
	{ 
	\left| \mathbf{v}_s^\mathrm{H} \mathbf{Z}_\mathrm{rx}^\mathrm{H} \mathbf{G}_t \mathbf{Z}_\mathrm{tx} \mathbf{w}_s \cdot \tau_{t,s} \right|^2 
	}
	{ 
	\left|\mathbf{v}_s^\mathrm{H} \mathbf{Z}_\mathrm{rx}^\mathrm{H} \mathbf{R} \mathbf{Z}_\mathrm{tx} \mathbf{w}_s \cdot \tau_{t,s} \right|^2 + \sigma_\mathrm{sen}^2 \left\| \mathbf{Z}_\mathrm{rx} \mathbf{v}_s \right\|_2^2 
	}.
\end{align}

Additionally, we include constraint 
\begin{align*}
	\mathrm{C}_{21}: \min_{\psi_t \in \Psi_t} \min_{\theta_t \in \Theta_t} \min_{\mathbf{R} \in \mathcal{R}}  \mathsf{SINR}_\mathrm{sen}^{t,s} \geq \Lambda_{\mathrm{sinr},t} \cdot \tau_{t,s}, \forall t \in \mathcal{T},
\end{align*}
which ensures that the sensing \gls{SINR} in every allocated timeslot for each target remains above a predefined threshold $ \Lambda_{\mathrm{sinr},t} $,  even under uncertainties in the \gls{RC}, \gls{AOA}, and \gls{RSI}. Furthermore, each target is allocated $ S_{\mathrm{sen},t} $ timeslots, which dictates the dwell time and is enforced through
\begin{align*}
	\textstyle
	\mathrm{C}_{22}: \sum_{s \in \mathcal{S}} \tau_{t,s} = S_{\mathrm{sen},t}, \forall t \in \mathcal{T}.
\end{align*}

\begin{remark} \label{rmk:clarification-doppler-resolution}
	Sufficient cumulative SINR for accurate Doppler estimation can be achieved by jointly tuning the dwell time $ S_{\mathrm{sen},t} $ and the sensing SINR threshold $ \Lambda_{\mathrm{sinr},t} $. These parameters can be traded off to meet a required Doppler resolution, with longer dwell times compensating for lower SINR thresholds and vice versa.
\end{remark}

\subsection{Objective function}

We define the multi-objective function 
\begin{align} 
	  \mathbf{f} \left( \boldsymbol{\Omega} \right) 
	  \triangleq 
	  \begin{bmatrix} 
	  f_1 \left( \boldsymbol{\Omega} \right) 
	  \\  
	  f_2 \left( \boldsymbol{\Omega} \right) 
	  \end{bmatrix},
\end{align}
where $ \boldsymbol{\Omega} $ represents the set of all decision variables. Functions $ f_1 \left( \boldsymbol{\Omega} \right) $ and $ f_2 \left( \boldsymbol{\Omega} \right) $ are defined as follows
\begin{align*} 
	f_1 \left( \boldsymbol{\Omega} \right) \triangleq & S_\mathrm{dur} \textstyle \sum_{s \in \mathcal{S}} \left\| \mathbf{w}_s \right\|_2^2 + S_\mathrm{dur} \sum_{s \in \mathcal{S}} \left\| \mathbf{v}_s \right\|_2^2,
	\\ 
	f_2 \left( \boldsymbol{\Omega} \right) \triangleq & \textstyle S_\mathrm{dur} \sum_{s \in \mathcal{S}} \omega_{s} \cdot \gamma_s.
\end{align*}




Function $ f_1 \left( \boldsymbol{\Omega} \right) $ represents the total energy consumption of the \gls{BS} for both transmit and receive beamforming over all allocated timeslots, where $ S_\mathrm{dur} $ denotes the duration of each timeslot. In particular, minimizing this function helps reduce the carbon footprint while also limiting interference to neighboring networks.

Function $ f_2(\boldsymbol{\Omega}) $ represents the total consumption of time resources, modeled as a weighted sum of timeslots. Minimizing $ f_2(\boldsymbol{\Omega}) $ helps reduce the number of active timeslots, fostering timeslot reuse by pairing users and targets. Here, weights $ \omega_s $ are carefully chosen to mitigate sparsity in timeslot utilization, promoting a more cohesive allocation that eliminates idle timeslots between active ones. This ensures contiguous timeslot usage, allowing subsequent allocations to start earlier. In particular, \textbf{Lemma~\ref{lem:lemma-weights}} presents a family of weights designed to enhance reuse and compactness in timeslot allocation.

\begin{lemma} \label{lem:lemma-weights}
	A set of weights that promotes compact timeslot allocation in $ f_2 \left( \boldsymbol{\Omega} \right) $ is given by $ {\omega}_{s} = \Delta_0 + (s-1) \cdot \Delta_\omega $, $ \forall s \in \mathcal{S} $, where $ \Delta_\omega > 0 $.
\end{lemma}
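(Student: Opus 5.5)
Since ``promotes compact timeslot allocation'' is informal, the first step is to make it precise. I will show that, with $\omega_s = \Delta_0 + (s-1)\Delta_\omega$ and $\Delta_\omega > 0$, the following holds. Among all feasible allocations with the same number of active timeslots $K=\sum_{s\in\mathcal{S}}\gamma_s$, the value of $f_2(\boldsymbol{\Omega})$ is uniquely minimized when the active set is $\{1,\dots,K\}$, i.e., when there are no idle timeslots before or between active ones. A companion claim is that, for fixed configurations, reducing $K$ never increases $f_2$ once the weights are positive. This holds if $\Delta_0>0$, or more generally if $\Delta_0 \ge 0$.

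The core step is an exchange argument. Take any allocation whose active set $\mathcal{A}$, with $|\mathcal{A}|=K$, is not $\{1,\dots,K\}$. Then there exist an idle slot $s'$ and an active slot $s''$ with $s'<s''$.

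Relabel the timeslots by moving every decision associated with $\mathsf{S}_{s''}$ to $\mathsf{S}_{s'}$. These decisions are $\kappa$, $\zeta$, $\gamma$, $\mu_{u,\cdot}$, $\tau_{t,\cdot}$, $\chi_{\cdot,\cdot}$, $\rho_{\cdot,\cdot}$, $\mathbf{w}$ and $\mathbf{v}$. Constraints $\mathrm{C}_1$--$\mathrm{C}_{22}$ are either per-slot constraints or sums over $s$. None of them couples different timeslots or depends on the slot index itself; only $\omega_s$ does. The swap therefore preserves feasibility, including the robust constraints $\mathrm{C}_{16}$ and $\mathrm{C}_{21}$. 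It also leaves $f_1$ unchanged, since $f_1$ is a sum of per-slot norms.

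Meanwhile $f_2$ changes by $S_\mathrm{dur}(\omega_{s'}-\omega_{s''}) = -S_\mathrm{dur}(s''-s')\Delta_\omega < 0$. Repeating the swap strictly decreases $\sum_{s\in\mathcal{A}} s$, so the process terminates, and it can only terminate at $\mathcal{A}=\{1,\dots,K\}$. The minimum value is $S_\mathrm{dur}\bigl(K\Delta_0 + \Delta_\omega K(K-1)/2\bigr)$. This is obtained by summing the arithmetic progression, and it is strictly increasing in $K$ when $\Delta_0>0$, which gives the companion claim.

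The main obstacle is interpretive rather than technical. One has to state exactly what ``promotes'' means and to check that no constraint depends on the slot index, which would break the exchange invariance. The only candidate is a latent ordering requirement, and none appears among $\mathrm{C}_1$--$\mathrm{C}_{22}$. I would also remark on the role of strict positivity. If $\Delta_\omega=0$, every $K$-subset gives the same $f_2$, so compactness is not enforced. Any strictly increasing sequence would work, and the linear family is simply a convenient choice.
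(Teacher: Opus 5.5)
Your proof is correct, and it runs in the opposite direction from the paper's. The paper's appendix uses a loose ``induction'' over the number of active slots $K=1,\dots,S-1$. For each $K$ it requires that $\{1,\dots,K\}$ beat every other $K$-subset, and it ``tightens'' the resulting inequality by identifying the first $K-1$ indices. From this it extracts the necessary condition $\omega_s<\omega_{s+1}$. The arithmetic progression is then adopted only as a convenient strictly increasing sequence.

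That route explains \emph{why} increasing weights are needed, but it does two things implicitly. It never proves sufficiency, and it assumes without checking that whatever is served in a late slot could equally be served in an earlier idle one.

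Your exchange argument supplies both missing pieces. First, it proves sufficiency directly: any strictly increasing sequence makes $\{1,\dots,K\}$ the unique minimizing active set. Second, your check that $\mathrm{C}_{1}$--$\mathrm{C}_{22}$ and $f_1$ are invariant under permutations of slot indices turns a statement about $f_2$ in isolation into one about the actual problem. Every non-compact feasible point is strictly dominated, with the same $f_1$ and a smaller $f_2$. Hence every Pareto-optimal point of $\mathcal{P}(\boldsymbol{\Omega})$ is compact, as is every minimizer of $\eta_1 f_1+\eta_2 f_2$ with $\eta_2>0$.

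One caveat: this invariance holds for $\mathcal{P}(\boldsymbol{\Omega})$ but not for $\mathcal{P}'(\boldsymbol{\Omega}')$. There the cutting plane $\mathrm{K}_{2}$ is exactly the index-dependent ordering constraint you looked for. In its intended non-increasing form, $\mathrm{K}_{2}$ itself forces the zero-power idle slots to come last, so compactness there is enforced directly rather than through the weights.
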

\begin{proof}
Please, refer to \textbf{Appendix \ref{app:proof-lemma-weights}}.
\end{proof}
 
Minimizing energy consumption alone is insufficient, as it does not account for the duration over which the energy is expended. Conversely, minimizing consumption of time resources alone overlooks the amount of power used. Hence, jointly considering both aspects is essential to achieve true resource economy and high efficiency.

\subsection{Problem formulation}
We formulate the \gls{RRM} design as
\begin{align*} 
	\mathcal{P} \left( \boldsymbol{\Omega} \right): & \underset{ \boldsymbol{\Omega} }{\mathrm{~minimize}}
	& \mathbf{f} \left( \boldsymbol{\Omega} \right) ~~~ \mathrm{s.t.} ~~~ {\mathrm{C}}_{1} - {\mathrm{C}}_{22}.
\end{align*}

Problem $ \mathcal{P} \left( \boldsymbol{\Omega} \right) $ is challenging to solve due to the presence of couplings between variables and the semi-infinite number of constraints arising from imperfect knowledge of \gls{CSI}, \gls{RC}, \gls{AOD}, and \gls{RSI}. Specifically, $ \mathcal{P} \left( \boldsymbol{\Omega} \right) $ is a nonconvex \gls{MINLP}, for which no generic solution methods exist.


\section{Proposed Optimal Approach} \label{sec:proposed-approach}

This section presents an equivalent transformation of problem $ \mathcal{P} \left( \boldsymbol{\Omega} \right) $ that preserves optimality. The complicating terms within $ \mathcal{P} \left( \boldsymbol{\Omega} \right) $ are reformulated into more tractable expressions, revealing that the problem is convex, specifically, a \gls{MISDP} that can be solved to global optimality. For clarity, \cref{fig:proposed-approach} outlines the steps involved in simplifying the challenging expressions. This process results in a transformed problem $ \mathcal{P}' \left( \boldsymbol{\Omega}' \right) $, where $ \boldsymbol{\Omega}' $ denotes the updated set of decision variables. 

In the following, we delve into each of the steps that facilitate this transformation.

\subsection{Addressing the logical couplings induced by $ \mathrm{C}_{3} $}
We introduce \textbf{\cref{thm:proposition-1}} to cope with the logical operator $ \lor $ appearing in constraint $ \mathrm{C}_{3} $, which links variables $ \kappa_s $ and $ \zeta_s $. We decouple the binary variables and reformulate $ \mathrm{C}_{3} $ into an equivalent set of linear inequalities.
\begin{proposition} \label{thm:proposition-1}
	Constraint $ \mathrm{C}_{3} $ can be equivalently rewritten as constraints $ \mathrm{D}_{1} $, $ \mathrm{D}_{2} $, $ \mathrm{D}_{3} $, and $ \mathrm{D}_{4} $, shown below
	\begin{align} \nonumber
		\mathrm{C}_{3} \Leftrightarrow
			\begin{cases}
				   	\mathrm{D}_{1}: \gamma_s \leq \kappa_s + \zeta_s, \forall s \in \mathcal{S}, 
				   	\\	
				   	\mathrm{D}_{2}: \gamma_s \geq \kappa_s, \forall s \in \mathcal{S},   
				   	\\	
				   	\mathrm{D}_{3}: \gamma_s \geq \zeta_s, \forall s \in \mathcal{S},
				   	\\	
				   	\mathrm{D}_{4}: \gamma_s \leq 1, \forall s \in \mathcal{S}. 
			\end{cases}
	\end{align}
\end{proposition}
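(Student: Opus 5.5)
The plan is a direct case analysis over the binary domain. First, we need to fix the domain of $\gamma_s$. Constraint $\mathrm{C}_{3}$ defines $\gamma_s = \kappa_s \lor \zeta_s$ with $\kappa_s, \zeta_s \in \{0,1\}$ by $\mathrm{C}_{1}$ and $\mathrm{C}_{2}$, so it implicitly makes $\gamma_s$ binary. In the reformulation we will also treat $\gamma_s$ as a binary decision variable, $\gamma_s \in \{0,1\}$, as it is used throughout, e.g., in $\mathrm{C}_{4}$ and $\mathrm{C}_{10}$. The claim is then that, for each $s \in \mathcal{S}$, the set of triples $(\kappa_s,\zeta_s,\gamma_s) \in \{0,1\}^3$ satisfying $\mathrm{C}_{3}$ coincides with the set satisfying $\mathrm{D}_{1}$--$\mathrm{D}_{4}$. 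Since all constraints are separable across $s$, it suffices to argue for a fixed $s$.

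\emph{Direction $\mathrm{C}_{3} \Rightarrow \mathrm{D}_{1}$--$\mathrm{D}_{4}$.} If $\gamma_s = \kappa_s \lor \zeta_s$, then $\gamma_s = \max\{\kappa_s,\zeta_s\}$. This immediately gives $\mathrm{D}_{2}$, $\mathrm{D}_{3}$, and $\mathrm{D}_{4}$, the last because $\max\{\kappa_s,\zeta_s\} \leq 1$. It also gives $\mathrm{D}_{1}$, since $\max\{\kappa_s,\zeta_s\} \leq \kappa_s + \zeta_s$ for nonnegative arguments.

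\emph{Direction $\mathrm{D}_{1}$--$\mathrm{D}_{4} \Rightarrow \mathrm{C}_{3}$.} Here we enumerate the four combinations of $(\kappa_s,\zeta_s)$:
\begin{itemize}
\item If $\kappa_s = \zeta_s = 0$, then $\mathrm{D}_{1}$ gives $\gamma_s \leq 0$. Together with binarity this yields $\gamma_s = 0 = \kappa_s \lor \zeta_s$.
\item If at least one of $\kappa_s, \zeta_s$ equals $1$, then $\mathrm{D}_{2}$ or $\mathrm{D}_{3}$ gives $\gamma_s \geq 1$. Combined with $\mathrm{D}_{4}$ this forces $\gamma_s = 1 = \kappa_s \lor \zeta_s$.
\end{itemize}
Note that $\mathrm{D}_{1}$ is not active in the second case, since $\kappa_s + \zeta_s$ may equal $2$.

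The only subtle point is the role of $\mathrm{D}_{4}$ and of the domain of $\gamma_s$. If $\gamma_s$ were allowed to be continuous, the cases $\kappa_s = \zeta_s = 0$ and $\kappa_s$ or $\zeta_s$ equal to $1$ still pin $\gamma_s$ down exactly, via $\gamma_s \leq 0$ together with $\gamma_s \geq 0$ from $\mathrm{D}_{2}$, and via $1 \leq \gamma_s \leq 1$, respectively. The equivalence therefore holds even under a continuous relaxation of $\gamma_s$, which is worth remarking on. The key step is then simply checking that $\mathrm{D}_{4}$ removes the slack $\gamma_s \in [1,2]$ that $\mathrm{D}_{1}$ alone would permit when both functionalities are active.
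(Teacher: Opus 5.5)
Your proof is correct and complete, but it takes a different route from the paper's. The paper \emph{derives} the inequalities using propositional calculus. It rewrites $\gamma_s \leftrightarrow (\kappa_s \lor \zeta_s)$ in conjunctive normal form, $(\neg\gamma_s \lor \kappa_s \lor \zeta_s) \wedge (\neg\kappa_s \lor \gamma_s) \wedge (\neg\zeta_s \lor \gamma_s)$, and translates each clause into a linear inequality (negation $x \mapsto 1-x$, disjunction $\mapsto$ sum $\geq 1$), which yields $\mathrm{D}_1$--$\mathrm{D}_3$. It then notes that integrality of $\gamma_s$ is not enforced when $\kappa_s=\zeta_s=1$. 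Rather than declaring $\gamma_s$ binary, it adds $\mathrm{D}_4$ and keeps $\gamma_s$ continuous. You instead \emph{verify} the given system by enumerating the values of $(\kappa_s,\zeta_s)$ and proving both inclusions directly. The paper's route is generative and reusable: the same clause-to-inequality mechanism later produces $\mathrm{F}_1$--$\mathrm{F}_4$ and $\mathrm{J}_1$--$\mathrm{J}_3$. However, that translation is only valid for $0/1$ variables, so the paper asserts rather than checks that the continuous $\gamma_s$ ``maintains integrality.'' Your enumeration supplies exactly this check. In every case $\mathrm{D}_1$--$\mathrm{D}_4$ force $\gamma_s = \kappa_s \lor \zeta_s$: $\mathrm{D}_2$/$\mathrm{D}_3$ give the lower bound, $\mathrm{D}_1$ the upper bound, and $\mathrm{D}_4$ is needed only when $\kappa_s=\zeta_s=1$. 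Since $\mathcal{P}'$ imposes no explicit integrality constraint on $\gamma_s$, this continuous version is the one the paper actually relies on. You should therefore state it as your main claim rather than as a closing remark; under your primary framing with $\gamma_s \in \{0,1\}$, $\mathrm{D}_4$ is vacuous.
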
 
\begin{proof} 
	For the detailed derivation, refer to \textbf{Appendix \ref{app:proof-proposition-1}}.
\end{proof}

\subsection{Addressing the couplings between variables and the infinite constraints in $ \mathrm{C}_{11} $, $ \mathrm{C}_{15} $, and $ \mathrm{C}_{16} $}

We introduce \textbf{\cref{thm:proposition-2}} to address the challenges posed by $ \mathrm{C}_{11} $, $ \mathrm{C}_{15} $, and $ \mathrm{C}_{16} $, stemming from variable coupling and the infinite constraint set due to imperfect \gls{CSI}. Specifically, we decouple variables $ \mathbf{w}_{s} $ and $ \mu_{u,s} $ in $ \mathrm{C}_{16} $, by analyzing the expressions in which the coupling appears and deriving equivalent reformulations that preserve the original solution space. This step is critical to our approach, as it allows us to restructure the expressions in a way that eliminates couplings without introducing additional variables. Furthermore, to manage the infinite constraint set induced by $ \mathrm{C}_{15} $, we employ the \emph{S-Procedure}, outlined in \textbf{Lemma~\ref{lem:s-procedure}}, which transforms an infinite constraint set into a finite and computationally tractable form. The overall procedure reformulates $ \mathrm{C}_{11} $, $ \mathrm{C}_{15} $, and $ \mathrm{C}_{16} $ into an equivalent set of linear inequalities, while ensuring that the solution space of these constraints remains unchanged.

\begin{figure}[!t]
	\centering
	\includegraphics[width=0.77\columnwidth]{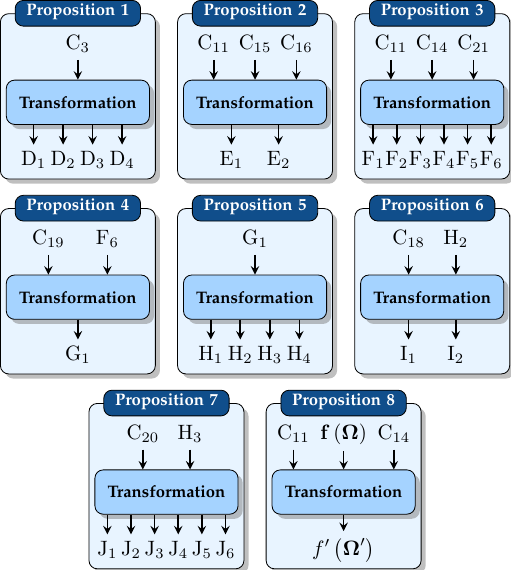}
	\caption{Diagram illustrating the proposed approach.}
	\label{fig:proposed-approach}
\end{figure}


\begin{figure*}[!b]
	\hrulefill
		\begin{align} \nonumber
			\mathrm{C}_{11}, \mathrm{C}_{14}, \mathrm{C}_{21} \Leftrightarrow
				\begin{cases}
					   	\mathrm{F}_{1}: \chi_{b,s} \geq \pi_{b,c,t,s}, \forall b \in \mathcal{L}_\mathrm{tx}, c \in \mathcal{L}_\mathrm{rx}, t \in \mathcal{T}, s \in \mathcal{S}, 
					   	\\
					   	\mathrm{F}_{2}: \rho_{c,s} \geq \pi_{b,c,t,s}, \forall b \in \mathcal{L}_\mathrm{tx}, c \in \mathcal{L}_\mathrm{rx}, t \in \mathcal{T}, s \in \mathcal{S},
					   	\\	
					   	\mathrm{F}_{3}: \tau_{t,s} \geq \pi_{b,c,t,s}, \forall b \in \mathcal{L}_\mathrm{tx}, c \in \mathcal{L}_\mathrm{rx}, t \in \mathcal{T}, s \in \mathcal{S},
					   	\\
					   	\mathrm{F}_{4}: 2 + \pi_{b,c,t,s} \geq \chi_{b,s} + \rho_{c,s} + \tau_{t,s}, \forall b \in \mathcal{L}_\mathrm{tx}, c \in \mathcal{L}_\mathrm{rx}, t \in \mathcal{T}, s \in \mathcal{S},
					   	\\
					   	\mathrm{F}_{5}: \pi_{b,c,t,s} \in \left[ 0, 1 \right], \forall b \in \mathcal{L}_\mathrm{tx}, c \in \mathcal{L}_\mathrm{rx}, t \in \mathcal{T}, s \in \mathcal{S},
					   	\\
					   	\mathrm{F}_{6}: \min_{\psi_t \in \Psi_t} \min_{\theta_t \in \Theta_t} \ell_{t,s} (\boldsymbol{\Omega}')
					   	\geq \Lambda_{\mathrm{sinr},t} \cdot \tau_{t,s}, \forall t \in \mathcal{T}.
				\end{cases}
		\end{align}
\end{figure*}

\begin{proposition} \label{thm:proposition-2}
	Constraints $ \mathrm{C}_{11} $, $ \mathrm{C}_{15} $, and $ \mathrm{C}_{16} $ can be equivalently rewritten as constraints $ \mathrm{E}_{1} $ and $ \mathrm{E}_{2} $, shown below, 
 	\begin{align} \nonumber
	\mathrm{C_{11}}, \mathrm{C_{15}}, \mathrm{C_{16}} =
 		\begin{cases}
 			   	\mathrm{E}_{1}: \left[ 
	 			   					\begin{matrix}
	 			   					   	\mathbf{E}_{u,s}^\mathrm{a}
	 			   					   	& 
	 			   					   	\mathbf{e}_{u,s}^\mathrm{b}
	 			   					   	\\
	 			   					    \mathbf{e}_{u,s}^\mathrm{c}
	 			   					    & 
	 			   					    e_{u,s}^\mathrm{d}
	 			   				  	\end{matrix} 
 			   					\right] 
 					   		\succcurlyeq \mathbf{0}, \forall u \in \mathcal{U}, s \in \mathcal{S},
 				\\
 				\mathrm{E}_{2}: \alpha_u \geq 0, \forall u \in \mathcal{U},
 		\end{cases}
	\end{align}
	where $ \mathbf{E}_{u,s}^\mathrm{a} = \alpha_u \mathbf{I} - \mathbf{L}_s $, $ \mathbf{e}_{u,s}^\mathrm{b} = - \mathbf{L}_s \widebar{\mathbf{h}}_u $, $ \mathbf{e}_{u,s}^\mathrm{c} = - \widebar{\mathbf{h}}_u^\mathrm{H} \mathbf{L}_s $, $ e_{u,s}^\mathrm{d} =  -\alpha_u \cdot \epsilon_\mathrm{CSI}^2 - \widebar{\mathbf{h}}_u^\mathrm{H} \mathbf{L}_s \widebar{\mathbf{h}}_u - \Upsilon_{\mathrm{snr},u} \cdot \mu_{u,s} $, $ \mathbf{L}_s = - \sum_{b \in \mathcal{L}_\mathrm{tx}} \frac{1}{\sigma_\mathrm{com}^2} \mathbf{Z}_\mathrm{tx} \mathbf{b}_b \mathbf{b}_b^\mathrm{H} \mathbf{Z}_\mathrm{tx}^\mathrm{H} \chi_{b,s} $, and $ \alpha_u $ are newly introduced variables. 
\end{proposition}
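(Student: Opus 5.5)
The plan is to remove the coupling first, then handle the uncertainty set, and finally apply the S-Procedure of Lemma~\ref{lem:s-procedure}.

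\emph{Step 1: decoupling.} Substitute $\mathrm{C}_{11}$ into the numerator of (\ref{eqn:communication-snr}). By $\mathrm{C}_{10}$ and binarity, at most one $\chi_{b,s}$ equals one. Hence $\mathbf{w}_s\mathbf{w}_s^\mathrm{H}=\sum_b \mathbf{b}_b\mathbf{b}_b^\mathrm{H}\chi_{b,s}$, since cross terms vanish and $\chi_{b,s}^2=\chi_{b,s}$. Because $\mu_{u,s}^2=\mu_{u,s}$, the SNR becomes $\mathbf{h}_u^\mathrm{H}(-\mathbf{L}_s)\mathbf{h}_u\,\mu_{u,s}$, which is bilinear in $\chi$ and $\mu$. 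I would then argue that the factor $\mu_{u,s}$ multiplying the quadratic form can be dropped without changing the feasible set. If $\mu_{u,s}=1$, the two forms coincide. If $\mu_{u,s}=0$, the original $\mathrm{C}_{16}$ reads $0\ge0$, and the new constraint $\mathbf{h}_u^\mathrm{H}(-\mathbf{L}_s)\mathbf{h}_u\ge 0$ also holds trivially because $-\mathbf{L}_s\succcurlyeq\mathbf{0}$. So $\mathrm{C}_{16}$ is equivalent to $\min_{\mathbf{h}_u\in\mathcal{H}_u}\mathbf{h}_u^\mathrm{H}(-\mathbf{L}_s)\mathbf{h}_u-\Upsilon_{\mathrm{snr},u}\mu_{u,s}\ge0$. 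The left side is now affine in the binaries $\chi_{b,s}$ and $\mu_{u,s}$, and $\mathbf{w}_s$ is eliminated, which is how $\mathrm{C}_{11}$ gets absorbed.

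\emph{Step 2: S-Procedure.} Write $\mathbf{h}_u=\widebar{\mathbf{h}}_u+\boldsymbol{\Delta}\mathbf{h}_u$, with $\boldsymbol{\Delta}\mathbf{h}_u$ ranging over the ball in $\mathrm{C}_{15}$. The robust constraint then states that
\[
\boldsymbol{\Delta}\mathbf{h}_u^\mathrm{H}\boldsymbol{\Delta}\mathbf{h}_u-\epsilon_\mathrm{CSI}^2\le0
\]
implies
\[
\boldsymbol{\Delta}\mathbf{h}_u^\mathrm{H}(-\mathbf{L}_s)\boldsymbol{\Delta}\mathbf{h}_u+2\mathrm{Re}\{\widebar{\mathbf{h}}_u^\mathrm{H}(-\mathbf{L}_s)\boldsymbol{\Delta}\mathbf{h}_u\}+\widebar{\mathbf{h}}_u^\mathrm{H}(-\mathbf{L}_s)\widebar{\mathbf{h}}_u-\Upsilon_{\mathrm{snr},u}\mu_{u,s}\ge0 .
\]
Lemma~\ref{lem:s-procedure} applies because the premise set has a strictly feasible point, $\boldsymbol{\Delta}\mathbf{h}_u=\mathbf{0}$ when $\epsilon_\mathrm{CSI}>0$, so the lossless condition holds. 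The implication is then equivalent to the existence of $\alpha_u\ge0$ making the block matrix (quadratic-form coefficients plus $\alpha_u$ times the ball-constraint coefficients) positive semidefinite. Reading off the blocks gives $\alpha_u\mathbf{I}-\mathbf{L}_s$, the off-diagonal terms $-\mathbf{L}_s\widebar{\mathbf{h}}_u$ and its Hermitian, and the corner $-\alpha_u\epsilon_\mathrm{CSI}^2-\widebar{\mathbf{h}}_u^\mathrm{H}\mathbf{L}_s\widebar{\mathbf{h}}_u-\Upsilon_{\mathrm{snr},u}\mu_{u,s}$. These are exactly $\mathrm{E}_1$, and $\alpha_u\ge0$ is $\mathrm{E}_2$.

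\emph{Step 3: the main obstacle.} The multiplier is stated per user, $\alpha_u$, not per $(u,s)$. The S-Procedure naturally yields a separate multiplier for each constraint indexed by $(u,s)$. I would need to check that sharing one $\alpha_u$ across timeslots loses nothing.

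My first attempt is to show that, for fixed $u$, the feasible multipliers are an upward-closed set $[\underline{\alpha}_{u,s},\infty)$. Increasing $\alpha$ adds $\alpha(\mathbf{I}\oplus-\epsilon^2)$, which is not PSD, so upward closure is not immediate. I would therefore test whether a sufficiently large common $\alpha_u$ works for every $s$. If it does not, I would note that $\mathbf{L}_s$ depends on $s$ only through the chosen codeword, so feasibility holds for a codeword-indexed family, and report the equivalence with $\alpha_{u,s}$ if needed.

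The decoupling in Step 1 is routine. The delicate point is justifying dropping $\mu_{u,s}$ together with the PSD sign of $-\mathbf{L}_s$.
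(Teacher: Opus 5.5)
Your Steps~1 and~2 follow the paper's own proof. Both substitute $\mathrm{C}_{11}$, drop the factor $\mu_{u,s}$ by the two-case argument, and collapse the cross terms because $\mathrm{C}_{10}$ leaves at most one active codeword. Both then absorb $\mathrm{C}_{15}$ and apply Lemma~\ref{lem:s-procedure} to the ball constraint and the resulting quadratic. The paper obtains the collapse through a ``tight Jensen'' argument, while your identity $\mathbf{w}_s\mathbf{w}_s^\mathrm{H}=\sum_b\mathbf{b}_b\mathbf{b}_b^\mathrm{H}\chi_{b,s}$ states the same fact directly. Your blocks match $\mathrm{E}_1$ exactly, and your Slater remark ($\epsilon_\mathrm{CSI}>0$) is a hypothesis the paper never states.

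The genuine gap is Step~3, which you leave open. The paper's proof does not address it either; it simply names the S-Procedure multiplier $\alpha_u$. Your suspicion is correct, and the ``large common $\alpha_u$'' test fails. The indefiniteness of $\mathbf{I}\oplus(-\epsilon_\mathrm{CSI}^2)$ that you noticed is exactly the obstruction. Nonnegativity of the corner entry $e^\mathrm{d}_{u,s}$ caps the multiplier at $\big(\widebar{\mathbf{h}}_u^\mathrm{H}(-\mathbf{L}_s)\widebar{\mathbf{h}}_u-\Upsilon_{\mathrm{snr},u}\mu_{u,s}\big)/\epsilon_\mathrm{CSI}^2$ in every slot.

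Concretely, let $\epsilon_\mathrm{CSI}>0$ and let $s'$ be an idle slot. Then $\mathbf{L}_{s'}=\mathbf{0}$ and $\mu_{u,s'}=0$, so $\mathrm{E}_1$ reads $\mathrm{diag}(\alpha_u\mathbf{I},-\alpha_u\epsilon_\mathrm{CSI}^2)\succcurlyeq\mathbf{0}$, which forces $\alpha_u=0$. In a slot $s$ with $\mu_{u,s}=1$, however, $\alpha_u=0$ is infeasible. Evaluating the $\mathrm{E}_1$ matrix at the vector $[-\widebar{\mathbf{h}}_u;\,1]$ gives $-\Upsilon_{\mathrm{snr},u}<0$; a zero multiplier would certify the SNR bound for unbounded channel errors.

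Every user with $S_{\mathrm{com},u}\ge1$ is served in some slot. Hence $\mathrm{E}_1$--$\mathrm{E}_2$ is infeasible as soon as one slot is idle, even though $\mathrm{C}_{16}$ may be satisfiable. The same collapse to $\alpha_u=0$ occurs in a slot that serves another user with a codeword $\mathbf{b}$ satisfying $|\mathbf{b}^\mathrm{H}\mathbf{Z}_\mathrm{tx}^\mathrm{H}\widebar{\mathbf{h}}_u|<\epsilon_\mathrm{CSI}\|\mathbf{Z}_\mathrm{tx}\mathbf{b}\|_2$. Your codeword-indexed fallback does not repair this, because the LMI also depends on $\mu_{u,s}$ and on whether the slot is idle.

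So the statement as written is not an equivalence. What your Steps~1--2 actually prove, and what you should state, is the version with one multiplier per semi-infinite constraint: $\alpha_{u,s}$ in $\mathrm{E}_1$ and $\mathrm{E}_2:\alpha_{u,s}\ge0$. When $\epsilon_\mathrm{CSI}=0$, the increment $\mathbf{I}\oplus0$ is PSD, the feasible multiplier sets are upward closed, and sharing is harmless. For $\epsilon_\mathrm{CSI}>0$, those sets are always bounded above.
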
 
\begin{proof}
	For the detailed derivation, refer to \textbf{Appendix \ref{app:proof-proposition-2}}.
\end{proof}

\begin{lemma}[S-Procedure] \label{lem:s-procedure}
	Let $ h_i(\mathbf{x}) \triangleq \mathbf{x}^\mathrm{H} \mathbf{A}_i \mathbf{x} + \mathbf{b}_i^\mathrm{H} \mathbf{x} + \mathbf{x}^\mathrm{H} \mathbf{b}_i + c_i $, for $ i = 1, 2 $, be quadratic functions, where $ \mathbf{A}_i \in \mathbb{C}^{N \times N} $ is Hermitian, $ \mathbf{b}_i \in \mathbb{C}^{N \times 1} $, and $ c_i \in \mathbb{R} $. Thus, the following two statements are equivalent: 
 
 	i) If $ \mathbf{x} $ satisfies  $ h_1(\mathbf{x}) \leq 0 $, then $ h_2(\mathbf{x}) \leq 0 $ also holds.
 
 	ii) There exists $ \alpha \geq 0 $ such that 
 	$ \alpha 	\left[ 
   					\begin{smallmatrix}
   					   	\mathbf{A}_1  	  & \mathbf{b}_1 \\
 	    			\mathbf{b}_1^\mathrm{H}        & c_1
   				  	\end{smallmatrix} 
   			\right]  
   			- 
   			\left[ 
  					\begin{smallmatrix}
 							\mathbf{A}_2   & \mathbf{b}_2 \\
 			    	\mathbf{b}_2^\mathrm{H}         & c_2
 						\end{smallmatrix} 
 			\right]  \succcurlyeq \mathbf{0} 
 	$,
 	where $ \alpha $ is a newly introduced variable.
\end{lemma}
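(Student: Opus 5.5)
The plan is to prove the two implications separately. The direction ii)$\Rightarrow$i) is elementary, and the direction i)$\Rightarrow$ii) is the genuine S-lemma. The statement as written omits a regularity hypothesis, so I would first make the implicit Slater-type assumption explicit: there exists $\hat{\mathbf{x}}$ with $h_1(\hat{\mathbf{x}})<0$. In the paper's application this holds, since $h_1(\boldsymbol{\Delta}\mathbf{h}_u)=\|\boldsymbol{\Delta}\mathbf{h}_u\|_2^2-\epsilon_\mathrm{CSI}^2$ is strictly negative at $\boldsymbol{\Delta}\mathbf{h}_u=\mathbf{0}$ whenever $\epsilon_\mathrm{CSI}>0$. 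Without this assumption, the implication i)$\Rightarrow$ii) can fail in degenerate cases.

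For ii)$\Rightarrow$i), set $\mathbf{M}_i=\left[\begin{smallmatrix}\mathbf{A}_i & \mathbf{b}_i\\ \mathbf{b}_i^\mathrm{H} & c_i\end{smallmatrix}\right]$ and $\mathbf{z}=[\mathbf{x}^\mathrm{T},1]^\mathrm{T}$. Then $h_i(\mathbf{x})=\mathbf{z}^\mathrm{H}\mathbf{M}_i\mathbf{z}$. If $\alpha\mathbf{M}_1-\mathbf{M}_2\succcurlyeq\mathbf{0}$, it follows that $h_2(\mathbf{x})\le\alpha h_1(\mathbf{x})\le 0$ whenever $h_1(\mathbf{x})\le0$, because $\alpha\ge0$.

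For i)$\Rightarrow$ii), I would first homogenize. The claim is that for every $\mathbf{z}=[\mathbf{x}^\mathrm{T},t]^\mathrm{T}\in\mathbb{C}^{N+1}$, $\mathbf{z}^\mathrm{H}\mathbf{M}_1\mathbf{z}\le0$ implies $\mathbf{z}^\mathrm{H}\mathbf{M}_2\mathbf{z}\le0$.
\begin{itemize}
\item For $t\neq0$, this follows from i) by scaling: dividing by $|t|^2$ and absorbing the phase of $t$ into $\mathbf{x}$ reduces it to i) at the point $\mathbf{x}/t$.
\item For $t=0$, I would use a continuity/perturbation argument. If $\mathbf{x}^\mathrm{H}\mathbf{A}_1\mathbf{x}<0$, take $\mathbf{x}/t$ with $t\to0$. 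If $\mathbf{x}^\mathrm{H}\mathbf{A}_1\mathbf{x}=0$, perturb towards $\hat{\mathbf{x}}$ (where Slater is used) to obtain strictly feasible points converging to $\mathbf{z}$, and pass to the limit.
\end{itemize}

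Next, I would invoke convexity of the joint numerical range. By the Hausdorff--Toeplitz theorem (or Dines' theorem for pairs of Hermitian forms), the set
\[
\mathcal{W}=\{(\mathbf{z}^\mathrm{H}\mathbf{M}_1\mathbf{z},\,\mathbf{z}^\mathrm{H}\mathbf{M}_2\mathbf{z}):\mathbf{z}\in\mathbb{C}^{N+1}\}\subset\mathbb{R}^2
\]
is a convex cone. The homogenized implication says $\mathcal{W}$ does not meet the open convex cone $\mathcal{K}=\{(a,b):a<0,\;b>0\}$. A separating hyperplane then gives $(\lambda_1,\lambda_2)\neq(0,0)$ with $\lambda_1,\lambda_2\ge0$ such that $\lambda_2 w_2-\lambda_1 w_1\le0$ for all $(w_1,w_2)\in\mathcal{W}$, that is, $\lambda_1\mathbf{M}_1-\lambda_2\mathbf{M}_2\succcurlyeq\mathbf{0}$.

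Finally, Slater excludes $\lambda_2=0$. Indeed, $\lambda_2=0$ would force $\lambda_1\mathbf{M}_1\succcurlyeq\mathbf{0}$ with $\lambda_1>0$, contradicting $h_1(\hat{\mathbf{x}})<0$. Dividing by $\lambda_2$ gives ii) with $\alpha=\lambda_1/\lambda_2\ge0$.

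The main obstacle is the convexity step. I would cite it as a known result rather than reprove it. It is also the reason the complex setting is convenient: convexity of the joint numerical range holds in $\mathbb{C}^n$ without the $n\ge3$ caveat needed in the real case. The $t=0$ limiting argument is the other delicate point, and it is handled as described above.
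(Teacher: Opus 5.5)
The paper gives no proof of this lemma; it states it as the classical S-procedure and uses it directly to derive $\mathrm{E}_1$, $\mathrm{I}_1$ and $\mathrm{J}_5$. Your argument is the standard Yakubovich--Dines proof and it is correct: homogenization, convexity of the joint range of two Hermitian forms, separation from the open cone $\{a<0,\,b>0\}$, and strict feasibility to exclude $\lambda_2=0$.

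Adding a Slater hypothesis is a genuine correction, not a technicality, because the statement as printed is false. Take $N=1$, $h_1(x)=|x|^2$ and $h_2(x)=x+x^{*}$, i.e., $\mathbf{A}_1=1$, $\mathbf{b}_1=0$, $c_1=0$, $\mathbf{A}_2=0$, $\mathbf{b}_2=1$, $c_2=0$. Condition i) holds, since $h_1(x)\le0$ forces $x=0$ and $h_2(0)=0$. Yet $\alpha\mathbf{M}_1-\mathbf{M}_2=\left[\begin{smallmatrix}\alpha & -1\\ -1 & 0\end{smallmatrix}\right]$ has determinant $-1$ for every $\alpha$, so ii) fails. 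Consequently the paper's uses of the lemma are covered only when the corresponding radius $\epsilon_{\mathrm{CSI}}$, $\epsilon_{\mathrm{AOD}}$ or $\epsilon_{\mathrm{RSI}}$ is strictly positive. Several of the simulated scenarios set these radii to zero.

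Two small points remain.

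\emph{The $t=0$ perturbation needs a phase choice.} A positive real step ``towards'' $\hat{\mathbf{z}}=[\hat{\mathbf{x}}^{\mathrm{T}},1]^{\mathrm{T}}$ need not give strictly feasible points, because
$(\mathbf{z}+\epsilon\hat{\mathbf{z}})^{\mathrm{H}}\mathbf{M}_1(\mathbf{z}+\epsilon\hat{\mathbf{z}})=\mathbf{z}^{\mathrm{H}}\mathbf{M}_1\mathbf{z}+2\,\mathrm{Re}(\epsilon\,\mathbf{z}^{\mathrm{H}}\mathbf{M}_1\hat{\mathbf{z}})+|\epsilon|^2h_1(\hat{\mathbf{x}})$,
and the cross term can be positive. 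Choose the phase of $\epsilon$ so that $\mathrm{Re}(\epsilon\,\mathbf{z}^{\mathrm{H}}\mathbf{M}_1\hat{\mathbf{z}})\le0$. Then the form is at most $|\epsilon|^2h_1(\hat{\mathbf{x}})<0$ for every $\epsilon\neq0$, and the last coordinate equals $\epsilon\neq0$, so the $t\neq0$ case applies. Letting $\epsilon\to0$ gives $\mathbf{z}^{\mathrm{H}}\mathbf{M}_2\mathbf{z}\le0$. This single argument handles both of your $t=0$ subcases.

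\emph{The $n\ge3$ remark is misattributed.} That caveat belongs to Brickman's theorem, which concerns the image of the real unit sphere. Your proof uses Dines' theorem, which concerns the image of all of $\mathbb{R}^n$ and holds for every $n$. The same proof therefore also yields the real S-lemma, so the complex setting is convenient but not essential for the convexity step.
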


\subsection{Addressing the couplings between variables in $ \mathrm{C}_{11} $, $ \mathrm{C}_{14} $, and $ \mathrm{C}_{21} $}

We introduce \textbf{\cref{thm:proposition-3}} to deal with the multiplicative coupling between variables $ \mathbf{w}_s $, $ \mathbf{v}_s $, and $ \tau_{t,s} $, posed by $ \mathrm{C}_{11} $, $ \mathrm{C}_{14} $, and $ \mathrm{C}_{21} $. We reformulate these constraints into an equivalent set of equations and inequalities without altering the original solution space.

\begin{proposition} \label{thm:proposition-3}
	Constraints $ \mathrm{C}_{11} $, $ \mathrm{C}_{14} $, and $ \mathrm{C}_{21} $ can be equivalently written as constraints $ \mathrm{F}_{1} $, $ \mathrm{F}_{2} $, $ \mathrm{F}_{3} $, $ \mathrm{F}_{4} $, $ \mathrm{F}_{5} $, and $ \mathrm{F}_{6} $ (see bottom of this page), where $ \ell_{t,s} (\boldsymbol{\Omega}') \triangleq  \min_{\mathbf{R} \in \mathcal{R}} $ $\frac{ \left| \sum_{c \in \mathcal{L}_\mathrm{rx}} \sum_{b \in \mathcal{L}_\mathrm{tx}} \mathbf{c}_c^\mathrm{H} \mathbf{Z}_\mathrm{rx}^\mathrm{H} \mathbf{A} \left( \theta_t \right) \mathbf{Z}_\mathrm{tx} \mathbf{b}_b \cdot \pi_{b,c,t,s} \right|^2 } { \left| \sum_{c \in \mathcal{L}_\mathrm{rx}} \sum_{b \in \mathcal{L}_\mathrm{tx}}  \mathbf{c}_c^\mathrm{H} \mathbf{Z}_\mathrm{rx}^\mathrm{H} \mathbf{R} \mathbf{Z}_\mathrm{tx} \mathbf{b}_b \cdot \pi_{b,c,t,s} \right|^2 + \sigma_\mathrm{sen}^2 \sum_{c \in \mathcal{L}_\mathrm{rx}} \left\| \mathbf{Z}_\mathrm{rx} \mathbf{c}_c \right\|_2^2 \rho_{c,s} } $ and $ \pi_{b,c,t,s} $ are newly introduced variables.
\end{proposition}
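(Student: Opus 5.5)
The plan is to substitute the codebook representations from $\mathrm{C}_{11}$ and $\mathrm{C}_{14}$ into the sensing SINR (\ref{eqn:sensing-sinr}) and then linearize the resulting triple products of binary variables. With $\mathbf{w}_s = \sum_{b} \mathbf{b}_b \chi_{b,s}$ and $\mathbf{v}_s = \sum_c \mathbf{c}_c \rho_{c,s}$, both the numerator and the RSI term of (\ref{eqn:sensing-sinr}) become bilinear forms $\sum_{c}\sum_{b} \mathbf{c}_c^\mathrm{H}\mathbf{Z}_\mathrm{rx}^\mathrm{H} \mathbf{M}\, \mathbf{Z}_\mathrm{tx}\mathbf{b}_b\, \rho_{c,s}\chi_{b,s}\tau_{t,s}$, with $\mathbf{M} = \mathbf{G}_t$ or $\mathbf{M} = \mathbf{R}$. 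Since $\mathbf{G}_t = \psi_t \mathbf{A}(\theta_t)$, the factor $|\psi_t|^2$ can be pulled out of the numerator. Its placement inside $\mathrm{F}_6$ as written, with the outer $\min_{\psi_t}$ retained, is a notational matter to be handled in the next step. For the noise term, I use $\rho_{c,s}\in\{0,1\}$ and $\mathrm{C}_{13}$ (at most one $\rho_{c,s}$ equals one) to get $\|\mathbf{Z}_\mathrm{rx}\mathbf{v}_s\|_2^2 = \sum_c \|\mathbf{Z}_\mathrm{rx}\mathbf{c}_c\|_2^2 \rho_{c,s}$. The point is that all cross terms $\rho_{c,s}\rho_{c',s}$ with $c\neq c'$ vanish, and $\rho_{c,s}^2=\rho_{c,s}$.

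Next I replace each product $\chi_{b,s}\rho_{c,s}\tau_{t,s}$ by a new variable $\pi_{b,c,t,s}$ and show that $\mathrm{F}_1$–$\mathrm{F}_5$ form an exact linearization of this three-way AND. I will check the truth table. If any factor is $0$, then $\mathrm{F}_1$–$\mathrm{F}_3$ together with $\mathrm{F}_5$ force $\pi_{b,c,t,s}=0$. If all three factors equal $1$, then $\mathrm{F}_4$ gives $\pi_{b,c,t,s}\ge 1$, and $\mathrm{F}_5$ then yields $\pi_{b,c,t,s}=1$. So even though $\pi_{b,c,t,s}$ is declared continuous on $[0,1]$, it is automatically binary and equals the product. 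Conversely, setting $\pi_{b,c,t,s}=\chi_{b,s}\rho_{c,s}\tau_{t,s}$ in any feasible point of the original constraints satisfies $\mathrm{F}_1$–$\mathrm{F}_5$. This gives equivalence of the feasible sets projected onto the original variables.

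I then rewrite $\mathrm{C}_{21}$. When $\tau_{t,s}=0$, the right-hand side is zero and both the original constraint and $\mathrm{F}_6$ hold trivially. This holds under the convention that a $0/\sigma^2$ ratio is $0\ge 0$, and the noise term stays positive or is zero only when $\zeta_s=0$; I will state this convention explicitly. When $\tau_{t,s}=1$, the factor $\tau_{t,s}^2$ in numerator and denominator is $1$, so after the substitution $\mathsf{SINR}_\mathrm{sen}^{t,s}$ coincides with the ratio inside $\ell_{t,s}(\boldsymbol{\Omega}')$. The inner $\min_{\mathbf{R}\in\mathcal{R}}$ is absorbed into the definition of $\ell_{t,s}$, and the remaining minimizations over $\Psi_t$ and $\Theta_t$ stay outside, which yields $\mathrm{F}_6$. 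Finally, the auxiliary vectors $\mathbf{w}_s$ and $\mathbf{v}_s$ no longer appear in these constraints. Where they appear elsewhere, for example in $f_1$, they are handled by $\mathrm{C}_{11}$ and $\mathrm{C}_{14}$ retained in substituted form, so no solution is lost.

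I expect the main obstacle to be the careful handling of the $\tau_{t,s}=0$ and $\zeta_s=0$ degenerate cases, where the SINR ratio is of the form $0/0$, and to keep the minimization order consistent. Everything else is the standard product-linearization truth-table argument.
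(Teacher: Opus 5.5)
Your proposal is correct and follows essentially the same route as the paper. You substitute $\mathrm{C}_{11}$ and $\mathrm{C}_{14}$ into the SINR, collapse the noise term using the one-hot structure of $\rho_{c,s}$ from $\mathrm{C}_{13}$, and linearize $\chi_{b,s}\rho_{c,s}\tau_{t,s}$ by $\pi_{b,c,t,s}$ via $\mathrm{F}_1$--$\mathrm{F}_5$; the paper reaches these constraints through a propositional-calculus (CNF) derivation, while your truth-table check is equivalent and shows a bit more directly that the relaxation $\pi_{b,c,t,s}\in[0,1]$ is automatically integral.

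Your treatment of $|\psi_t|^2$ also matches what the paper actually does. Its appendix writes $\mathrm{F}_6$ with $\mathbf{G}_t=\psi_t\mathbf{A}(\theta_t)$ in the numerator and only factors out $|\psi_t|^2$ in the proof of Proposition~4. So the $\mathbf{A}(\theta_t)$ in the stated $\ell_{t,s}$ should be read that way; otherwise the outer $\min_{\psi_t}$ would be vacuous.
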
 
\begin{proof} 
	For the detailed derivation, refer to \textbf{Appendix \ref{app:proof-proposition-3}}.
\end{proof}

\subsection{Addressing the infinite constraints in $ \mathrm{C}_{19} $ and $ \mathrm{F}_{6} $}

We present \textbf{\cref{thm:proposition-4}} to address the infinite constraint set arising from imperfect \gls{RC}, which affects $ \mathrm{C}_{19} $ and $ \mathrm{F}_{6} $. Specifically, we rearrange these constraints to disentangle the effects of imperfect \gls{RC} from those caused by imperfect \gls{RSI} and \gls{AOD}. After isolating the effect of imperfect \gls{RC}, we observe that it effectively scales and magnifies the nominal \gls{SINR} threshold, leading to a more demanding sensing performance requirement. Notably, the scaled \gls{SINR} threshold admits a closed-form expression, facilitating further analysis and optimization. Overall, this transformation converts the infinite constraint set induced by imperfect \gls{RC} into a more tractable form, leading to an equivalent set of inequalities.
\begin{proposition} \label{thm:proposition-4}
	Constraints $ \mathrm{C}_{19} $ and $ \mathrm{F}_{6} $ can be equivalently rewritten as constraints $ \mathrm{G}_{1} $, shown below, 
		\begin{align} \nonumber
			\mathrm{C}_{19}, \mathrm{F}_{6} \Leftrightarrow \mathrm{G}_{1}: \min_{\theta_t \in \Theta_t} \ell_{t,s} (\boldsymbol{\Omega}') \geq \widebar{\Lambda}_{\mathrm{sinr},t} \cdot \tau_{t,s}, \forall t \in \mathcal{T}, s \in \mathcal{S},
		\end{align}
	where $ \widebar{\Lambda}_{\mathrm{sinr},t} = \frac{\Lambda_{\mathrm{sinr},t}}{\left( \left| \widebar{\psi}_t  \right| - \epsilon_{\mathrm{RC}} \right)^2 } $ and $ \ell_{t,s} (\boldsymbol{\Omega}') \triangleq  \min_{\mathbf{R} \in \mathcal{R}} $ $ \frac{ 
		\left| \sum_{c \in \mathcal{L}_\mathrm{rx}} \sum_{b \in \mathcal{L}_\mathrm{tx}} \mathbf{c}_c^\mathrm{H} \mathbf{Z}_\mathrm{rx}^\mathrm{H}  \mathbf{A} \left( \theta_t \right) \mathbf{Z}_\mathrm{tx} \mathbf{b}_b \cdot \pi_{b,c,t,s} \right|^2 
	} { 
	\left| \sum_{c \in \mathcal{L}_\mathrm{rx}} \sum_{b \in \mathcal{L}_\mathrm{tx}}  \mathbf{c}_c^\mathrm{H} \mathbf{Z}_\mathrm{rx}^\mathrm{H} \mathbf{R} \mathbf{Z}_\mathrm{tx} \mathbf{b}_b \cdot \pi_{b,c,t,s} \right|^2 + \sigma_\mathrm{sen}^2 \sum_{c \in \mathcal{L}_\mathrm{rx}} \left\| \mathbf{Z}_\mathrm{rx}  \mathbf{c}_c \right\|_2^2 \rho_{c,s} 
	} $.
\end{proposition}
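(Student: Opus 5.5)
The plan is to separate the RC from the rest of the sensing SINR, solve the inner minimization over $\psi_t \in \Psi_t$ in closed form, and then absorb the result into the threshold.

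First, recall that in $\mathrm{F}_{6}$ the quantity $\ell_{t,s}(\boldsymbol{\Omega}')$ is the SINR with the factor $\psi_t$ of $\mathbf{G}_t = \psi_t \mathbf{A}(\theta_t)$ removed. Restoring it, the constraint reads
\begin{align} \nonumber
\min_{\psi_t \in \Psi_t} \min_{\theta_t \in \Theta_t} |\psi_t|^2 \, \ell_{t,s}(\boldsymbol{\Omega}') \geq \Lambda_{\mathrm{sinr},t} \cdot \tau_{t,s}.
\end{align}
This holds because the numerator $|\sum\sum \mathbf{c}_c^\mathrm{H}\mathbf{Z}_\mathrm{rx}^\mathrm{H}\psi_t\mathbf{A}(\theta_t)\mathbf{Z}_\mathrm{tx}\mathbf{b}_b\pi_{b,c,t,s}|^2$ factors as $|\psi_t|^2$ times the $\psi_t$-free numerator. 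The denominator and the inner minimization over $\mathbf{R}\in\mathcal{R}$ do not involve $\psi_t$. By Remark~\ref{rmk:clarification-aspect-angle}, $\Psi_t$ does not depend on $\theta_t$, so the two minimizations decouple and can be interchanged:
\begin{align} \nonumber
\min_{\psi_t}\min_{\theta_t} |\psi_t|^2 \ell_{t,s} = \Big(\min_{\psi_t\in\Psi_t}|\psi_t|^2\Big)\cdot \min_{\theta_t\in\Theta_t}\ell_{t,s},
\end{align}
which is valid since $|\psi_t|^2 \ge 0$ and $\ell_{t,s} \ge 0$.

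Second, I compute $\min_{\psi_t\in\Psi_t}|\psi_t|$ over the disk $|\psi_t-\widebar{\psi}_t|\le\epsilon_\mathrm{RC}$ from $\mathrm{C}_{19}$. By the reverse triangle inequality, $|\psi_t| \ge |\widebar{\psi}_t| - |\Delta\psi_t| \ge |\widebar{\psi}_t|-\epsilon_\mathrm{RC}$. Equality is attained by $\Delta\psi_t = -\epsilon_\mathrm{RC}\,\widebar{\psi}_t/|\widebar{\psi}_t|$, which lies in the disk. This requires the standing assumption $|\widebar{\psi}_t|>\epsilon_\mathrm{RC}$. Without it the disk contains $0$, the worst-case SINR is zero, and the constraint is satisfiable only with $\tau_{t,s}=0$. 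I state this assumption explicitly, as it is implicit in the closed form $\widebar{\Lambda}_{\mathrm{sinr},t}$. The minimum of $|\psi_t|^2$ is therefore $(|\widebar{\psi}_t|-\epsilon_\mathrm{RC})^2>0$.

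Finally, I divide both sides by this positive constant. The inequality is preserved, and the right-hand side becomes $\widebar{\Lambda}_{\mathrm{sinr},t}\tau_{t,s}$, which is exactly $\mathrm{G}_{1}$. Every step is an equality or an equivalence, so the reformulation is exact and both directions follow immediately. It also covers the case $\tau_{t,s}=0$, where both sides are trivially satisfied.

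The main obstacle is the first step: justifying that $\psi_t$ factors out cleanly and that the min over $\psi_t$ separates from the mins over $\theta_t$ and $\mathbf{R}$. This relies on $\Psi_t$ being independent of $\theta_t$ and $\mathbf{R}$ (a Cartesian uncertainty set), and on $\psi_t$ entering only the numerator. I will verify the latter by writing out $\mathbf{v}_s$, $\mathbf{w}_s$ and $\tau_{t,s}$ in terms of $\pi_{b,c,t,s}$ as in Proposition~\ref{thm:proposition-3}.
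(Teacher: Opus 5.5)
Your proposal is correct and follows essentially the same route as the paper: factor $|\psi_t|^2$ out of the SINR numerator, move $\min_{\psi_t \in \Psi_t} |\psi_t|^2$ to the right-hand side as a divisor of $\Lambda_{\mathrm{sinr},t}$, and evaluate it at the worst-case error $\Delta\psi_t = -\epsilon_\mathrm{RC}\,\widebar{\psi}_t/|\widebar{\psi}_t|$. The paper only asserts or leaves implicit three points that you make rigorous: the reverse-triangle-inequality argument, the nonnegativity justification for splitting the joint minimum, and the standing assumption $|\widebar{\psi}_t| > \epsilon_\mathrm{RC}$.
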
 
\begin{proof} 
	For the detailed derivation, refer to \textbf{Appendix \ref{app:proof-proposition-4}}.
\end{proof}

\subsection{Addressing the rational nature of the SINR in $ \mathrm{G}_{1} $}

We introduce \textbf{\cref{thm:proposition-6}} to address the rational nature of $ \mathrm{G}_{1} $. This is accomplished by introducing additional variables that separate the numerator and denominator of the \glspl{SINR}, leading to an equivalent set of inequalities that preserve the original solution space of $ \mathrm{G}_{1} $. Importantly, this transformation facilitates the decoupling of effects arising from imperfect \gls{AOD} and imperfect \gls{RSI}, allowing us to address them independently in subsequent steps.

\begin{proposition} \label{thm:proposition-6}
	Constraint $ \mathrm{G}_{1} $ can be equivalently rewritten as constraints $ \mathrm{H}_{1} $, $ \mathrm{H}_{2} $, $ \mathrm{H}_{3} $, and $ \mathrm{H}_{4} $, shown below,
		\begin{align} \nonumber
			\mathrm{G}_{1} \Leftrightarrow
				\begin{cases}
					   	\mathrm{H}_{1}: z_{t,s} \geq 0, \forall t \in \mathcal{T}, s \in \mathcal{S},
					   	\\
					   	\mathrm{H}_{2}:  \hat{\ell}_{t,s} (\boldsymbol{\Omega}') \geq z_{t,s}, \forall t \in \mathcal{T}, s \in \mathcal{S},
					   	\\
					   	\mathrm{H}_{3}: \check{\ell}_{t,s} (\boldsymbol{\Omega}') \leq z_{t,s}, \forall t \in \mathcal{T}, s \in \mathcal{S},  
					   	\\
		   				\mathrm{H}_{4}: z_{t,s} \leq \tau_{t,s} \cdot \ddot{M}_{t}^\mathrm{UB}, \forall t \in \mathcal{T}, s \in \mathcal{S},
				\end{cases}
		\end{align}
		where $ \hat{\ell}_{t,s} (\boldsymbol{\Omega}') \triangleq \frac{1}{\sigma_\mathrm{sen}^2} \min_{\theta_t \in \Theta_t} \big| \sum_{c \in \mathcal{L}_\mathrm{rx}} \sum_{b \in \mathcal{L}_\mathrm{tx}}  \mathbf{c}_c^\mathrm{H} \mathbf{Z}_\mathrm{rx}^\mathrm{H} \mathbf{A} \left( \theta_t \right) $ $ \mathbf{Z}_\mathrm{tx} \mathbf{b}_b \cdot \pi_{b,c,t,s} \big|^2 $, $ \check{\ell}_{t,s} (\boldsymbol{\Omega}') \triangleq \frac{1}{\sigma_\mathrm{sen}^2} \widebar{\Lambda}_{\mathrm{sinr},t} \cdot \tau_{t,s} \cdot \max_{\mathbf{R} \in \mathcal{R}}  $ $ \big| \sum_{c \in \mathcal{L}_\mathrm{rx}} \sum_{b \in \mathcal{L}_\mathrm{tx}}  \mathbf{c}_c^\mathrm{H} \mathbf{Z}_\mathrm{rx}^\mathrm{H} \mathbf{R} \mathbf{Z}_\mathrm{tx} \mathbf{b}_b \cdot \pi_{b,c,t,s} \big|^2 + \widebar{\Lambda}_{\mathrm{sinr},t} \cdot \tau_{t,s} \sum_{c \in \mathcal{L}_\mathrm{rx}} $ $  \left\| \mathbf{Z}_\mathrm{rx} \mathbf{c}_c \right\|_2^2 \rho_{c,s} $, and $ \ddot{M}_{t}^\mathrm{UB} = \max_{b \in \mathcal{L}_\mathrm{tx}} \max_{c \in \mathcal{L}_\mathrm{rx}} \frac{\big| \mathbf{c}_c^\mathrm{H} \mathbf{Z}_\mathrm{rx}^\mathrm{H} \mathbf{A} \left( \theta_t \right) \mathbf{Z}_\mathrm{tx} \mathbf{b}_b \big|^2}{\sigma_\mathrm{sen}^2}$.
\end{proposition}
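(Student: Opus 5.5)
The plan is to prove the equivalence in both directions by a case split on the binary variable $\tau_{t,s}$, exploiting how $\pi_{b,c,t,s}$ vanishes when the target is not served (constraint $\mathrm{F}_{3}$).

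Fix $t$ and $s$. We first rewrite $\mathrm{G}_{1}$ as a ratio inequality. The denominator of $\ell_{t,s}$ is $D(\mathbf{R}) = |\cdot|^2 + \sigma_\mathrm{sen}^2\sum_c \|\mathbf{Z}_\mathrm{rx}\mathbf{c}_c\|_2^2\rho_{c,s}$. When $\tau_{t,s}=1$, $\mathrm{C}_{13}$ gives $\zeta_s=1$ and some $\rho_{c,s}=1$, so $D(\mathbf{R})>0$. The numerator does not depend on $\mathbf{R}$, so the inner minimum over $\mathbf{R}\in\mathcal{R}$ is attained at the worst-case (maximal) denominator; $\mathcal{R}$ is compact, so this maximum exists.

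Hence, for $\tau_{t,s}=1$, $\mathrm{G}_{1}$ is equivalent to
\[
\min_{\theta_t\in\Theta_t} N(\theta_t) \ \geq\ \widebar{\Lambda}_{\mathrm{sinr},t}\,\max_{\mathbf{R}\in\mathcal{R}} D(\mathbf{R}).
\]
The $\min_{\theta_t}$ and $\max_{\mathbf{R}}$ separate because $N$ depends only on $\theta_t$ and $D$ only on $\mathbf{R}$. Dividing by $\sigma_\mathrm{sen}^2$, this reads exactly $\hat{\ell}_{t,s}\geq\check{\ell}_{t,s}$, since $\check{\ell}_{t,s}$ is $\widebar{\Lambda}_{\mathrm{sinr},t}\tau_{t,s}\max_{\mathbf{R}} D(\mathbf{R})/\sigma_\mathrm{sen}^2$ with $\tau_{t,s}=1$.

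When $\tau_{t,s}=0$, $\mathrm{F}_{3}$ and $\mathrm{F}_{5}$ force all $\pi_{b,c,t,s}=0$. Then both sides of $\mathrm{G}_{1}$ equal zero, so $\mathrm{G}_{1}$ holds trivially. The same substitution gives $\hat{\ell}_{t,s}=0$ and $\check{\ell}_{t,s}=0$.

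Next we introduce the auxiliary $z_{t,s}$ as a separator. The standard argument is that $\hat{\ell}\geq\check{\ell}$ holds iff there exists $z$ with $\check{\ell}\leq z\leq\hat{\ell}$, which gives $\mathrm{H}_{2}$ and $\mathrm{H}_{3}$.

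\emph{Forward direction.} Given $\mathrm{G}_{1}$, we set $z_{t,s}=\hat{\ell}_{t,s}$ and verify $\mathrm{H}_{1}$ through $\mathrm{H}_{4}$:
\begin{itemize}
\item $\mathrm{H}_{1}$ holds because $\hat{\ell}_{t,s}$ is a squared modulus, hence nonnegative.
\item $\mathrm{H}_{2}$ holds with equality, and $\mathrm{H}_{3}$ is the reduced form of $\mathrm{G}_{1}$ derived above.
\item For $\mathrm{H}_{4}$ with $\tau_{t,s}=1$, we need $\hat{\ell}_{t,s}\leq\ddot{M}^\mathrm{UB}_t$. By $\mathrm{C}_{10}$, $\mathrm{C}_{13}$ and $\mathrm{F}_{1}$–$\mathrm{F}_{4}$, exactly one $\pi_{b,c,t,s}=1$, namely for the selected pair $(b,c)$. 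So the double sum collapses to a single term $|\mathbf{c}_c^\mathrm{H}\mathbf{Z}_\mathrm{rx}^\mathrm{H}\mathbf{A}(\theta_t)\mathbf{Z}_\mathrm{tx}\mathbf{b}_b|^2/\sigma_\mathrm{sen}^2$, which is bounded by the maximum over all $(b,c)$.
\item For $\mathrm{H}_{4}$ with $\tau_{t,s}=0$, we have $z_{t,s}=0$, so $\mathrm{H}_{4}$ holds.
\end{itemize}

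\emph{Reverse direction.} Given $\mathrm{H}_{1}$–$\mathrm{H}_{4}$, chaining $\mathrm{H}_{3}$ and $\mathrm{H}_{2}$ gives $\check{\ell}\leq z\leq\hat{\ell}$, which recovers $\mathrm{G}_{1}$ in the case $\tau_{t,s}=1$. The case $\tau_{t,s}=0$ is trivial as shown above. The role of $\mathrm{H}_{1}$ and $\mathrm{H}_{4}$ here is only to tighten the big-M bound, and they do not cut off feasible points.

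The main obstacle is the $\mathrm{H}_{4}$ bound. Two issues arise:
\begin{itemize}
\item $\ddot{M}^\mathrm{UB}_t$ is written with $\mathbf{A}(\theta_t)$ for unknown $\theta_t$. I would interpret it at the minimizing (or any fixed) $\theta_t$. Since $z=\hat{\ell}$ already involves $\min_{\theta_t}$, evaluating at that $\theta_t$ yields $\hat{\ell}\leq$ the single-pair value $\leq\ddot{M}^\mathrm{UB}_t$.
\item The argument needs the single-pair reduction, i.e.\ that integrality of $\chi,\rho,\tau$ forces $\pi$ to be their product. This is exactly what Proposition~\ref{thm:proposition-3} establishes and is taken as given.
\end{itemize}
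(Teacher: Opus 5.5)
Your proposal is correct and follows essentially the same route as the paper. You separate $\min_{\theta_t}$, which acts only on the numerator, from $\max_{\mathbf{R}\in\mathcal{R}}$, which acts only on the denominator, and cross-multiply to obtain $\hat{\ell}_{t,s}(\boldsymbol{\Omega}')\geq\check{\ell}_{t,s}(\boldsymbol{\Omega}')$. You then split this by transitivity with the auxiliary $z_{t,s}$ into $\mathrm{H}_{2}$ and $\mathrm{H}_{3}$, and add the big-M constraint $\mathrm{H}_{4}$ so that $z_{t,s}=0$ whenever $\tau_{t,s}=0$. Beyond the paper, you add an explicit case split on $\tau_{t,s}$, a positivity check on the denominator, and an argument that the single active $\pi_{b,c,t,s}$ keeps $\hat{\ell}_{t,s}\leq\ddot{M}_{t}^\mathrm{UB}$ for any fixed angle in $\Theta_t$ (the paper uses the nominal one); these fill in steps the paper only asserts.
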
 
\begin{proof} 
	For the detailed derivation, refer to \textbf{Appendix \ref{app:proof-proposition-5}}.
\end{proof}

\subsection{Addressing the infinite constraints in $ \mathrm{C}_{18} $ and $ \mathrm{H}_{2} $}

We introduce \textbf{\cref{thm:proposition-7}} to address the infinite constraint set arising from imperfect \gls{AOD} in $ \mathrm{C}_{18} $ and $ \mathrm{H}_{2} $. Specifically, we first restructure $ \mathrm{H}_{2} $ and then employ the \emph{S-Procedure}, as detailed in \textbf{Lemma~\ref{lem:s-procedure}}, to transform the infinite constraint set into a finite collection of linear inequalities. The overall procedure ensures that the original solution space is preserved.

\begin{proposition} \label{thm:proposition-7}
	Constraints $ \mathrm{C}_{18} $ and $ \mathrm{H}_{2} $ can be equivalently rewritten as constraints $ \mathrm{I}_{1} $ and $ \mathrm{I}_{2} $, shown below,
	\begin{align*}
	 	\mathrm{C}_{18}, \mathrm{H_{2}} \Leftrightarrow
	 		\begin{cases}
				   	  	 			   	\mathrm{I}_{1}: \left[ 
				   			 			   					\begin{matrix}
				   			 			   					   	i_{t,s}^\mathrm{a}
				   			 			   					   	& 
				   			 			   					   	i_{t,s}^\mathrm{b}
				   			 			   					   	\\
				   			 			   					    i_{t,s}^\mathrm{c}
				   			 			   					    & 
				   			 			   					    i_{t,s}^\mathrm{d}
				   			 			   				  	\end{matrix} 
				   		 			   					\right] 
				   		 					   		\succcurlyeq \mathbf{0}, \forall t \in \mathcal{T}, s \in \mathcal{S},
				   	\\	 					   		
					  \mathrm{I}_{2}: \xi_{t} \geq 0, \forall t \in \mathcal{T}, 
			\end{cases}				
	\end{align*}
	where $ i_{t,s}^\mathrm{a} = \xi_{t} - \widetilde{\mathbf{a}} \left( \widebar{\theta}_t \right)^\mathrm{H} \mathbf{F}_{t,s} \widetilde{\mathbf{a}} \left( \widebar{\theta}_t \right) $, 
	$ i_{t,s}^\mathrm{b} = - \widetilde{\mathbf{a}} \left( \widebar{\theta}_t \right)^\mathrm{H} \mathbf{F}_{t,s} \mathbf{a} \left( \widebar{\theta}_t \right) $, 
	$ i_{t,s}^\mathrm{c} = - \mathbf{a} \left( \widebar{\theta}_t \right)^\mathrm{H} \mathbf{F}_{t,s} \widetilde{\mathbf{a}} \left( \widebar{\theta}_t \right) $, 
	$ i_{t,s}^\mathrm{d} = - \xi_{t} \cdot \epsilon_\mathrm{AOD}^2 - \mathbf{a} \left( \widebar{\theta}_t \right)^\mathrm{H} \mathbf{F}_{t,s} \mathbf{a} \left( \widebar{\theta}_t \right) - z_{t,s} $, $ \mathbf{a} \left( \widebar{\theta}_t \right) = \mathrm{vec} \left( \mathbf{A} \left( \widebar{\theta}_t \right) \right) $, $ \widetilde{\mathbf{a}} \left( \widebar{\theta}_t \right) = \mathrm{vec} \left( \widetilde{\mathbf{A}} \left( \widebar{\theta}_t \right) \right) $, $ \mathbf{F}_{t,s} = -  \sum_{c \in \mathcal{L}_\mathrm{rx}} \sum_{b \in \mathcal{L}_\mathrm{tx}} \frac{1}{\sigma_\mathrm{sen}^2} \mathbf{d}_{b,c} \mathbf{d}_{b,c}^\mathrm{H} \pi_{b,c,t,s} $, $ \mathbf{d}_{b,c} = \left( \mathbf{Z}_\mathrm{tx}^* \mathbf{b}_b^* \right) $ $ \otimes \left(  \mathbf{Z}_\mathrm{rx} \mathbf{c}_c \right) $, and $ \xi_t $ are newly introduced variables.
\end{proposition}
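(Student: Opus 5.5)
The plan is to express the worst-case numerator $\min_{\theta_t\in\Theta_t}|\cdot|^2$ in $\mathrm{H}_{2}$ as a quadratic form in the scalar AOD error $\Delta\theta_t$, and then apply \textbf{Lemma~\ref{lem:s-procedure}} to the resulting one-dimensional implication. The first step is to vectorize the bilinear term. Using $\mathbf{x}^\mathrm{H}\mathbf{M}\mathbf{y}=\mathrm{vec}(\mathbf{M})^\mathrm{T}(\mathbf{y}\otimes\mathbf{x}^*)$, we get
\begin{align*}
\mathbf{c}_c^\mathrm{H}\mathbf{Z}_\mathrm{rx}^\mathrm{H}\mathbf{A}(\theta_t)\mathbf{Z}_\mathrm{tx}\mathbf{b}_b=\mathrm{vec}\left(\mathbf{A}(\theta_t)\right)^\mathrm{T}\left((\mathbf{Z}_\mathrm{tx}\mathbf{b}_b)\otimes(\mathbf{Z}_\mathrm{rx}\mathbf{c}_c)^*\right)=\mathbf{d}_{b,c}^\mathrm{H}\,\mathrm{vec}\left(\mathbf{A}(\theta_t)\right),
\end{align*}
with $\mathbf{d}_{b,c}$ as defined in the statement. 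I will double-check the conjugation pattern against the stated $\mathbf{d}_{b,c}$ rather than trust my index bookkeeping.

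The second step removes the cross terms. By $\mathrm{F}_{1}$--$\mathrm{F}_{5}$, $\mathrm{C}_{10}$, $\mathrm{C}_{13}$, $\mathrm{C}_{8}$ and binarity, for fixed $(t,s)$ at most one $\pi_{b,c,t,s}$ is nonzero, and it equals one. Then $|\sum_{b,c}\mathbf{d}_{b,c}^\mathrm{H}\mathbf{a}\,\pi_{b,c,t,s}|^2=\sum_{b,c}\mathbf{a}^\mathrm{H}\mathbf{d}_{b,c}\mathbf{d}_{b,c}^\mathrm{H}\mathbf{a}\,\pi_{b,c,t,s}=-\mathbf{a}^\mathrm{H}\mathbf{F}_{t,s}\mathbf{a}$, where $\mathbf{a}=\mathrm{vec}(\mathbf{A}(\theta_t))$. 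This is the same decoupling device used in \textbf{Proposition~\ref{thm:proposition-2}}, and it makes the expression linear in the decision variables.

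The third step substitutes the linearization (\ref{eqn:linearized-response-matrix}), $\mathbf{a}=\mathbf{a}(\widebar{\theta}_t)+\widetilde{\mathbf{a}}(\widebar{\theta}_t)\Delta\theta_t$. Constraint $\mathrm{H}_{2}$ with $\mathrm{C}_{18}$ then reads: for all $\Delta\theta_t$ with $|\Delta\theta_t|^2-\epsilon_\mathrm{AOD}^2\le 0$,
\begin{align*}
\Delta\theta_t^*\,\widetilde{\mathbf{a}}^\mathrm{H}\mathbf{F}_{t,s}\widetilde{\mathbf{a}}\,\Delta\theta_t+\Delta\theta_t^*\,\widetilde{\mathbf{a}}^\mathrm{H}\mathbf{F}_{t,s}\mathbf{a}+\mathbf{a}^\mathrm{H}\mathbf{F}_{t,s}\widetilde{\mathbf{a}}\,\Delta\theta_t+\mathbf{a}^\mathrm{H}\mathbf{F}_{t,s}\mathbf{a}+z_{t,s}\le 0 .
\end{align*}
I treat $\Delta\theta_t$ as a complex scalar, $N=1$, which matches the modulus-squared uncertainty set. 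I would flag that restricting to real $\Delta\theta_t$ only makes the robust constraint less stringent, so the S-procedure form is safe; equivalence holds under the complex relaxation implicit in the model. With $h_1=|\Delta\theta_t|^2-\epsilon_\mathrm{AOD}^2$ (so $\mathbf{A}_1=1$, $\mathbf{b}_1=0$, $c_1=-\epsilon_\mathrm{AOD}^2$) and $h_2$ the quadratic above, \textbf{Lemma~\ref{lem:s-procedure}} yields exactly the $2\times 2$ LMI $\mathrm{I}_{1}$ with multiplier $\xi_t\ge0$, which is $\mathrm{I}_{2}$. Strict feasibility, needed for the lossless S-lemma, holds via $\Delta\theta_t=0$ whenever $\epsilon_\mathrm{AOD}>0$.

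The main obstacle is step two. One must justify that collapsing the squared sum into a sum of squares preserves the feasible set even though $\pi_{b,c,t,s}$ is declared continuous in $\mathrm{F}_{5}$. I would argue that $\mathrm{F}_{1}$--$\mathrm{F}_{4}$ force $\pi=\chi\rho\tau\in\{0,1\}$ whenever $\chi,\rho,\tau$ are binary, and that at most one product is active per $(t,s)$. A minor secondary point is that the statement uses a single multiplier $\xi_t$ for all $s$. Since $\mathrm{I}_{1}$ is indexed by $s$, I would check whether this is intended or should read $\xi_{t,s}$. Per-$s$ multipliers are what the equivalence naturally gives, so I would prove the statement with $\xi_{t,s}$ and note the discrepancy.
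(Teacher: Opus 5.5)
Your proposal is correct and is essentially the paper's own proof: vectorize with $\mathbf{d}_{b,c}$ (your conjugation bookkeeping matches the stated $\mathbf{d}_{b,c}$), collapse the squared sum because at most one $\pi_{b,c,t,s}$ equals one (the paper phrases this as Jensen's inequality being tight), substitute (\ref{eqn:linearized-response-matrix}), and apply \textbf{Lemma~\ref{lem:s-procedure}} with $N=1$; the only slip is that your step-two identity drops a factor and should read $\frac{1}{\sigma_\mathrm{sen}^2}\big|\sum_{b,c}\mathbf{d}_{b,c}^\mathrm{H}\mathbf{a}\,\pi_{b,c,t,s}\big|^2=-\mathbf{a}^\mathrm{H}\mathbf{F}_{t,s}\mathbf{a}$, which is the form your step three actually uses. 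Your three caveats (complex versus real $\Delta\theta_t$, the Slater condition via $\epsilon_\mathrm{AOD}>0$, and per-slot multipliers) are valid and not addressed in the paper, and the last one is not minor: in any slot with $\tau_{t,s}=0$ one has $\mathbf{F}_{t,s}=\mathbf{0}$ and $z_{t,s}=0$, so $\mathrm{I}_{1}$ forces $\xi_t\,\epsilon_\mathrm{AOD}^2\le 0$, i.e.\ a shared $\xi_t=0$; after that, the matrix in $\mathrm{I}_{1}$ for a sensed slot has determinant $-z_{t,s}\big|\mathbf{d}_{b,c}^\mathrm{H}\widetilde{\mathbf{a}}(\widebar{\theta}_t)\big|^2/\sigma_\mathrm{sen}^2$ (for the active pair $(b,c)$), which generically rules out every $z_{t,s}>0$ that $\mathrm{H}_{2}$ admits, so the equivalence genuinely needs the multipliers $\xi_{t,s}$ you propose.
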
 
\begin{proof} 
	For the detailed derivation, refer to \textbf{Appendix \ref{app:proof-proposition-6}}.
\end{proof}

\subsection{Addressing the infinite constraints in $ \mathrm{C}_{20} $ and $ \mathrm{H}_{3} $}
 
We introduce \textbf{\cref{thm:proposition-8}} to handle the couplings between variables $ \rho_{c,s} $ and $ \tau_{t,s} $ as well as the infinite constraint set resulting from imperfect \gls{RSI} affecting $ \mathrm{C}_{20} $ and $ \mathrm{H}_{3} $. First, we apply propositional calculus to eliminate the binary couplings, simplifying the constraint structure. Subsequently, we leverage the \emph{S-Procedure}, as detailed in \textbf{Lemma~\ref{lem:s-procedure}}, to convert the infinite constraint set into a finite set of linear inequalities. The overall procedure ensures that the original solution space remains unaltered.
\begin{proposition} \label{thm:proposition-8}
	Constraints $ \mathrm{C}_{20} $ and $ \mathrm{H}_{3} $ can be equivalently rewritten as constraints $ \mathrm{J}_{1} $, $ \mathrm{J}_{2} $, $ \mathrm{J}_{3} $, $ \mathrm{J}_{4} $, $ \mathrm{J}_{5} $, and $ \mathrm{J}_{6} $, shown below,
		\begin{align*}
			\begin{matrix}
			\mathrm{C}_{20} \\ \mathrm{H_{3}} 
			\end{matrix}
		 	\Leftrightarrow
		 		\begin{cases}
		 						\mathrm{J}_{1}: \delta_{c,t,s} \leq \rho_{c,s}, \forall c \in \mathcal{L}_\mathrm{rx}, t \in \mathcal{T}, s \in \mathcal{S},
		 						\\
		 						\mathrm{J}_{2}: \delta_{c,t,s} \leq \tau_{t,s}, \forall c \in \mathcal{L}_\mathrm{rx}, t \in \mathcal{T}, s \in \mathcal{S},
		 						\\
		 						\mathrm{J}_{3}: \delta_{c,t,s} \geq \rho_{c,s} + \tau_{t,s} - 1, \forall c \in \mathcal{L}_\mathrm{rx}, t \in \mathcal{T}, s \in \mathcal{S},
		 						\\
		 						\mathrm{J}_{4}: \delta_{c,t,s} \in \left[ 0, 1 \right], \forall c \in \mathcal{L}_\mathrm{rx}, t \in \mathcal{T}, s \in \mathcal{S},
		 						\\
			   	  			   	\mathrm{J}_{5}: \left[ 
			   			 			   					\begin{matrix}
			   			 			   					   	\mathbf{J}_{t,s}^\mathrm{a}
			   			 			   					   	& 
			   			 			   					   	\mathbf{j}_{t,s}^\mathrm{b}
			   			 			   					   	\\
			   			 			   					    \mathbf{j}_{t,s}^\mathrm{c}
			   			 			   					    & 
			   			 			   					    j_{t,s}^\mathrm{d}
			   			 			   				  	\end{matrix} 
			   		 			   					\right] 
			   		 					   		\succcurlyeq \mathbf{0}, 
														\begin{matrix}
			   			 			   					   	\forall t \in \mathcal{T}, s \in \mathcal{S},
			   			 			   				  	\end{matrix} 
			  	\\ 		 					   		
				\mathrm{J}_{6}: \iota \geq 0, 
				\end{cases}				
		\end{align*}
		where
		$ \mathbf{J}_{t,s}^\mathrm{a} = \iota \mathbf{I} + \upsilon^2 \widebar{\Lambda}_{\mathrm{sinr},t} \mathbf{F}_{t,s} $, 
		$ \mathbf{j}_{t,s}^\mathrm{b} = \upsilon^2 \widebar{\Lambda}_{\mathrm{sinr},t} \mathbf{F}_{t,s} \widebar{\mathbf{r}} $,
		$ \mathbf{j}_{t,s}^\mathrm{c} = \upsilon^2 \widebar{\Lambda}_{\mathrm{sinr},t} \widebar{\mathbf{r}}^\mathrm{H} \mathbf{F}_{t,s}  $,
		$ j_{t,s}^\mathrm{d} = - \iota \cdot \epsilon_\mathrm{RSI}^2 + \upsilon^2 \widebar{\Lambda}_{\mathrm{sinr},t} \widebar{\mathbf{r}}^\mathrm{H} \mathbf{F}_{t,s} \widebar{\mathbf{r}} - \widebar{\Lambda}_{\mathrm{sinr},t} \sum_{c \in \mathcal{L}_\mathrm{rx}} \left\| \mathbf{Z}_\mathrm{rx} \mathbf{c}_c \right\|_2^2 \delta_{c,t,s} + z_{t,s} $, 
		$ \mathbf{F}_{t,s} = - \sum_{c \in \mathcal{L}_\mathrm{rx}} \sum_{b \in \mathcal{L}_\mathrm{tx}} \frac{1}{\sigma_\mathrm{sen}^2} \mathbf{d}_{b,c} \mathbf{d}_{b,c}^\mathrm{H} \pi_{b,c,t,s} $, and $ \mathbf{d}_{b,c} = \left( \mathbf{Z}_\mathrm{tx}^* \mathbf{b}_b^* \right) \otimes \left(  \mathbf{Z}_\mathrm{rx} \mathbf{c}_c \right) $, while $ \delta_{c,t,s} $ and $ \iota $ are newly introduced variables.
\end{proposition}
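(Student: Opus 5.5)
We first rewrite $\mathrm{H}_{3}$ so that the worst case over $\mathbf{R}\in\mathcal{R}$ becomes a single quadratic inequality in $\boldsymbol{\Delta}\mathbf{R}$, and then apply the \emph{S-Procedure} of \textbf{Lemma~\ref{lem:s-procedure}}. Two kinds of coupling must be removed first.

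\textbf{Step 1: decouple the binary products.} In $\check{\ell}_{t,s}(\boldsymbol{\Omega}')$ the noise term contains the product $\tau_{t,s}\rho_{c,s}$ of two binaries. We replace it by $\delta_{c,t,s}$ and impose the standard linearization of a logical AND, $\mathrm{J}_{1}$--$\mathrm{J}_{4}$. The first three constraints force $\delta_{c,t,s}=\rho_{c,s}\tau_{t,s}$ whenever the binaries are integral, so relaxing $\delta_{c,t,s}$ to $[0,1]$ loses nothing.

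The interference term is $\tau_{t,s}\,|\sum_{c}\sum_{b}\cdots\pi_{b,c,t,s}|^2$. By $\mathrm{F}_{3}$ we have $\pi_{b,c,t,s}\le\tau_{t,s}$, so whenever $\tau_{t,s}=0$ every $\pi_{b,c,t,s}$ vanishes. Hence the extra factor $\tau_{t,s}$ can be dropped, using $\tau_{t,s}^2=\tau_{t,s}$ and the fact that at most one $\pi_{b,c,t,s}$ equals one.

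\textbf{Step 2: vectorize the interference term.} We use $\mathbf{c}^\mathrm{H}\mathbf{Z}_\mathrm{rx}^\mathrm{H}\mathbf{R}\mathbf{Z}_\mathrm{tx}\mathbf{b} = \mathbf{d}_{b,c}^\mathrm{H}\mathbf{r}$, with $\mathbf{r}=\mathrm{vec}(\mathbf{R})=\upsilon(\widebar{\mathbf{r}}+\boldsymbol{\Delta}\mathbf{r})$ and $\|\boldsymbol{\Delta}\mathbf{r}\|_2=\|\boldsymbol{\Delta}\mathbf{R}\|_\mathrm{F}$; this is the same identity $\mathrm{vec}(\mathbf{X}\mathbf{Y}\mathbf{Z})=(\mathbf{Z}^\mathrm{T}\otimes\mathbf{X})\mathrm{vec}(\mathbf{Y})$ used for Proposition~\ref{thm:proposition-7}. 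Because the $\pi_{b,c,t,s}$ are binary in effect, with at most one nonzero per $(t,s)$, the square of the sum equals the sum of the squares:
\begin{align*}
\Big|\sum_{c}\sum_{b}\mathbf{d}_{b,c}^\mathrm{H}\mathbf{r}\,\pi_{b,c,t,s}\Big|^2=\sum_{c}\sum_{b}\mathbf{r}^\mathrm{H}\mathbf{d}_{b,c}\mathbf{d}_{b,c}^\mathrm{H}\mathbf{r}\,\pi_{b,c,t,s}=-\sigma_\mathrm{sen}^2\,\mathbf{r}^\mathrm{H}\mathbf{F}_{t,s}\mathbf{r}.
\end{align*}
This is the main obstacle, because equivalence relies on the exclusivity of the $\pi$ variables. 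It must be justified from $\mathrm{C}_{10}$, $\mathrm{C}_{13}$, $\mathrm{C}_{8}$ and $\mathrm{F}_{1}$--$\mathrm{F}_{4}$, which show that exactly one triple $(b,c,t)$ is active per timeslot. The identity is also what makes the resulting expression linear in $\pi$.

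\textbf{Step 3: form the implication and apply the S-Procedure.} With $\max_{\mathbf{R}}$ moved outside, $\mathrm{H}_{3}$ becomes a requirement that holds for all $\boldsymbol{\Delta}\mathbf{r}$. Written with $h_1(\boldsymbol{\Delta}\mathbf{r})=\boldsymbol{\Delta}\mathbf{r}^\mathrm{H}\boldsymbol{\Delta}\mathbf{r}-\epsilon_\mathrm{RSI}^2\le 0$, it reads
\begin{align*}
h_2(\boldsymbol{\Delta}\mathbf{r})=-\upsilon^2\widebar{\Lambda}_{\mathrm{sinr},t}(\widebar{\mathbf{r}}+\boldsymbol{\Delta}\mathbf{r})^\mathrm{H}\mathbf{F}_{t,s}(\widebar{\mathbf{r}}+\boldsymbol{\Delta}\mathbf{r})+\widebar{\Lambda}_{\mathrm{sinr},t}\sum_{c}\|\mathbf{Z}_\mathrm{rx}\mathbf{c}_c\|_2^2\delta_{c,t,s}-z_{t,s}\le 0.
\end{align*}
Here the $1/\sigma_\mathrm{sen}^2$ in $\check{\ell}_{t,s}$ cancels against the $\sigma_\mathrm{sen}^2$ hidden in $\mathbf{F}_{t,s}$.

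Expanding $h_2$ identifies its coefficients: $\mathbf{A}_2=-\upsilon^2\widebar{\Lambda}_{\mathrm{sinr},t}\mathbf{F}_{t,s}$, $\mathbf{b}_2=-\upsilon^2\widebar{\Lambda}_{\mathrm{sinr},t}\mathbf{F}_{t,s}\widebar{\mathbf{r}}$, and a constant term $c_2$. The function $h_1$ has $\mathbf{A}_1=\mathbf{I}$, $\mathbf{b}_1=\mathbf{0}$ and $c_1=-\epsilon_\mathrm{RSI}^2$.

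Lemma~\ref{lem:s-procedure}, applied with a multiplier $\iota\ge0$, gives exactly $\mathrm{J}_{5}$ and $\mathrm{J}_{6}$. Strict feasibility of $h_1$ holds at $\boldsymbol{\Delta}\mathbf{r}=\mathbf{0}$ whenever $\epsilon_\mathrm{RSI}>0$, so the equivalence is exact. Finally, the multiplier can be shared across $(t,s)$ as the single variable $\iota$, matching the statement.
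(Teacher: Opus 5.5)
Your route is the paper's, only reordered. You linearize $\rho_{c,s}\tau_{t,s}$ through $\mathrm{J}_1$--$\mathrm{J}_4$, absorb $\tau_{t,s}$ into $\pi_{b,c,t,s}$, vectorize with $\mathbf{d}_{b,c}$, collapse the squared sum using exclusivity of the $\pi_{b,c,t,s}$, and apply Lemma~\ref{lem:s-procedure} with $\mathbf{A}_1=\mathbf{I}$, $\mathbf{b}_1=\mathbf{0}$, $c_1=-\epsilon_\mathrm{RSI}^2$. Your $\mathbf{A}_2$, $\mathbf{b}_2$, $c_2$ are correct and reproduce the blocks of $\mathrm{J}_5$ exactly. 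The strict-feasibility remark is a real improvement, since the lemma as stated in the paper omits that hypothesis.

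\textbf{The gap is your final sentence.} The S-procedure is applied to one implication per pair $(t,s)$, so it yields a multiplier $\iota_{t,s}\geq 0$ for each pair. Imposing one common $\iota$ is only sufficient; it would be equivalent only if the sets of admissible multipliers had a common point, and here they do not.

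\emph{Inactive pairs force $\iota=0$.} Take any pair with $\tau_{t,s}=0$. Then $\mathrm{F}_3$ and $\mathrm{F}_5$ give $\pi_{b,c,t,s}=0$, so $\mathbf{F}_{t,s}=\mathbf{0}$. Next, $\mathrm{J}_2$ and $\mathrm{J}_4$ give $\delta_{c,t,s}=0$, and $\mathrm{H}_1$ and $\mathrm{H}_4$ give $z_{t,s}=0$. So $\mathrm{J}_5$ reduces to $\mathrm{diag}(\iota\mathbf{I},-\iota\epsilon_\mathrm{RSI}^2)\succcurlyeq\mathbf{0}$, which forces $\iota=0$ when $\epsilon_\mathrm{RSI}>0$.

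\emph{Active pairs need $\iota>0$.} For a sensed pair served by codewords $\mathbf{b}_i$, $\mathbf{c}_j$, the top-left block is $\mathbf{J}_{t,s}^\mathrm{a}=\iota\mathbf{I}-\upsilon^2\widebar{\Lambda}_{\mathrm{sinr},t}\sigma_\mathrm{sen}^{-2}\mathbf{d}_{i,j}\mathbf{d}_{i,j}^\mathrm{H}$. This requires $\iota\geq\upsilon^2\widebar{\Lambda}_{\mathrm{sinr},t}\|\mathbf{d}_{i,j}\|_2^2/\sigma_\mathrm{sen}^2>0$ whenever $\upsilon>0$.

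\emph{Consequence.} Inactive pairs exist in essentially every instance; for $T\geq 2$ they always exist, since a slot senses at most one target. So with a single $\iota$ the system is infeasible as soon as $\upsilon>0$, $\epsilon_\mathrm{RSI}>0$ and some target is sensed, even when $\mathrm{C}_{20}$ and $\mathrm{H}_3$ are satisfiable.

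What your argument actually proves is the equivalence with per-pair multipliers $\iota_{t,s}\geq 0$ in $\mathrm{J}_5$ and $\mathrm{J}_6$. The shared $\iota$ in the statement, which the paper's proof also adopts without justification, is harmless only in degenerate cases such as $\upsilon=0$.
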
 
\begin{proof} 
	For the detailed derivation, refer to \textbf{Appendix \ref{app:proof-proposition-7}}.
\end{proof}

\subsection{Addressing the vector nature of $ \mathbf{f} \left( \boldsymbol{\Omega} \right) $ and its dependence on $ \mathrm{C}_{11} $ and $ \mathrm{C}_{14} $}

We introduce \textbf{\cref{thm:proposition-9}} to first reformulate the vector-valued function $ \mathbf{f} \left( \boldsymbol{\Omega} \right) $ into a scalarized form using carefully designed weights $ \eta_1 $ and $ \eta_2 $. These weights are constructed to ensure that $ \eta_1 f_1 \left( \boldsymbol{\Omega} \right) $ and $ \eta_2 f_2 \left( \boldsymbol{\Omega} \right) $ map to distinct, non-overlapping intervals, with the explicit intent of prioritizing the minimization of time over energy consumption as a special case\footnote{Prioritization of energy minimization can be addressed by adjusting the weights $ \eta_1 $ and $ \eta_3 $. While this case is not the primary focus of our study, its impact is examined in \textbf{Appendix~\ref{app:additional-scenario-viii}}.}. Subsequently, we analyze the quadratic terms in $ f_1 \left( \boldsymbol{\Omega} \right) $ which introduce complexity through their quadratic dependence on both $ \mathbf{w}_s $ and $ \mathbf{v}_s $. By exploiting the binary-variable dependence of $ \mathbf{w}_s $ and $ \mathbf{v}_s $, as specified in constraints $ \mathrm{C}_{11} $ and $ \mathrm{C}_{14} $, we transform $ f_1 \left( \boldsymbol{\Omega} \right) $ into an equivalent linear expression that is more tractable.
\begin{proposition} \label{thm:proposition-9}
	Vector function $ \mathbf{f} \left( \boldsymbol{\Omega} \right) $ and constraints $ \mathrm{C}_{11} $, $ \mathrm{C}_{14} $ is rewritten as $ f' \left( \boldsymbol{\Omega} \right) $, shown below,
	\begin{align*}
			\mathbf{f} \left( \boldsymbol{\Omega} \right), \mathrm{C}_{11}, \mathrm{C}_{14} \Leftrightarrow f' \left( \boldsymbol{\Omega}' \right) \triangleq \eta_1  f'_1 \left( \boldsymbol{\Omega}' \right) + \eta_2 f_2' \left( \boldsymbol{\Omega}' \right), 
	\end{align*}
	where the coefficients $ \eta_1 $ and $ \eta_2 $ are weights that determine the relative priorities of $ f_1' \left( \boldsymbol{\Omega}' \right) $ and $ f_2' \left( \boldsymbol{\Omega}' \right) $, respectively. Here, $ f_1' \left( \boldsymbol{\Omega}' \right) = S_\mathrm{dur} \sum_{s \in \mathcal{S}} \big( \sum_{b \in \mathcal{L}_\mathrm{tx}}  \left\| \mathbf{b}_b \right\|_2^2 \cdot \chi_{b,s} + \sum_{c \in \mathcal{L}_\mathrm{rx}} \left\| \mathbf{c}_c \right\|_2^2 \cdot \rho_{c,s} \big) $ and $ f_2' \left( \boldsymbol{\Omega}' \right) = S_\mathrm{dur} \sum_{s \in \mathcal{S}} \omega_{s} \cdot \gamma_s $.  
\end{proposition}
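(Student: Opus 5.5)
We prove the two parts of the statement separately: first the rewriting of the quadratic energy terms of $ f_1 $ using the binary structure in $ \mathrm{C}_{11} $ and $ \mathrm{C}_{14} $, then the scalarization of $ \mathbf{f} $ by lexicographic weights.

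\emph{Linearizing $ f_1 $.} Fix any feasible point and a timeslot $ \mathsf{S}_s $. By $ \mathrm{C}_{9} $ and $ \mathrm{C}_{10} $, the variables $ \chi_{b,s} $ are binary and sum to $ \gamma_s \in \{0,1\} $. Hence at most one of them equals one. Substituting $ \mathrm{C}_{11} $ gives
\begin{align*}
\left\| \mathbf{w}_s \right\|_2^2 = \textstyle\sum_{b} \sum_{b'} \mathbf{b}_b^\mathrm{H} \mathbf{b}_{b'} \chi_{b,s} \chi_{b',s}.
\end{align*}
For $ b \neq b' $ we have $ \chi_{b,s}\chi_{b',s} = 0 $, because at most one index is active. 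For $ b = b' $ we have $ \chi_{b,s}^2 = \chi_{b,s} $, because the variable is binary. Therefore $ \left\| \mathbf{w}_s \right\|_2^2 = \sum_{b} \left\| \mathbf{b}_b \right\|_2^2 \chi_{b,s} $. The same argument, using $ \mathrm{C}_{12} $, $ \mathrm{C}_{13} $ (with $ \zeta_s $ binary) and $ \mathrm{C}_{14} $, gives $ \left\| \mathbf{v}_s \right\|_2^2 = \sum_{c} \left\| \mathbf{c}_c \right\|_2^2 \rho_{c,s} $.

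Summing over $ s $ and multiplying by $ S_\mathrm{dur} $ yields $ f_1 = f_1' $. Also $ f_2 = f_2' $ holds trivially. The variables $ \mathbf{w}_s, \mathbf{v}_s $ appear elsewhere only through the already-reformulated constraints (Propositions~\ref{thm:proposition-2}--\ref{thm:proposition-8}). Thus $ \mathrm{C}_{11} $ and $ \mathrm{C}_{14} $ can be dropped as definitional equalities without changing the feasible set in the remaining variables.

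\emph{Scalarization.} The vector objective is to be read lexicographically: minimize $ f_2 $ first, then minimize $ f_1 $ among the minimizers of $ f_2 $. We must choose $ \eta_1, \eta_2 > 0 $ such that any decrease in $ \eta_2 f_2' $ outweighs any possible change in $ \eta_1 f_1' $.

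\begin{itemize}
\item \emph{Resolution of $ f_2' $.} Write $ \omega_s = \Delta_0 + (s-1)\Delta_\omega $ as in Lemma~\ref{lem:lemma-weights}. The values of $ f_2' $ lie on a finite set. Two distinct values differ by at least some $ \delta_2 > 0 $; when $ \Delta_0 $ and $ \Delta_\omega $ are commensurate, this is $ S_\mathrm{dur} $ times a gcd-type quantity, e.g.\ at least $ S_\mathrm{dur}\min(\Delta_\omega, \ldots) $. In general, one takes the minimum positive gap over the finite set of subsets $ \{\gamma_s\} $.
\item \emph{Range of $ f_1' $.} By the linear form above, $ 0 \le f_1' \le f_1^{\max} \triangleq S_\mathrm{dur} S \big( \max_b \|\mathbf{b}_b\|_2^2 + \max_c \|\mathbf{c}_c\|_2^2 \big) $, using $ \mathrm{C}_{4} $.
\item \emph{Choice of weights.} Pick $ \eta_1 = 1 $ and $ \eta_2 > f_1^{\max}/\delta_2 $.
\end{itemize}

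With this choice, suppose $ f_2'(\boldsymbol{\Omega}_a') < f_2'(\boldsymbol{\Omega}_b') $ for two feasible points. Then $ f_2'(\boldsymbol{\Omega}_b') - f_2'(\boldsymbol{\Omega}_a') \ge \delta_2 $, so
\begin{align*}
\eta_2 \big( f_2'(\boldsymbol{\Omega}_b') - f_2'(\boldsymbol{\Omega}_a') \big) \ge \eta_2 \delta_2 > f_1^{\max} \ge f_1'(\boldsymbol{\Omega}_a') - f_1'(\boldsymbol{\Omega}_b'),
\end{align*}
which gives $ f'(\boldsymbol{\Omega}_a') < f'(\boldsymbol{\Omega}_b') $. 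So every minimizer of $ f' $ minimizes $ f_2' $ first. When $ f_2' $ values tie, $ f' $ orders points by $ f_1' $. Hence the minimizers of $ f' $ coincide with the lexicographic minimizers of $ \mathbf{f} $, i.e.\ the two images occupy disjoint intervals as claimed.

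\emph{Main obstacle.} The delicate step is establishing a positive minimum gap $ \delta_2 $ for arbitrary real $ \Delta_0, \Delta_\omega $. Finiteness of the feasible set of $ \gamma $ guarantees that such a gap exists. For an explicit closed-form $ \eta_2 $, I would first try the bound obtained by assuming rational (in particular, integer-multiple) weights. Failing that, I would define $ \delta_2 $ implicitly as the minimum positive gap over the finite set of attainable $ f_2' $ values.
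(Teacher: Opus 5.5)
Your proposal is correct. The linearization of $f_1$ matches the paper's: substitute $\mathrm{C}_{11}$ and $\mathrm{C}_{14}$, use $\mathrm{C}_{10}$ and $\mathrm{C}_{13}$ to get at most one active codeword per slot, and drop the cross terms.

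The scalarization, however, follows a genuinely different route.

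\emph{The paper's route.} It only requires the images of $\eta_1 f_1$ and $\eta_2 f_2$ to be disjoint. It compares the smallest nonzero value of $\eta_2 f_2$ (one active slot, $\eta_2 S_\mathrm{dur}\Delta_0$) with the largest value of $\eta_1 f_1$ ($S$ slots at full transmit and receive power). This gives $\eta_2 > \eta_1 S(P_\mathrm{tx}^\mathrm{max}+P_\mathrm{rx}^\mathrm{max})/\Delta_0$.

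\emph{Your route.} You weigh the range of $f_1'$ against $\eta_2$ times the smallest positive gap $\delta_2$ between attainable values of $f_2'$. That is exactly the condition that makes $f'$ order points lexicographically, time first, and you check this ordering explicitly.

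\emph{Why the difference matters.} Disjoint images do not give lexicographic ordering by themselves. Two allocations can differ in $f_2$ by only $S_\mathrm{dur}\Delta_\omega$ (active sets $\{1,2\}$ versus $\{1,3\}$), or by $S_\mathrm{dur}|3\Delta_\omega-\Delta_0|$ (active sets $\{5\}$ versus $\{1,2\}$). So for general $\Delta_0,\Delta_\omega$, the paper's extreme-value comparison does not show that a smaller $f_2$ always wins.

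\emph{When the two agree.} The thresholds coincide whenever every $\omega_s$ is an integer multiple of $\Delta_0$, e.g.\ $\Delta_0=\Delta_\omega=2$ as in the simulations, since then $\delta_2=S_\mathrm{dur}\Delta_0$. The paper's bound can also be justified in general by an argument it leaves implicit:
\begin{itemize}
\item the original problem is invariant under permuting timeslots;
\item hence, with $\Delta_\omega>0$, every minimizer of $f'$ occupies a contiguous prefix $\{1,\dots,n\}$;
\item prefixes of different lengths differ in $f_2$ by at least $S_\mathrm{dur}\Delta_0$.
\end{itemize}

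\emph{Trade-off.} Your argument holds for arbitrary $\Delta_0,\Delta_\omega$ and needs no slot-permutation symmetry. The price is a threshold on $\eta_2$ that is only implicit, and can be very large, when near-coincidences among the $\omega_s$ make $\delta_2$ tiny. The paper's argument gives a clean closed-form $\eta_2$, but its rigorous justification rests on that unstated extra structure.
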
 
\begin{proof} 
	For the detailed derivation, refer to \textbf{Appendix \ref{app:proof-proposition-8}}.
\end{proof}

\subsection{Adding custom cutting planes to reduce binary search burden}
To alleviate the burden of binary search, we introduce a set of problem-specific cutting planes, inspired by the approaches in \cite{cheng2015:joint-discrete-rate-adaptation-downlink-beamforming-using-mixed-integer-conic-programming, abanto2024:radio-resource-management-design-RSMA-optimization-beamforming-user-admission-discrete-continuous-rates-imperfect-sic}, but adapted to the structure of our problem. These tailored constraints, denoted as $ \mathrm{K}_{1} $ and $ \mathrm{K}_{2} $, eliminate infeasible or redundant solutions\footnote{Cutting planes are optional but they are highly effective in accelerating binary search by eliminating infeasible and redundant solutions.}. 

We add constraint $ \mathrm{K}_{1} $ to impose an upper bound $ \dddot{M}_{u}^\mathrm{UB} $ on the numerator of the \gls{SNR}, as shown below,
\begin{align*}  
	& \mathrm{K}_{1}: \left| \mathbf{h}_u^\mathrm{H} \mathbf{Z}_\mathrm{tx} \mathbf{w}_s \right|^2 \leq \dddot{M}_{u}^\mathrm{UB}, \forall u \in \mathcal{U}, s \in \mathcal{S},  
\end{align*}
where $ \dddot{M}_{u}^\mathrm{UB} = \big( \left\| \mathbf{Z}_\mathrm{tx}^\mathrm{H} \widebar{\mathbf{h}}_u \right\|_2 + \epsilon_{\mathrm{CSI}} \left\| \mathbf{Z}_\mathrm{tx} \right\|_\mathrm{F} \big)^2 P_\mathrm{tx}^\mathrm{max} $, and $ P_\mathrm{tx}^\mathrm{max} $ is the maximum transmit power of the \gls{BS}, i.e, $ P_\mathrm{tx}^\mathrm{max} = \max_{b \in \mathcal{L}_\mathrm{tx}} \left\| \mathbf{b}_b \right\|^2_2 $. Here, $ \dddot{M}_{u}^\mathrm{UB} $ is an upper bound on the \gls{LHS} of $ \mathrm{K}_{1} $, obtained by applying the triangle inequality and the Cauchy-Schwarz inequality.

Additionally, we include constraint $ \mathrm{K}_{2} $ to address the inherent redundancy in the temporal ordering of users and targets, as shown below, 
\begin{align*}  
	& \mathrm{K}_{2}: \widebar{\ell}_{s+1} (\boldsymbol{\Omega}') \geq \widebar{\ell}_{s} (\boldsymbol{\Omega}'), \forall s \in \mathcal{S} \setminus \{ S \},  
\end{align*}
where $ \widebar{\ell}_{s} (\boldsymbol{\Omega}') \triangleq \sum_{b \in \mathcal{L}_\mathrm{tx}} \left\| \mathbf{b}_b \right\|_2^2 \cdot \chi_{b,s} + \sum_{c \in \mathcal{L}_\mathrm{rx}} \left\| \mathbf{c}_c \right\|_2^2 \cdot \rho_{c,s} $ represents the allocated power in $ \mathsf{S}_s $. Since the sequence in which users or targets are served over time does not affect the final outcome, it is beneficial to eliminate equivalent permutations.  To this end, we enforce a non-increasing ordering of power allocation across timeslots. By requiring power to decrease or remain constant over time, we prune equivalent configurations that differ only in their ordering, thus significantly reducing the solution space. For the detailed derivation leading to $ \mathrm{K}_{1} $ and $ \mathrm{K}_{2} $, refer to \textbf{Appendix~\ref{app:derivation-cutting-planes}}. 


The transformed problem $ \mathcal{P}' \left( \boldsymbol{\Omega}' \right) $, which constitutes a \gls{MISDP}, is presented below, while the full formulation of $ \mathcal{P}' \left( \boldsymbol{\Omega}' \right) $, including the explicit set of constraints, is provided in \textbf{Appendix \ref{app:transformed-problem}} for further reference.
\begin{align*} 
	\mathcal{P}' \left( \boldsymbol{\Omega}' \right): & \min_{
			\substack{ \boldsymbol{\Omega}' }
	} 
	& & f'\left( \boldsymbol{\Omega}' \right)
	\\
	& ~ \mathrm{s.t.} & & \mathrm{C}_{1}, \mathrm{C}_{2}, \mathrm{C}_{4}, \mathrm{C}_{5}, \mathrm{C}_{6}, \mathrm{C}_{7}, \mathrm{C}_{8}, \mathrm{C}_{9}, \mathrm{C}_{10}, 
	\\
	& & & \mathrm{C}_{12}, \mathrm{C}_{13}, \mathrm{C}_{17}, \mathrm{C}_{22}, \mathrm{D}_{1}, \mathrm{D}_{2}, \mathrm{D}_{3}, \mathrm{D}_{4}, 
	\\
	& & & \mathrm{E}_{1}, \mathrm{E}_{2}, \mathrm{F}_{1}, \mathrm{F}_{2}, \mathrm{F}_{3}, \mathrm{F}_{4}, \mathrm{F}_{5}, \mathrm{H}_{1}, \mathrm{H}_{4},
	\\ 
	& & &  \mathrm{I}_{1}, \mathrm{I}_{2}, \mathrm{J}_{1}, \mathrm{J}_{2}, \mathrm{J}_{3}, \mathrm{J}_{4}, \mathrm{J}_{5}, \mathrm{J}_{6}, \mathrm{K}_{1}, \mathrm{K}_{2}.
\end{align*}

The computational complexity of solving $ \mathcal{P}' \left( \boldsymbol{\Omega}' \right) $ is highly dependent on the specific cuts, relaxations, and built-in heuristics employed by the solvers, in this case, \texttt{MOSEK}, \texttt{CVX}, and \texttt{YALMIP}. While it is not feasible to determine the exact computational complexity due to these factors, we provide an upper bound represented by the worst-case complexity, denoted by  $ \mathcal{C}_\mathrm{complexity} = \mathcal{O} \left(  \sum_{y = 0}^{Y}\frac{X!}{(X-y)!} \left(  L_\mathrm{tx}^{X+y}  L_\mathrm{rx}^{T} \right)  \left( m^2 n^2 + m n^3 + n^6 \right) \right) $, where $ X = \max \left\lbrace U, T \right\rbrace $, $ Y = \min \left\lbrace U, T \right\rbrace $, $ m = US + 2TS $, and $ n = N_\mathrm{tx} N_\mathrm{rx} + N_\mathrm{tx} + 2 $.

\begin{table*}[t!]
	\setlength\tabcolsep{3.8pt} 
	\renewcommand{\arraystretch}{1.8}
	\tiny
	\centering
	\caption{Simulation parameters.}
	\begin{tabular}{|c|| c c c c c | c c | c c | c c | c c c c | c c |}
		\hline
		Scenario
		& $ U $
		& $ T $
		& $ S_\mathrm{com} $ 
		& $ S_\mathrm{sen} $
		& $ S $
		& $ \beta_u $ 
		& $ \theta_t \mid \widebar{\theta}_t $ 
		& $ r_u $ 
		& $ \psi_t \mid \widebar{\psi}_t $
		& $ \Upsilon_\mathrm{snr} $
		& $ \Lambda_\mathrm{sinr} $
		& $ \frac{\epsilon_{\mathrm{CSI}}}{\sigma_\mathrm{com}} $ 
		& $ \epsilon_{\mathrm{AOD}} $
		& $ \frac{\epsilon_\mathrm{RC}}{\widebar{\psi}} $
		& $ \frac{\epsilon_\mathrm{RSI}}{\left\| \widebar{\mathbf{R}} \right\|_\mathrm{F} } $
		& $ \upsilon $
		& $ \widebar{d}_c $
		\\
		\hline  
		I 
		& $ 1 $ 
		& $ 1 $ 
		& $ 1 $ 
		& $ 1 $ 
		& $ 1 $ 
		& $ 90 $ 
		& \multicolumn{1}{>{\columncolor{DodgerBlue1!20}}c|}{\tabularCenterstack{c}{$ \left[ 86, 134 \right] $}} 
		& $ 50 $ 
		& $ 10^{-3} $ 
		& $ 80 $ 
		& $ 2 $ 
		& $ 0 $ 
		& $ 0 $ 
		& $ 0 $ 
		& $ 0 $ 
		& $ 0 $ 
		& $ - $ 
		\\
		\hline 
		II  
		& $ 1 $ 
		& $ 1 $ 
		& $ 1 $ 
		& $ 1 $ 
		& \multicolumn{1}{>{\columncolor{DodgerBlue1!20}}c|}{\tabularCenterstack{c}{$ 1, 2 $}} 
		& $ 90 $ 
		& \multicolumn{1}{>{\columncolor{DodgerBlue1!20}}c|}{\tabularCenterstack{c}{$ 90, \cdots, 130 $}} 
		& $ 50 $ 
		& $ 10^{-3} $ 
		& \multicolumn{1}{>{\columncolor{DodgerBlue1!20}}c}{\tabularCenterstack{c}{$ \left[ 10, 150 \right] $}} 
		& \multicolumn{1}{>{\columncolor{DodgerBlue1!20}}c|}{\tabularCenterstack{c}{$ \left[ 0, 15 \right] $}} 
		& $ 0 $ 
		& $ 0 $ 
		& $ 0 $ 
		& $ 0 $ 
		& $ 0 $ 
		& $ - $ 
		\\
		\hline  
		III 
		& $ 1 $ 
		& $ 1 $ 
		& $ 1 $ 
		& $ 1 $ 
		& $ 1 $ 
		& $ 90 $ 
		& \multicolumn{1}{>{\columncolor{DodgerBlue1!20}}c|}{\tabularCenterstack{c}{$ 90, 110, 130 $}} 
		& $ 50 $ 
		& $ 10^{-3} $ 
		& $ 50 $ 
		& \multicolumn{1}{>{\columncolor{DodgerBlue1!20}}c|}{\tabularCenterstack{c}{$ 1,2,3 $}} 
		& $ 0 $ 
		& \multicolumn{1}{>{\columncolor{DodgerBlue1!20}}c}{\tabularCenterstack{c}{$ \left[ 0, 4.8 \right] $}} 
		& \multicolumn{1}{>{\columncolor{DodgerBlue1!20}}c}{\tabularCenterstack{c}{$ \left[ 0, 1 \right] $}} 
		& $ 0 $ 
		& $ 0 $ 
		& $ - $  
		\\
		\hline  
		IV 
		& $ 1 $ 
		& $ 1 $ 
		& $ 1 $ 
		& $ 1 $ 
		&  $ 1 $ 
		& $ 90 $ 
		& \multicolumn{1}{>{\columncolor{DodgerBlue1!20}}c|}{\tabularCenterstack{c}{$ 90, 110, 130 $}} 
		& $ 50 $ 
		& $ 10^{-3} $ 
		& \multicolumn{1}{>{\columncolor{DodgerBlue1!20}}c}{\tabularCenterstack{c}{$ 50, 70 $}} 
		& \multicolumn{1}{>{\columncolor{DodgerBlue1!20}}c|}{\tabularCenterstack{c}{$ 1,2,3 $}} 
		& \multicolumn{1}{>{\columncolor{DodgerBlue1!20}}c}{\tabularCenterstack{c}{$ \left[ 0, 10 \right] $}} 
		& $ 0 $ 
		& $ 0 $ 
		& \multicolumn{1}{>{\columncolor{DodgerBlue1!20}}c|}{\tabularCenterstack{c}{$ \left[ 0, 1 \right] $}} 
		& \multicolumn{1}{>{\columncolor{DodgerBlue1!20}}c}{\tabularCenterstack{c}{$ \left[ 0, 0.3 \right] $}} 
		& \multicolumn{1}{>{\columncolor{DodgerBlue1!20}}c|}{\tabularCenterstack{c}{$ 0.1, 0.2, 0.3 $}} 
		\\
		\hline
		\hline
		V 
		& $ 1 $ 
		& $ 1 $ 
		& \multicolumn{1}{>{\columncolor{DodgerBlue1!20}}c}{\tabularCenterstack{c}{$ 4 $}} 
		& \multicolumn{1}{>{\columncolor{DodgerBlue1!20}}c}{\tabularCenterstack{c}{$ 2 $}} 
		&  \multicolumn{1}{>{\columncolor{DodgerBlue1!20}}c|}{\tabularCenterstack{c}{$ 10 $}} 
		& $ 90 $ 
		& $ 110 $ 
		& $ 50 $ 
		& $ 10^{-3} $ 
		& $ 50 $ 
		& $ 2 $ 
		& $ 0 $ 
		& $ 0 $ 
		& $ 0 $ 
		& $ 0 $ 
		& $ 0 $ 
		& $ - $ 
		\\
		\hline
		VI 
		& \multicolumn{1}{>{\columncolor{DodgerBlue1!20}}c}{\tabularCenterstack{c}{$ 3 $}} 
		& \multicolumn{1}{>{\columncolor{DodgerBlue1!20}}c}{\tabularCenterstack{c}{$ 2 $}} 
		& $ 1 $   
		& $ 1 $ 
		& \multicolumn{1}{>{\columncolor{DodgerBlue1!20}}c|}{\tabularCenterstack{c}{$ 5 $}} 
		& \multicolumn{1}{>{\columncolor{DodgerBlue1!20}}c}{\tabularCenterstack{c}{$ \left[ 45, 135 \right] $}} 
		& \multicolumn{1}{>{\columncolor{DodgerBlue1!20}}c|}{\tabularCenterstack{c}{$ \left[ 50, 130 \right] $}} 
		& \multicolumn{1}{>{\columncolor{DodgerBlue1!20}}c}{\tabularCenterstack{c}{$ \left[ 40, 70 \right] $}} 
		& \multicolumn{1}{>{\columncolor{DodgerBlue1!20}}c|}{\tabularCenterstack{c}{$ \left[ 4, 12 \right] 10^{-4} $}} 
		& $ 50 $ 
		& \multicolumn{1}{>{\columncolor{DodgerBlue1!20}}c|}{\tabularCenterstack{c}{$ 4 $}}  
		& $ 0 $  
		& $ 0 $ 
		& $ 0 $  
		& $ 0 $  
		& $ 0 $ 
		& $ - $  
		\\
		\hline
		VII
		& \multicolumn{1}{>{\columncolor{DodgerBlue1!20}}c}{\tabularCenterstack{c}{$ 6 $}}  
		& \multicolumn{1}{>{\columncolor{DodgerBlue1!20}}c}{\tabularCenterstack{c}{$ 4 $}} 
		& \multicolumn{1}{>{\columncolor{DodgerBlue1!20}}c}{\tabularCenterstack{c}{$ 3 $}}   
		& \multicolumn{1}{>{\columncolor{DodgerBlue1!20}}c}{\tabularCenterstack{c}{$ 3 $}}     
		& \multicolumn{1}{>{\columncolor{DodgerBlue1!20}}c|}{\tabularCenterstack{c}{$ 30 $}}   
		& \multicolumn{1}{>{\columncolor{DodgerBlue1!20}}c}{\tabularCenterstack{c}{$ \left[ 45, 135 \right] $}} 
		& \multicolumn{1}{>{\columncolor{DodgerBlue1!20}}c|}{\tabularCenterstack{c}{$ \left[ 50, 130 \right] $}} 
		& \multicolumn{1}{>{\columncolor{DodgerBlue1!20}}c}{\tabularCenterstack{c}{$ \left[ 40, 70 \right] $}} 
		& \multicolumn{1}{>{\columncolor{DodgerBlue1!20}}c|}{\tabularCenterstack{c}{$ \left[ 4, 12 \right] 10^{-4} $}} 
		& $ 50 $  
		& \multicolumn{1}{>{\columncolor{DodgerBlue1!20}}c|}{\tabularCenterstack{c}{$ \left[ 1, 6 \right] $}} 
		& \multicolumn{1}{>{\columncolor{DodgerBlue1!20}}c}{\tabularCenterstack{c}{$ 4, 14 $}}  
		& \multicolumn{1}{>{\columncolor{DodgerBlue1!20}}c}{\tabularCenterstack{c}{$ 2, 3 $}}  
		& \multicolumn{1}{>{\columncolor{DodgerBlue1!20}}c}{\tabularCenterstack{c}{$ 0.1, 0.3 $}} 
		& \multicolumn{1}{>{\columncolor{DodgerBlue1!20}}c|}{\tabularCenterstack{c}{$ 0.3, 0.6 $}} 
		& \multicolumn{1}{>{\columncolor{DodgerBlue1!20}}c}{\tabularCenterstack{c}{$ 0.2 $}} 
		& \multicolumn{1}{>{\columncolor{DodgerBlue1!20}}c}{\tabularCenterstack{c}{$ 0.2 $}}  
		\\
		\hline 
	\end{tabular}
	\label{tab:simulation-settings}
	\vspace{-2mm}
\end{table*}

\section{Simulation Results} \label{sec:simulation-results}

We evaluate the investigated problem $\mathcal{P}'(\boldsymbol{\Omega}') $ under various configurations, varying the \gls{SNR} and \gls{SINR} requirements, the number of allocated timeslots, the number of users and targets, and the severity of imperfect information.

We consider the Rician fading channel model, which allows \gls{LoS} and \gls{NLoS} channel components. The channel for $ \mathsf{U}_u $ is given by $ \mathbf{h}_u = \gamma_u \mathbf{v}_u $, where $ \gamma_u $ accounts for large-scale fading and
\begin{equation*}
	\mathbf{v}_u = \sqrt{K / (K+1)} \mathbf{v}^\mathrm{LoS}_u + \sqrt{1/(K+1)} \mathbf{v}^\mathrm{NLoS}_u, \forall u \in \mathcal{U},
\end{equation*}
is the normalized small-scale fading, with $ K = 100 $ being the Rician fading factor. The \gls{LoS} component is given by $ \mathbf{v}^\mathrm{LoS}_u = \tfrac{1}{\sqrt{N_\mathrm{tx}}} \mathrm{e}^{\mathrm{j} \boldsymbol{\phi}_\mathrm{tx} \cos \left( \beta_u \right)} $, where $ \beta_u $ is the \gls{LoS} angle, and the \gls{NLoS} components are defined as $ \mathbf{v}^\mathrm{NLoS}_u \sim \mathcal{CN} \left( \mathbf{0}, \mathbf{I} \right) $. For large-scale fading, we adopt the \texttt{UMa} channel model \cite{3gpp:38.901}, modeled as $ \gamma_u = 28 + 22 \log_{10} (r_u) + 20 \log_{10} (f_\mathrm{c}) $ dB, where $ f_\mathrm{c} = 41 $ GHz is the carrier frequency, and $ r_u $ denotes the distance between the \gls{BS} and $ \mathsf{U}_u $, ranging in the interval $ \left[ 40, 70 \right] $ m. We assume communication and sensing noise powers $ \sigma_\mathrm{com}^2 = -110 $ dBm and $ \sigma_\mathrm{sen}^2 = -70 $ dBm, respectively, which reflect the distinct operational environments. Specifically, sensing is often clutter/interference-limited, whereas communication is typically noise-limited.

The \gls{BS} is equipped with $ N_\mathrm{tx} = 8 $ transmit antennas and $ N_\mathrm{rx} = 16 $ receive antennas. The transmit and receive directions span the interval from $ 45^\circ $ to $ 135^\circ $ with spacing of $ 5^\circ $ degrees,  yielding $ D_\mathrm{tx} = 19 $ and $ D_\mathrm{rx} = 19 $ distinct directions for transmission and reception, respectively. The transmit power ranges from  $ 0.1 $ to $ 1 $ W in increments of $ 0.1 $ W, resulting in $ P_\mathrm{tx} = 10 $ distinct power levels, while the receive power is fixed\footnote{Power scaling in receive beamforming does not enhance the sensing \gls{SINR}, as seen in (\ref{eqn:sensing-sinr}), because numerator and the denominator are scaled proportionally, leaving the \gls{SINR} unchanged. Thus, receive power is typically maintained at a fixed and low level.} at $ 0.1 $ W, yielding $ P_\mathrm{rx} = 1 $. The available beamwidths for transmission are $ \left\lbrace 13^\circ, 26^\circ, 60^\circ \right\rbrace $, while the beamwidth options for reception are $ \left\lbrace  6^\circ, 13^\circ, 26^\circ \right\rbrace $, leading to $ B_\mathrm{tx} = 3 $ and $ B_\mathrm{rx} = 3 $ beamwidth configurations. The mutual coupling parameters are set to $ \delta_\mathrm{tx} = \delta_\mathrm{rx} = 0 $, unless stated otherwise. The number of users $ U $ varies in the set $ \left\lbrace 1, 2, 3, 4, 5, 6 \right\rbrace $, while the number of targets $ T $ ranges over $ \left\lbrace 1, 2, 3, 4 \right\rbrace $.

The \gls{LoS} angles of the user channels vary according to $ \beta_u \in \left[ 45^\circ, 135^\circ \right] $, while the target \glspl{AOD} are drawn from $ \theta_t \in \left[ 50^\circ, 130^\circ \right] $. Both the communication \gls{SNR} and sensing \gls{SINR} requirements are assumed to be identical across all users and targets, denoted by $ \Upsilon_{\mathrm{snr}} $ and $ \Lambda_{\mathrm{sinr}} $, respectively. Specifically, $ \Upsilon_{\mathrm{snr}} $ ranges within $ \left[ 10, 150 \right] $, and $ \Lambda_{\mathrm{sinr}} $ varies in the interval $ \left[ 1, 15 \right] $. Similarly, the number of timeslots allocated for communication and sensing, denoted by $ S_\mathrm{com} $ and $ S_\mathrm{sen} $, respectively, are identical across users and targets, with values selected from the interval $ \left[ 1, 10 \right] $. The \gls{RC} varies according to $ \widebar{\psi}_t \in \left[4 \cdot 10^{-4}, 12 \cdot 10^{-4} \right] $.

The normalized \gls{CSI} error varies within $ \frac{\epsilon_{\mathrm{CSI}}}{\sigma_\mathrm{com}} \in \left[ 0, 10 \right] $ and the \gls{AOD} error ranges as $ \epsilon_\mathrm{AOD} \in \left[ 0^\circ, 4.8^\circ \right] $. The normalized \gls{RC} error lies within $ \frac{\epsilon_\mathrm{RC}}{\widebar{\psi}_t} \in \left[ 0, 1 \right] $. The severity of \gls{RSI} is controlled via $ \upsilon $ and $ \epsilon_\mathrm{RSI} $, which are varied in the intervals $ \upsilon \in \left[ 0, 1 \right] $ and $ \frac{\epsilon_\mathrm{RSI}}{\left\| \widebar{\mathbf{R}} \right\|_\mathrm{F} } \in \left[ 0, 1 \right] $. Additionally, the distance between the antenna array centers ranges over $ \widebar{d}_c \in \left\lbrace 0, 0.1, 0.2, 0.3 \right\rbrace $ m, while the timeslot duration, $ S_\mathrm{dur} $, is set to $ 1 $ ms. When not specified, all simulation results are the average over $ 50 $ independent realizations.

We consider two sets of scenarios: (i) single-user, single-target, single-timeslot settings (\textbf{Scenario~I} to \textbf{Scenario~IV}), and (ii) multi-user, multi-target, multi-timeslot settings (\textbf{Scenario~V} to \textbf{Scenario~VII}). The first set of scenarios provides key insights into how the various parameters influence system performance. Specifically, we examine the impact of user-target alignment, \gls{SNR} and \gls{SINR} requirements, beam adaptation, functionality selection, and imperfect information. The second set of scenarios reflects more general configurations, aiming to evaluate system performance at a larger scale. We evaluate our optimal user-target pairing against a state-of-the-art heuristic, analyze the time-energy trade-off, and benchmark the proposed approach against baseline methods. For clarity, \textbf{\cref{tab:simulation-settings}} provides a summary of the most relevant parameters for each scenario\footnote{Throughout \textbf{Scenario I} to \textbf{Scenario IV}, we adopt $ \eta_1 = 1 $, $ \eta_2 = 1 $, and $ \omega_1 = 1 $, unless stated otherwise. In \textbf{Scenario V} to \textbf{Scenario VII}, $ \eta_1 $ remains fixed at $ 1 $, while $ \eta_2 $ is selected based on the weighting conditions derived in \textbf{\cref{thm:proposition-9}}. For these scenarios, the structured weights $ \omega_s $ are generated assuming $ \Delta_\omega = \Delta_0 = 2 $.}.


\subsection{Scenario I: Impact of user-target alignment}

\begin{figure*}[!t]
 	\begin{subfigure}[b]{0.23\textwidth}
		\begin{center}
			\includegraphics[]{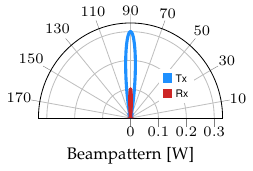}
			\caption{$  \beta = 90^\circ \mid \theta = 90^\circ $}
			\label{fig:results-scenario-1a}
		\end{center}
 	\end{subfigure}
    \hfill 
 	\begin{subfigure}[b]{0.23\textwidth}
		\begin{center}
			\includegraphics[]{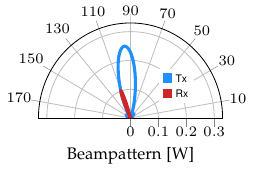}
			\caption{$ \beta = 90^\circ \mid \theta = 110^\circ $}
			\label{fig:results-scenario-1b}
		\end{center}
 	\end{subfigure}
    \hfill 
 	\begin{subfigure}[b]{0.23\textwidth}
		\begin{center}
			\includegraphics[]{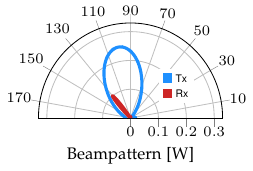}
			\caption{$ \beta = 90^\circ \mid \theta = 130^\circ $}
			\label{fig:results-scenario-1c}
		\end{center}
 	\end{subfigure}
 	\hfill
 	\begin{subfigure}[b]{0.27\textwidth}
		\begin{center}
			\includegraphics[]{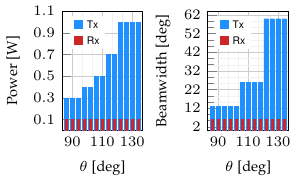}
			\caption{$ \beta = 90^\circ \mid \theta \in \left\lbrace 90^\circ, 110^\circ, 130^\circ \right\rbrace $}
			\label{fig:results-scenario-1d}
		\end{center}
 	\end{subfigure}
	%
 	\begin{subfigure}[b]{0.23\textwidth}
		\begin{center}
			\includegraphics[]{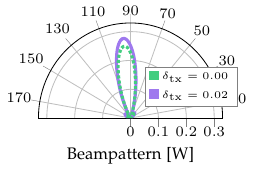}
			\caption{$ \theta = 110^\circ \mid \delta_\mathrm{tx} = 0.02 $}
			\label{fig:results-scenario-1e}
		\end{center}
 	\end{subfigure}
    \hfill 
 	\begin{subfigure}[b]{0.23\textwidth}
		\begin{center}
			\includegraphics[]{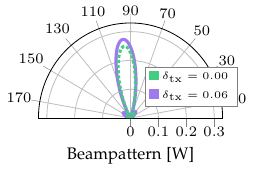}
			\caption{$ \theta = 110^\circ \mid \delta_\mathrm{tx} = 0.06 $}
			\label{fig:results-scenario-1f}
		\end{center}
 	\end{subfigure}
    \hfill 
 	\begin{subfigure}[b]{0.23\textwidth}
		\begin{center}
			\includegraphics[]{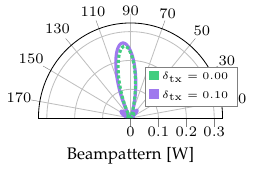}
			\caption{$ \theta = 110^\circ \mid \delta_\mathrm{tx} = 0.10 $}
			\label{fig:results-scenario-1g}
		\end{center}
 	\end{subfigure}
 	\hfill
 	\begin{subfigure}[b]{0.27\textwidth}
		\begin{center}
			\includegraphics[]{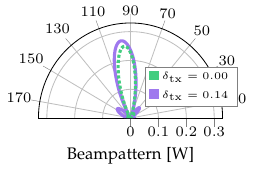}
			\caption{$ \theta = 110^\circ \mid \delta_\mathrm{tx} = 0.14 $}
			\label{fig:results-scenario-1h}
		\end{center}
 	\end{subfigure}
 	\vspace{-2mm}
    \caption{Impact of user-target alignment on beam adaptation (\emph{Scenario I}). \emph{The system can dynamically adapt beam direction, beamwidth, and transmit power in response to variations in user-target alignment, thereby ensuring that both communication and sensing requirements are satisfied. In addition, mutual coupling increases power/energy consumption by inducing energy leakage among adjacent antennas, which distorts the beampattern and reduces the effective power concentrated in the main lobe.}}
 	\label{fig:results-scenario-1}
\end{figure*}

In this scenario, we analyze how the angular alignment between user and target influences the adaptation of power, direction, and beamwidth. The setup consists of a single user and a single target, with requirements, $ \Upsilon_{\mathrm{snr}} = 80 $ and $ \Lambda_{\mathrm{sinr}} = 2 $, assuming $ S = S_{\mathrm{com}} = S_{\mathrm{sen}} = 1 $. The user's \gls{LoS} angle is $ \beta = 90^\circ $, while the target's \gls{AOD} $ \theta $ varies from $ 86^\circ $ to $ 134^\circ $, resulting in varying degrees of alignment. We also evaluate the impact of mutual coupling on beampattern shape and power consumption.

In \cref{fig:results-scenario-1a}, both user and target are perfectly aligned at $ 90^\circ $, and accordingly, the transmit and receive beams are also directed towards this angle. The transmit power to jointly serve both is $ 0.3 $ W, while the receive power for target sensing remains constant at $ 0.1 $ W. The transmit beamwidth is $13^\circ $, while the receive beamwidth is $ 6^\circ $, which remains fixed across all cases in \cref{fig:results-scenario-1}, as there is no self-interference to mitigate, i.e., $ \upsilon = 0 $ (see \cref{tab:simulation-settings}).

As the target moves to $ 110^\circ $, as shown in \cref{fig:results-scenario-1b}, the transmit beam adapts by broadening its beamwidth to $ 26^\circ $ and steering to $ 95^\circ $ to accommodate the target's angular shift. Additionally, the transmit power increases to $ 0.5 $ W to compensate for the reduced radiated power due to beam widening. This adaptation is necessary, as the beam configuration from \cref{fig:results-scenario-1a} is no longer suitable to jointly serve both the user and the target.

When the target reaches $ 130^\circ $, as illustrated in \cref{fig:results-scenario-1c}, a more pronounced behavior is observed. Specifically, the transmit beam is steered to $ 100^\circ $, while broadening its beamwidth to $ 60^\circ $. To counteract the significant loss in directionality caused by the wider beam, the transmit power is further increased to $ 1 $ W.

Meanwhile, \cref{fig:results-scenario-1d} provides a more granular view, displaying transmit and receive power alongside beamwidth across a broader range of $ \theta $ values. Additionally, since this setting enforces a shared timeslot for the user and the target, it allows us to analyze channel similarity, which takes the values $1$, $0.2232$, and $0.1438$ for \cref{fig:results-scenario-1a}, \cref{fig:results-scenario-1b}, and \cref{fig:results-scenario-1c}, respectively, clearly illustrating the strong correlation between channel similarity and user-target alignment.

\cref{fig:results-scenario-1e} to \cref{fig:results-scenario-1h} illustrate the impact of mutual coupling on the beampattern, considering $ \beta = 90^\circ $ and $ \theta = 110^\circ $ (as in \cref{fig:results-scenario-1b}). In this setting, the receive-side coupling is fixed to $ \delta_\mathrm{rx} = 0.02 $, while the transmit-side coupling is varied as $ \delta_\mathrm{tx} = \left\lbrace 0.02, 0.06, 0.10, 0.14  \right\rbrace $. As $ \delta_\mathrm{tx} $ increases, stronger interactions among neighboring antenna elements arise, leading to increased power leakage. Consequently, higher transmit power is required to satisfy the sensing and communication constraints, since the coupling matrix $ \mathbf{Z}_\mathrm{tx} $ in (\ref{eqn:coupling-matrix}) induces phase distortions that reduce coherent combining across the array. Across all four cases in \cref{fig:results-scenario-1}, the resulting beampatterns under mutual coupling exhibit higher main-lobe energy than the ideal case (with $ \delta_\mathrm{tx} = \delta_\mathrm{rx} = 0 $), indicating that both sensing and communication requirements remain satisfied. However, this comes at the cost of increased transmit power. Specifically, the selected powers for $ \delta_\mathrm{tx} = \left\lbrace 0.02, 0.06, 0.10, 0.14  \right\rbrace $ are $0.6$W, $0.7$W, $0.8$W and $1$W, respectively, whereas the nominal case without mutual coupling requires only $0.5$W. This increase represents the additional power needed to guarantee the ISAC constraints under coupling effects. Furthermore, the sidelobe levels become more pronounced as $ \delta_\mathrm{tx} $ increases, reflecting the growing distortion of the array response due to mutual coupling.


\subsection{Scenario II: Impact of SNR/SINR requirements and functionality selection}

\begin{figure*}[!t]
 	\begin{subfigure}[b]{0.19\textwidth}
		\begin{center}
			\includegraphics[]{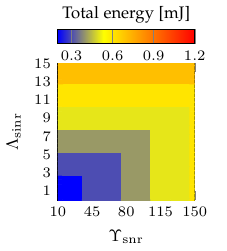}
			\vspace{-5mm}
			\caption{$ \theta = 90^\circ \mid S = 1 $}
			\label{fig:results-scenario-2a}
		\end{center}
 	\end{subfigure}
 	\hfill 
 	\begin{subfigure}[b]{0.19\textwidth}
		\begin{center}
			\includegraphics[]{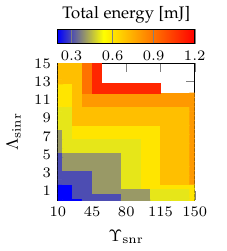}
			\vspace{-5mm}
			\caption{$ \theta = 100^\circ \mid S = 1 $}
			\label{fig:results-scenario-2b}
		\end{center}
 	\end{subfigure}
    \hfill 
 	\begin{subfigure}[b]{0.19\textwidth}
		\begin{center}
			\includegraphics[]{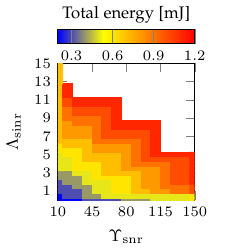}
			\vspace{-5mm}
			\caption{$  \theta = 110^\circ \mid S = 1 $}
			\label{fig:results-scenario-2c}
		\end{center}
 	\end{subfigure}
    \hfill 
 	\begin{subfigure}[b]{0.19\textwidth}
		\begin{center}
			\includegraphics[]{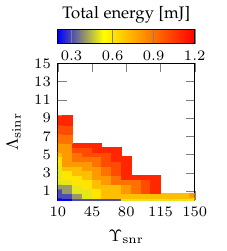}
			\vspace{-5mm}
			\caption{$ \theta = 120^\circ \mid S = 1 $}
			\label{fig:results-scenario-2d}
		\end{center}
 	\end{subfigure}
    \hfill 
 	\begin{subfigure}[b]{0.19\textwidth}
		\begin{center}
			\includegraphics[]{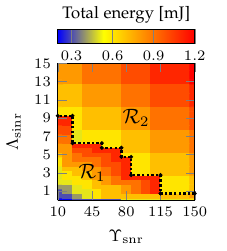}
			\vspace{-5mm}
			\caption{$ \theta = 120^\circ  \mid S = 2 $}
			\label{fig:results-scenario-2e}
		\end{center}
 	\end{subfigure}
 	\vspace{-1mm}
    \caption{Impact of SNR/SINR requirements and functionality selection on energy consumption (\emph{Scenario II}). \emph{The energy consumption is affected by \gls{SNR} and \gls{SINR} requirements, with higher demands leading to greater infeasible regions. While static functionality selection (\cref{fig:results-scenario-2a} to \cref{fig:results-scenario-2d}) is more prone to infeasibility, flexible functionality selection (\cref{fig:results-scenario-2e}) significantly increases the feasible solution space and improves adaptability via intelligent switching between joint and separate user/target service.}}
 	\label{fig:results-scenario-2}
\end{figure*}
\begin{figure*}[!t]
 	\begin{subfigure}[b]{0.40\textwidth}
		\begin{center}
			\includegraphics[]{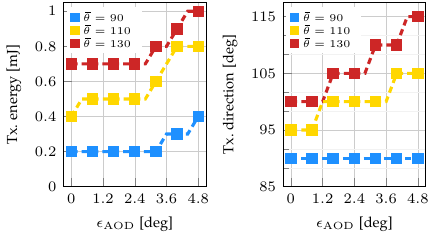}
			\vspace{-5mm}
			\caption{$ \Lambda_{\mathrm{sinr}} = 2 $}
			\label{fig:results-scenario-3a}
		\end{center}
 	\end{subfigure}
    \hfill 
 	\begin{subfigure}[b]{0.33\textwidth}
		\begin{center}
			\includegraphics[]{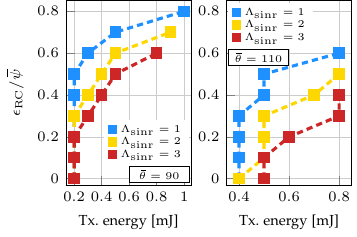}
			\vspace{-5mm}
			\caption{$ \Lambda_{\mathrm{sinr}} = \left\lbrace 1, 2, 3 \right\rbrace \mid \theta \in \left\lbrace 90^\circ, 110^\circ \right\rbrace $}
			\label{fig:results-scenario-3b}
		\end{center}
 	\end{subfigure}
    \hfill 
 	\begin{subfigure}[b]{0.23\textwidth}
		\begin{center}
			\includegraphics[]{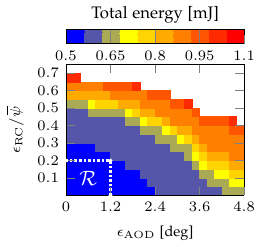}
			\vspace{-5mm}
			\caption{$ \Lambda_{\mathrm{sinr}} = 1 \mid \theta = 110^\circ $}
			\label{fig:results-scenario-3c}
		\end{center}
 	\end{subfigure}
 	\vspace{-1mm}
    \caption{Impact of imperfect AOD and RC on energy consumption (\emph{Scenario III}). \emph{Energy expenditure is impacted by \gls{AOD} and \gls{RC} uncertainty. Their combined effect reveals operational regions characterized by constant total energy consumption despite imperfect information. This stability is attributed to the discrete power levels, which often provides a surplus margin in satisfying \gls{SNR} and \gls{SINR} requirements, thereby providing an inherent degree of robustness.} }
 	\label{fig:results-scenario-3}
\end{figure*}


In this scenario, we analyze how energy consumption varies under different \gls{SNR} and \gls{SINR} requirements. We also investigate how functionality selection influences performance. The setup consists of a single user and a single target, with requirements $ \Upsilon_{\mathrm{snr}} = \left[ 10, 150 \right]  $ and $ \Lambda_{\mathrm{sinr}} = \left[ 0, 15 \right] $.

In \cref{fig:results-scenario-2}, the consumed energy is depicted using a color heatmap across five instances, where the user's LoS angle is fixed at $ \beta = 90^\circ $, and the target's  \gls{AOD} $ \theta $ takes values from $ \left\lbrace 90^\circ, 100^\circ, 110^\circ, 120^\circ \right\rbrace $. In \cref{fig:results-scenario-2a} to \cref{fig:results-scenario-2d}, we set $ S = 1 $, meaning the user and target share the same timeslot. This reflects a common assumption in the existing literature, where timeslot functionality is static, forcing both functionalities to be active simultaneously. In \cref{fig:results-scenario-2e}, we set $ S = 2 $, allowing the user and target to be time-multiplexed when a single timeslot is insufficient to serve both. This configuration enables flexible functionality selection across timeslots, mitigating allocation infeasibility.

In \cref{fig:results-scenario-2a}, due to the high alignment between user and target, all allocations are feasible, with energy consumption not exceeding $ 0.7 $ mJ. In \cref{fig:results-scenario-2b}, a misalignment of $ \theta - \beta = 10^\circ $, leads to the appearance of infeasible regions, represented by white areas, particularly for $ \Lambda_{\mathrm{sinr}} \geq 12 $. \cref{fig:results-scenario-2c} illustrates a misalignment of $ \theta - \beta = 20^\circ $, leading to a more noticeable reduction in the feasible regions. Moreover, the number of allocations requiring high energy consumption rises significantly compared to  previous cases. In \cref{fig:results-scenario-2d}, with a misalignment of $ \theta - \beta = 30^\circ $, total energy consumption increases sharply, even for moderate \gls{SNR} and \gls{SINR} requirements. Additionally, regions requiring maximum energy expand noticeably, while those with minimal energy contract. The number of infeasible regions also increases significantly.

\cref{fig:results-scenario-2e} illustrates how enabling flexible functionality selection resolves allocation infeasibility. The operating conditions are identical to those in \cref{fig:results-scenario-2d}. The black dotted line partitions the energy plane into two regions: $ \mathcal{R}_1 $, where a single timeslot is used, and $ \mathcal{R}_2 $, where two timeslots are active. This boundary marks the point at which the system switches from one to two timeslots. The weighting parameters ($ \eta_1 = 1 $ and $ \eta_2 = 2 $) are chosen such that the second timeslot becomes active only when allocation under a single timeslot is infeasible. Immediately to the left of the boundary (within $ \mathcal{R}_1 $), the allocated energies are high and approach their maximum feasible values. Just to the right of the boundary (within $ \mathcal{R}_2 $), the energies drop sharply because the system time-multiplexes the user and the target. In this mode, each can be served with a narrower beam and therefore with less energy, at the expense of using additional time resources. Overall, enabling flexible functionality selection on a per-timeslot basis expands the feasible solution space by approximately $250\%$.


\subsection{Scenario III: Impact of imperfect AOD and RC}

In this scenario, we investigate how the transmit beam adapts to varying levels of uncertainty in the target's \gls{AOD} and \gls{RC}. We also examine how combined uncertainties in \gls{AOD} and \gls{RC} influence the energy consumption. The setup remains consistent with assumptions in \textbf{Scenario~II}. The communication \gls{SNR} requirement is $ \Upsilon_{\mathrm{snr}} = 50 $, and the user's \gls{LoS} is $ \beta = 90^\circ $. The estimated target's \gls{AOD} $ \widebar{\theta} $ varies within $ \left\lbrace 90^\circ, 110^\circ, 130^\circ  \right\rbrace $, while the estimated \gls{RC} is set to $ \widebar{\psi} = 10^{-3} $. Here, \gls{AOD} and \gls{RC} errors range the intervals $ \epsilon_{\mathrm{AOD}} \in \left[ 0^\circ, 4.8^\circ \right] $ and $ \frac{\epsilon_{\mathrm{RC}}}{\widebar{\psi}} \in \left[ 0,1 \right] $, respectively.


\cref{fig:results-scenario-3a} illustrates the transmit energy and direction as functions of $ \epsilon_{\mathrm{AOD}} $. Both the transmit energy and the steering angle exhibit monotonically non-decreasing trends as $ \epsilon_{\mathrm{AOD}} $ increases.
Specifically, addressing \gls{AOD} uncertainty requires maintaining sufficient radiated energy in the angular neighborhood of $ \widebar{\theta} $ to satisfy the \gls{SINR} requirement. While beam widening could mitigate this issue, it is unsuitable in the considered setting, as doubling the beamwidth would at least halve the directional gain, thereby requiring a proportional increase in transmit power and substantially degrading the objective function. Instead, the system combats \gls{AOD} uncertainty through power and direction adaptation. In particular, increasing transmit power while maintaining the beamwidth incurs a lower cost than widening the beam to achieve the required performance. However, when power increment alone is insufficient, the transmit beam is steered closer to $ \widebar{\theta} $. As this shift reduces the energy delivered to the user, a compensatory increase in transmit power may be needed to maintain the communication \gls{SNR} requirement. 


\cref{fig:results-scenario-3b} shows the impact of \gls{RC} uncertainty on transmit energy for $ \widebar{\theta} = \left\lbrace 90^\circ,  110^\circ \right\rbrace $. As expected, greater \gls{RC} uncertainty degrades sensing performance. For $ \widebar{\theta} = 90^\circ $, the system remains feasible under $ \Lambda_{\mathrm{sinr}} = 1 $, even with \gls{RC} errors up to $ 80\% $. However, stricter $ \Lambda_{\mathrm{sinr}} = \left\lbrace 2, 3 \right\rbrace $ reduce the tolerable error to $ 70\% $ and $ 60\% $, respectively, beyond which the allocation becomes infeasible. For $ \widebar{\theta} = 110^\circ $, a similar trend is observed, though energy consumption increases more sharply, even for small \gls{RC} errors.

\cref{fig:results-scenario-3c} illustrates the combined effect of \gls{AOD} and \gls{RC} uncertainty on total energy consumption for $ \widebar{\theta} = 110^\circ $ and $ \Lambda_{\mathrm{sinr}} = 1 $. The results show that this configuration can tolerate moderate \gls{RC} errors up to $ 20\% $ and \gls{AOD} deviations up to $ 1.2^\circ $, as represented by region $ \mathcal{R} $, without incurring additional energy costs. This stems from the discrete nature of power levels, which often results in achieved \gls{SNR} and \gls{SINR} exceeding the required thresholds, rather than matching them precisely, thereby providing resilience to errors.


\subsection{Scenario IV: Impact of imperfect CSI and RSI}

\begin{figure*}[!t]
 	\begin{subfigure}[b]{0.34\textwidth}
		\begin{center}
			\includegraphics[]{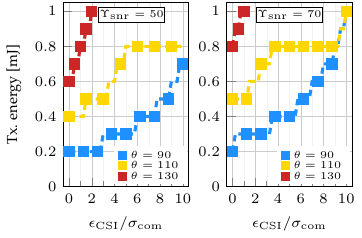}
			\caption{$ \Upsilon_{\mathrm{snr}} = \left\lbrace 50, 70 \right\rbrace  \mid \Lambda_{\mathrm{sinr}} = 1  $}
			\label{fig:results-scenario-4a}
		\end{center}
 	\end{subfigure}
    \hfill 
 	\begin{subfigure}[b]{0.17\textwidth}
		\begin{center}
			\includegraphics[]{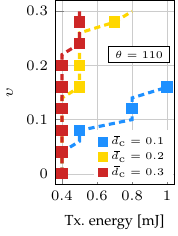}
			\caption{$ \epsilon_{\mathrm{RSI}} = 0  $}
			\label{fig:results-scenario-4b}
		\end{center}
 	\end{subfigure}
    \hfill 
 	\begin{subfigure}[b]{0.24\textwidth}
		\begin{center}
			\includegraphics[]{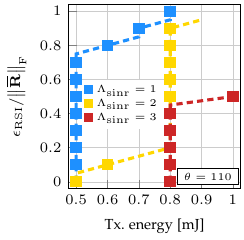}
			\caption{$ \upsilon = 0.15 \mid \widebar{d}_\mathrm{c} = 0.2  $}
			\label{fig:results-scenario-4c}
		\end{center}
 	\end{subfigure}
    \hfill 
 	\begin{subfigure}[b]{0.23\textwidth}
		\begin{center}
			\includegraphics[]{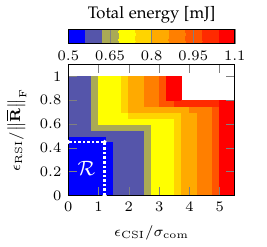}
			\caption{$ \upsilon = 0.1 \mid \widebar{d}_\mathrm{c} = 0.2 $}
			\label{fig:results-scenario-4d}
		\end{center}
 	\end{subfigure}
 	\vspace{-1mm}
    \caption{Impact of imperfect CSI and RSI on energy consumption (\emph{Scenario IV}). \emph{Effective self-interference cancellation (small $ \upsilon $) results in minimal energy increase, whereas ineffective cancellation (large $ \upsilon $) can result in excessive power demands and infeasible allocations. Furthermore, maintaining a reasonable inter-array distance significantly reduces self-interference effects.}}
 	\label{fig:results-scenario-4}
\end{figure*}

In this scenario, we examine how varying levels of uncertainty in \gls{CSI} and \gls{RSI} affect energy consumption. The same parameter settings as in \textbf{Scenario~III} are maintained. The normalized \gls{CSI} error varies within $ \frac{\epsilon_{\mathrm{CSI}}}{\sigma_\mathrm{com}} \in \left[ 0, 10 \right] $, while the normalized RSI error is controlled by $ \frac{\epsilon_\mathrm{RSI}}{\left\| \widebar{\mathbf{R}} \right\|_\mathrm{F} } \in \left[ 0, 1 \right] $.

\cref{fig:results-scenario-4a} shows that transmit energy increases with growing \gls{CSI} error for both $ \Upsilon_\mathrm{snr} = 50 $ and $ \Upsilon_\mathrm{snr} = 70 $. When the user and target are well aligned (i.e., $ \theta = 90^\circ $), the impact is moderate, even for errors up to $ 10 \sigma_\mathrm{com} $. However, \gls{CSI} error becomes more detrimental with stronger misalignment (i.e., $ \theta = 110^\circ $ and $ \theta = 130^\circ $), which significantly amplifies energy consumption. Specifically, for $ \theta = 130^\circ $, the system becomes infeasible for \gls{CSI} errors as small as $ 2 \sigma_\mathrm{com} $ and $ \sigma_\mathrm{com} $ when $ \Upsilon_\mathrm{snr} = 50 $ and $ \Upsilon_\mathrm{snr} = 70 $, respectively.

\cref{fig:results-scenario-4b} illustrates the impact of \gls{RSI} as a function of the distance between the transmit and receive arrays, focusing on the deterministic component whose severity is governed by $ \upsilon $. For closely spaced arrays (i.e., $ \widebar{d}_\mathrm{c} = 0.1$ m), \gls{RSI} causes a pronounced increase in the required transmit energy, even for small values of $ \upsilon $, which correspond to highly effective self-interference cancellation. As the inter-array distance increases, the influence of \gls{RSI} progressively weakens, consistent with trends reported in prior studies (e.g., \cite{abanto2025resilient}). For instance, at $ \widebar{d}_\mathrm{c} = 0.3 $ m, the influence is modest, with transmit energy remaining at $ 0.4 $ mJ for $ \upsilon \leq 0.2 $, and rising slightly to $ 0.5 $ mJ for $ 0.2 < \upsilon < 0.45 $.

\cref{fig:results-scenario-4c} illustrates the impact of the stochastic \gls{RSI} component on transmit energy, assuming $ \upsilon = 0.15 $ and $ \widebar{d}_\mathrm{c} = 0.2 $ m. The results demonstrate that, across all three $ \Lambda{\mathrm{sinr}} $ levels, the system can tolerate relatively large stochastic residuals. This resilience is attributed to both effective self-interference cancellation and sufficiently distanced arrays, which jointly mitigate the impact of direct coupling.

\cref{fig:results-scenario-4d} examine the joint impact of \gls{CSI} and \gls{RSI} uncertainties on total energy consumption when $ \theta = 110^\circ $, $ \Upsilon_{\mathrm{snr}} = 50 $, and $ \Lambda_{\mathrm{sinr}} = 1 $. Notably, the system demonstrates greater robustness to \gls{RSI} variations compared to \gls{CSI} errors, primarily due to the small value of $ \upsilon = 0.1 $, which reflects effective self-interference cancellation. Remarkably, energy consumption remains stable within region $ \mathcal{R} $, despite the presence of both uncertainties. This behavior mirrors the robustness observed in \cref{fig:results-scenario-3c}, and is largely attributed to the use of discrete power levels, which often satisfy system requirements with excess power.


\subsection{Scenario V: Impact of tradeoff weights design}

\begin{figure*}[!t]
 	\begin{subfigure}[b]{0.24\textwidth}
		\begin{center}
			\includegraphics[]{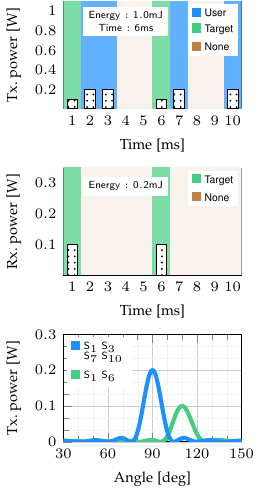}
			\vspace{-4mm}
			\caption{ $ \eta_1 = 10 \mid \eta_2 = 0 $. Here, all $ \omega_s $ are irrelevant since $ \eta_2 = 0 $.}
			\label{fig:results-scenario-5a}
		\end{center}
 	\end{subfigure}
 	\hfill 
 	\begin{subfigure}[b]{0.24\textwidth}
		\begin{center}
			\includegraphics[]{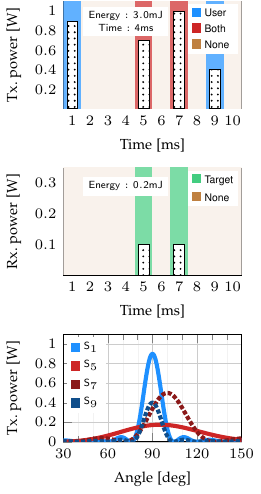}
			\vspace{-4mm}
			\caption{ $ \eta_1 = 0 \mid \eta_2 = 10 $. Here, all $ \omega_s $ are set to $ 1 $.}
			\label{fig:results-scenario-5b}
		\end{center}
 	\end{subfigure}
    \hfill 
 	\begin{subfigure}[b]{0.24\textwidth}
		\begin{center}
			\includegraphics[]{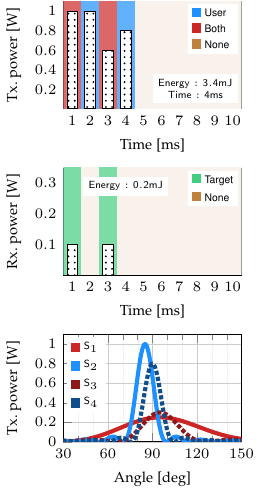}
			\vspace{-4mm}
			\caption{ $ \eta_1 = 0 \mid \eta_2 = 10 $. Here, all $ \omega_s $ follow \textbf{Lemma 1}.}
			\label{fig:results-scenario-5c}
		\end{center}
 	\end{subfigure}
    \hfill 
 	\begin{subfigure}[b]{0.24\textwidth}
		\begin{center}
			\includegraphics[]{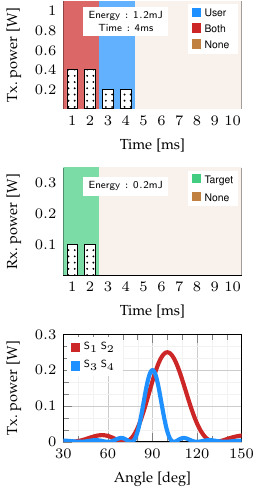}
			\vspace{-4mm}
			\caption{ $ \eta_1 = 1 \mid \eta_2 = 10 $. Here, all $ \omega_s $ follow \textbf{Lemma 1}.}
			\label{fig:results-scenario-5d}
		\end{center}
 	\end{subfigure}
 	\vspace{-1mm}
    \caption{Impact of weight design on resource economy and timeslot cohesion (\emph{Scenario V}). \emph{The weights $ \eta_1 $ and $ \eta_2 $ play a critical role in steering the optimization between energy and time consumption. The choice of weights also influences whether joint user-target servicing is favored or whether individual servicing is more energy- or time-efficient. Importantly, the use of structured weights $ \omega_s $ proves effective in promoting cohesive timeslot usage.}}
 	\label{fig:results-scenario-5}
\end{figure*}

In this scenario, we investigate the impact of weights $ \eta_1 $ and $ \eta_2 $ on time and energy consumption. We also analyze how the weights $ \omega_s $ (within function $ f_2(\boldsymbol{\Omega}) $) influence the cohesiveness of timeslot allocation. The parameters remain consistent with \textbf{Scenario~III}, considering the following: $ \beta = 90^\circ $, $ \theta = 110^\circ $, $ S_\mathrm{com} = 4 $, $ S_\mathrm{sen} = 2 $, and $ S = 10 $.

\cref{fig:results-scenario-5a} corresponds to the case $ \eta_1 = 10 $ and $ \eta_2 = 0 $, where the objective focuses exclusively on minimizing energy consumption. Under this criterion, the system attains the minimum total energy expenditure of $ 1.2 $ mJ by assigning the user and the target to disjoint timeslots. This separation avoids the additional energy cost associated with joint servicing, caused by the poor angular alignment between the user and the target. However, this energy-optimal strategy results in the maximum number of active timeslots, namely $ 6 $, corresponding to a time consumption of $ 6 $ ms. This figure also shows the specific timeslots in which the target is sensed and the transmit beampatterns for each timeslot. Notably, $ \mathsf{S}_{2} $, $ \mathsf{S}_{3} $, $ \mathsf{S}_{7} $, and $ \mathsf{S}_{10} $ employ the same transmit beam, as it achieves the minimum energy required to satisfy the \gls{SNR} constraint. Similarly, $ \mathsf{S}_{1} $ and $ \mathsf{S}_{6} $, employ the same transmit beam to illuminate the target, satisfying its \gls{SINR} requirement at minimum energy.

\cref{fig:results-scenario-5b} illustrates the opposite extreme, with $ \eta_1 = 0 $ and $ \eta_2 = 10 $, where the objective is to minimize time consumption only, assuming uniform weights $ \omega_s = 1 $. Here, the user and target are jointly served in $ \mathsf{S}_{5} $ and $ \mathsf{S}_{7} $, while the user is served independently in $ \mathsf{S}_{1} $ and $ \mathsf{S}_{9} $. In particular, the minimum possible number of timeslots is achieved, totaling $ 4 $ and requiring $ 4 $ ms. Since energy consumption is not included in the objective, multiple solutions can achieve this minimum time, and the resulting total energy consumption ($ 3.2 $ ms) is not optimized. Moreover, the active timeslots are non-contiguous, highlighting the lack of any intrinsic preference for cohesive allocation. Furthermore, four distinct beampatterns are observed, specifically, red beams correspond to joint user-target servicing, while blue beams represent timeslots where only the user is served.

\cref{fig:results-scenario-5c} considers the same time-only objective ($ \eta_1 = 0 $, $ \eta_2 = 10 $) but with structured weights $ \omega_s $ defined according to \textbf{Lemma~\ref{lem:lemma-weights}}. While the minimum time consumption of $ 4 $ ms is again achieved, the introduction of non-uniform weights promotes cohesive allocation, resulting in exactly four contiguous active timeslots. This contrasts with the non-contiguous pattern observed in \cref{fig:results-scenario-5b}. As energy is still excluded from the objective, the total energy consumption remains unoptimized, amounting to $ 3.6 $ mJ. Additionally, four distinct beampatterns are shown, corresponding to each active timeslot.

\cref{fig:results-scenario-5d} corresponds to $ \eta_1 = 1 $ and $ \eta_2 = 10 $, following the weight selection guideline in \textbf{\cref{thm:proposition-9}}, which prioritizes time minimization while retaining energy awareness. The weights $ \omega_s $ again follow \textbf{Lemma~\ref{lem:lemma-weights}}, promoting contiguous timeslot usage. The minimum time consumption of $ 4 $ ms is preserved. Owing to constraint $ \mathrm{K}_{2} $, the per-timeslot energies are sorted in decreasing order, yielding a total energy consumption of $ 1.4 $ mJ. This value is slightly higher than the energy-optimal case in \cref{fig:results-scenario-5a}, where energy minimization was the sole objective, but significantly lower than in the purely time-driven cases. Moreover, only two distinct beampatterns are required: one for joint user-target servicing, and another for the user when served individually.


\subsection{Scenario VI: Impact of flexible user-target pairing}

\begin{figure*}[!t]
		\begin{center}
			\includegraphics[]{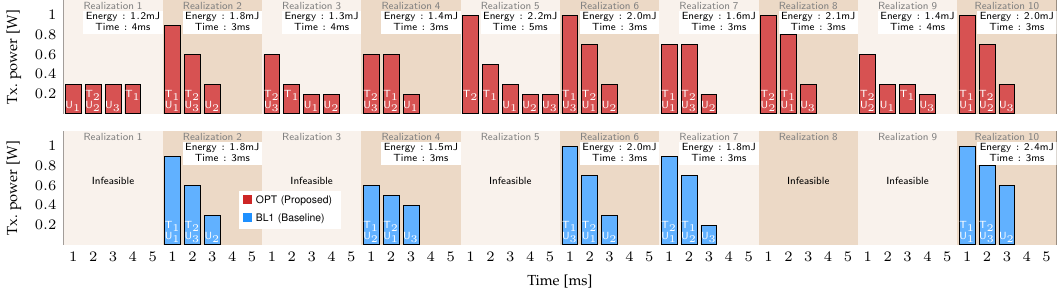}
		\end{center}
		\vspace{-5mm}
    	\caption{Impact of flexible user-target pairing on resource economy (\emph{Scenario VI}). \emph{Dynamic pairing is beneficial for improving both feasibility and resource efficiency. Unlike fixed strategies, the proposed approach adapts pairings based on the system conditions and service requirements, reducing the likelihood of infeasibility and enabling joint or separate servicing as needed. This flexibility is critical for ensuring energy-efficient operation and avoiding costly re-allocations due to infeasibility.} }
    	\vspace{-2mm}
 		\label{fig:results-scenario-8}
\end{figure*}

In this scenario, we evaluate the performance of dynamic user-target pairing in contrast to fixed-criterion pairing. The configuration considers $ U =3 $, $ T = 2 $, $ S = 5 $, $ \Upsilon_{\mathrm{snr}} = 50 $, and $ \Lambda_{\mathrm{sinr}} = 4 $, with random values for $ \beta_u $, $ \theta_t $, $ r_u $, and $ \widebar{\psi}_t $ drawn from the ranges specified in \textbf{Table~\ref{tab:simulation-settings}}.

\Cref{fig:results-scenario-8} shows the allocation outcomes for the first 10 realizations (out of 50) using our proposed approach ({\mytextsf{OPT}}) and a state-of-the-art baseline ({\mytextsf{BL1}}) adapted from \cite{dou2024:channel-sharing-aided-integrated-sensing-communication-energy-efficient-sensing-scheduling-approach}. In particular, {\mytextsf{OPT}} ensures dynamic and optimal user-target pairing, while {\mytextsf{BL1}} constructs a bipartite graph with users and targets as nodes, where edge weights represent channel similarity. A matching game is then solved to determine pairings that maximize overall alignment, reducing resource consumption through beam and timeslot sharing. To illustrate the pairing decisions, the figure displays the specific user and target served in each timeslot, along with the corresponding time and transmit energy consumption per realization.

In {\mytextsf{OPT}}, the number of active timeslots per realization ranges from  $ 3 $ (when user and targets are paired) to $ 5 $ (when users and targets are served separately). In contrast, {\mytextsf{BL1}}, enforces pairings via matching, and consequently always employs exactly $ 3 $ timeslots, i.e., the maximum of $ U $ and $ T $. Here, {\mytextsf{OPT}} yields feasible allocations in all first $ 10 $ realization, whereas {\mytextsf{BL1}} achieves feasibility in only $ 5 $. Of these, only $ 2 $ (realizations $ 2 $ and $ 6 $) achieve the same optimal value as {\mytextsf{OPT}}, while the remaining feasible allocations (realizations $ 4 $, $ 7 $, and $ 10$) incur higher energy consumption due to suboptimal pairings. Across all $ 50 $ realizations, {\mytextsf{OPT}} yields only $ 4 $ infeasible allocations, caused by insufficient power to overcome low \gls{RC} values or adverse \gls{CSI} conditions, resulting in a $ 92\% $ feasibility rate. In contrast, {\mytextsf{BL1}} produces feasible outcomes in only $ 27 $ realizations ($ 54\% $), significantly underperforming compared to {\mytextsf{OPT}}. This performance gap stems from the baseline's limited ability to account for the complexity of real-world conditions. Its reliance on a fixed, single criterion fails to capture contextual factors, which are essential for making optimal pairing decisions.


\subsection{Scenario VII: Impact of imperfect information and comparison with baseline methods}

In this scenario, we evaluate the impact of imperfect information on energy and time consumption. Additionally, we compare our proposed approach against five baselines that differ in their strategies for user-target pairing, beam adaptation, timeslot functionality selection, and in whether they optimize for energy or time minimization alone. The scenario considers $ U = 6 $, $ T = 4 $, $ S_\mathrm{com} = 3 $, $ S_\mathrm{sen} = 3 $, with the remaining parameters detailed in \textbf{Table~\ref{tab:simulation-settings}}.

\Cref{fig:results-scenario-6a} illustrates the energy and time consumption of the proposed approach under both perfect and imperfect information conditions, plotted as a function of $ \Lambda_\mathrm{sinr} $. To capture performance variability, representative individual error levels in \gls{AOD} ($ \epsilon_{\mathrm{AOD}} = 3^\circ $), \gls{RC} ($ \epsilon_{\mathrm{RC}} = 0.3 \widebar{\psi} $), \gls{RSI} ($ \upsilon = 0.2 $, $ \epsilon_{\mathrm{RSI}} = 0.6 \big\| \widebar{\mathbf{R}} \big\|_\mathrm{F} $), and \gls{CSI} ($ \epsilon_{\mathrm{CSI}} = 14 \sigma_\mathrm{com} $) are considered, along with a scenario that incorporates all sources of uncertainty ($ \epsilon_{\mathrm{AOD}} = 2^\circ $, $ \epsilon_{\mathrm{RC}} = 0.1 \widebar{\psi} $, $ \upsilon = 0.2 $, $ \epsilon_{\mathrm{RSI}} = 0.3 \big\| \widebar{\mathbf{R}} \big\|_\mathrm{F} $, $ \epsilon_{\mathrm{CSI}} = 4 \sigma_\mathrm{com} $). Separate plots are provided for energy and time consumption, along with a third plot depicting their combined consumption. Uncertainty is counteracted through increased energy usage, extended transmission time, or a combination of both, depending on the severity and nature of the errors. Given the higher priority assigned to minimizing time consumption, the system initially mitigates uncertainty by increasing transmit power, deferring the activation of extra timeslots as a secondary measure. As a result, energy consumption exhibits pronounced non-monotonic fluctuations as $ \Lambda_\mathrm{sinr} $ increases, whereas time consumption grows gradually and monotonically. This occurs because the activation of additional timeslots (as a last resort to preserve feasibility) enables individual servicing of users or targets, which can often be accomplished with narrower beams and reduced transmit power. Consequently, energy consumption may drop even as time usage increases, reflecting a transition to a different operating point.


\Cref{fig:results-scenario-6b} compares the performance of six schemes across three plots: energy consumption, time consumption, and a joint metric combining both, similarly to \cref{fig:results-scenario-6a}. The benchmarked schemes are as follows:

\noindent {\mytextsf{OPT}}: The proposed approach prioritizes minimizing time consumption over energy. It ensures optimal timeslot allocation, pairing, beam adaptation, and functionality selection.

\noindent {\mytextsf{TLB}}: This baseline serves as a time lower bound, as it considers only time minimization in its objective function, completely disregarding energy consumption.

\noindent {\mytextsf{ELB}}: This baseline serves as an energy lower bound, as it considers only energy minimization in its objective function, entirely disregarding time consumption.

\noindent {\mytextsf{BL1}}: A baseline which, like {\mytextsf{OPT}}, prioritizes time over energy. However, it relies on heuristic user-target pairing, adapted from \cite{dou2024:channel-sharing-aided-integrated-sensing-communication-energy-efficient-sensing-scheduling-approach}, enforcing always concurrent service.

\noindent {\mytextsf{BL2}}: A baseline which, like {\mytextsf{OPT}}, prioritizes time over energy, but has limited beam adaptation, as it uses constant transmit power and beamwidth, similar to \cite{xiao2024:simultaneous-multi-beam-sweeping-mmwave-massive-mimo-integrated-sensing-communication, rahman2019:joint-communication-radar-sensing-5g-mobile-network-compressive-sensing}.

\noindent {\mytextsf{BL3}}: This baseline separates communication and sensing into distinct timeslots, disallowing joint operation, as in \cite{li2022:multi-point-integrated-sensing-communication-fusion-model-functionality-selection}.

\begin{remark} \label{rem:note-1}
	To mitigate infeasibility, particularly in less flexible schemes such as {\mytextsf{BL1}}, which often yield a high number of infeasible allocations, a simple fallback strategy is employed. In particular, users and targets are assigned to disjoint timeslots and served using maximum transmit power with the narrowest beam directed along the \gls{LoS} or \gls{AOD} direction. If this allocation satisfies all the constraints, it is accepted as a feasible solution. Otherwise, the realization is excluded from the analysis.
\end{remark}

In terms of time consumption, {\mytextsf{TLB}} performs best, as it completely ignores energy consumption. {\mytextsf{OPT}} closely follows, achieving comparable time efficiency with a performance gap of less than $ 2\% $, while simultaneously maintaining energy awareness. {\mytextsf{ELB}} performs poorly in terms of time consumption, as time is excluded from its optimization objective. Any favorable outcomes occur coincidentally, typically when users and targets are well aligned, enabling timeslot sharing. {\mytextsf{BL1}} performs well at low values of $ \Lambda_{\mathrm{sinr}} $, but its effectiveness deteriorates rapidly as $ \Lambda_{\mathrm{sinr}} $ increases. This decline stems from its rigid pairing, which becomes increasingly infeasible under stricter sensing requirements and the demand for joint servicing. In such cases, fallback reallocation (see \emph{\cref{rem:note-1}}) leads to additional resource use. {\mytextsf{BL2}} underperforms due to its fixed transmit power and beamwidth configuration, which limits adaptability to varying alignment conditions. In particular, the use of narrow beams, unless  user and target are highly aligned, impedes joint servicing, frequently requiring additional timeslots to meet system requirements. {\mytextsf{BL3}} yields the worst performance in terms of time, as it enforces disjoint communication and sensing by design, resulting in consistently higher time consumption. At $ \Lambda_{\mathrm{sinr}} = 6 $, compared to {\mytextsf{OPT}}, the additional time consumption incurred by {\mytextsf{BL1}}, {\mytextsf{BL2}}, {\mytextsf{ELB}}, and {\mytextsf{BL3}} is $ 13\% $, $ 43\% $, $ 45\% $, and $ 49\% $, respectively.


Regarding energy consumption, {\mytextsf{ELB}} achieves the best performance, as it exclusively optimizes for energy, completely disregarding time consumption. Despite prioritizing time minimization, the proposed approach {\mytextsf{OPT}} still performs competitively, maintaining a low energy consumption. As expected, {\mytextsf{TLB}} exhibits poor energy performance since energy is not accounted for in its optimization objective. {\mytextsf{BL1}} incurs significantly higher energy consumption than {\mytextsf{OPT}}, primarily due to suboptimal user-target pairings that demand more transmit power than the optimal case. {\mytextsf{BL2}} shows degraded energy performance as it operates under a fixed power and beamwidth regime, which limits adaptability to changing conditions. Although it always transmits at a constant power level, its energy consumption is not flat across $ \Lambda_{\mathrm{sinr}} $ values. This is because occasional alignment between users and targets can still enable beam sharing, yielding energy savings, though this effect is primarily confined to low $ \Lambda_{\mathrm{sinr}} $ values. Lastly, {\mytextsf{BL3}} achieves excellent energy performance by decoupling communication and sensing into separate timeslots. This separation enables the use of highly directional beams and minimal power tailored to each function, making it more energy-efficient than joint servicing. However, this advantage comes at the cost of time efficiency, as {\mytextsf{BL3}} consistently employs the maximum number of timeslots, an unfavorable outcome given the problem's emphasis on minimizing time consumption. We also show the joint consumption of time and energy, demonstrating that {\mytextsf{OPT}} achieves the best trade-off. It outperforms {\mytextsf{ELB}}, {\mytextsf{TLB}}, {\mytextsf{BL3}}, {\mytextsf{BL1}}, and {\mytextsf{BL2}} by $ 20\% $, $ 24\% $, $ 25\% $, $ 34\% $, and $ 88\% $, respectively.
\begin{figure}[!t]
 	\begin{subfigure}[b]{0.48\columnwidth}
		\begin{center}
			\includegraphics[]{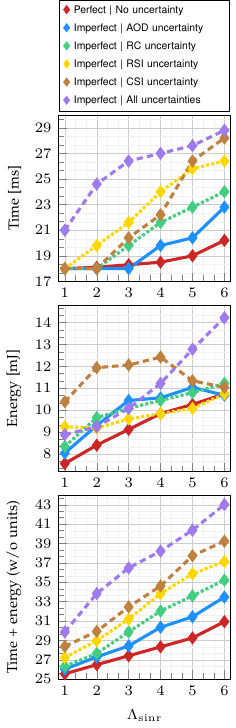}
			\caption{ Impact of imperfect information}
			\label{fig:results-scenario-6a}
		\end{center}
 	\end{subfigure}
 	\hfill 
 	\begin{subfigure}[b]{0.48\columnwidth}
		\begin{center}
			\includegraphics[]{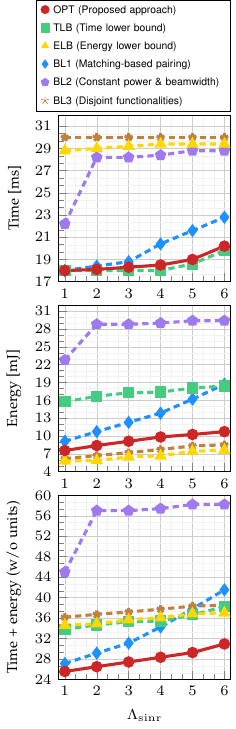}
			\caption{Comparison with baseline methods }
			\label{fig:results-scenario-6b}
		\end{center}
 	\end{subfigure}
 	\vspace{-2mm}
    \caption{Impact of imperfect information on resource economy and comparison with baseline methods (\emph{Scenario VII}). \emph{{\mytextsf{OPT}} is effective under both perfect and imperfect information conditions. Despite prioritizing time minimization, it preserves energy efficiency and is robust against uncertainties. Compared to five baseline schemes, it achieves the most favorable balance between time and energy consumption. These results highlight the critical role of optimized timeslot allocation, dynamic user-target pairing, adaptive beamforming, and flexible functionality selection in enabling efficient joint communication and sensing.}}
 	\label{fig:results-scenario-6}
 	\vspace{-4mm}
\end{figure}




\section{Conclusions} \label{sec:conclusions}

This work investigated a comprehensive \gls{RRM} framework for \gls{ISAC} systems, in which discrete power control, beam direction, and beamwidth at both the transmitter and receiver are jointly optimized. The proposed design further incorporates timeslot allocation, dynamic user-target pairing, and flexible functionality selection, all under imperfect information arising from estimation errors in \gls{AOD}, \gls{RC}, \gls{CSI}, and \gls{RSI}. The objective is to jointly minimize time and energy consumption, with a strict prioritization of time efficiency. The resulting \gls{RRM} problem is formulated as a multi-objective, semi-infinite, nonconvex \gls{MINLP}. By exploiting the underlying problem structure and revealing hidden convexity, we derived an exact reformulation as a \gls{MISDP}, which enables the computation of globally optimal solutions using general-purpose solvers. Through extensive simulations, we demonstrated the individual and joint impact of the key design components, namely timeslot allocation, adaptive beam configuration, dynamic user-target pairing, and timeslot-level functionality selection. The results show that the proposed framework effectively allocates resources to preserve feasibility under increasingly stringent requirements and imperfect information conditions. In particular, feasibility is primarily maintained through adaptive transmit power increases, with timeslot extension serving as a secondary mechanism, in accordance with the imposed prioritization of time minimization. Compared to a broad range of baseline schemes, the proposed approach consistently achieves the most favorable tradeoff between time and energy consumption. It closely approaches the performance of the time-optimal lower bound while substantially outperforming energy-centric and non-adaptive baselines. Overall, this work provides a versatile and practically grounded optimization framework for \gls{ISAC} systems, offering valuable design insights for future high-frequency networks operating under time-critical, energy-constrained, and imperfect-information conditions.  

\section*{Acknowledgment} \label{sec:acknowledgment}
This research was supported by the Federal Ministry of Research, Technology and Space (BMFTR) under Grant 16KIS2411.

\bibliographystyle{IEEEtran}
\bibliography{IEEEabrv,ref}

\vspace*{-2\baselineskip}

\begin{IEEEbiography}[{\includegraphics[width=1in,height=1.25in,clip,keepaspectratio]{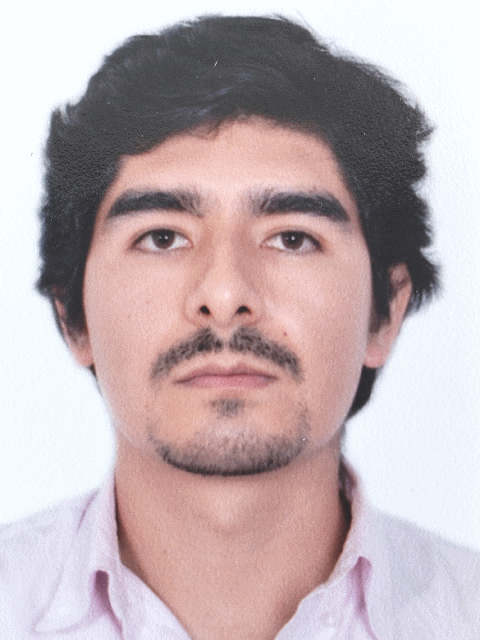}}]{Luis F. Abanto-Leon} received the M.Sc. degree in communications engineering from Tohoku University, Japan, in 2015, and the Ph.D. degree in computer science from Technische Universität Darmstadt, Germany, in 2023. Since March 2024, he has been a Postdoctoral Researcher with the Faculty of Electrical Engineering and Information Technology, Ruhr-Universität Bochum, Germany. His research expertise include optimization theory, signal processing, and machine learning, with a focus on algorithm design for radio resource management in 5G/6G wireless networks.
\end{IEEEbiography}

\vskip -2\baselineskip plus -1fil

\begin{IEEEbiography}[{\includegraphics[width=1in,height=1.25in,clip,keepaspectratio]{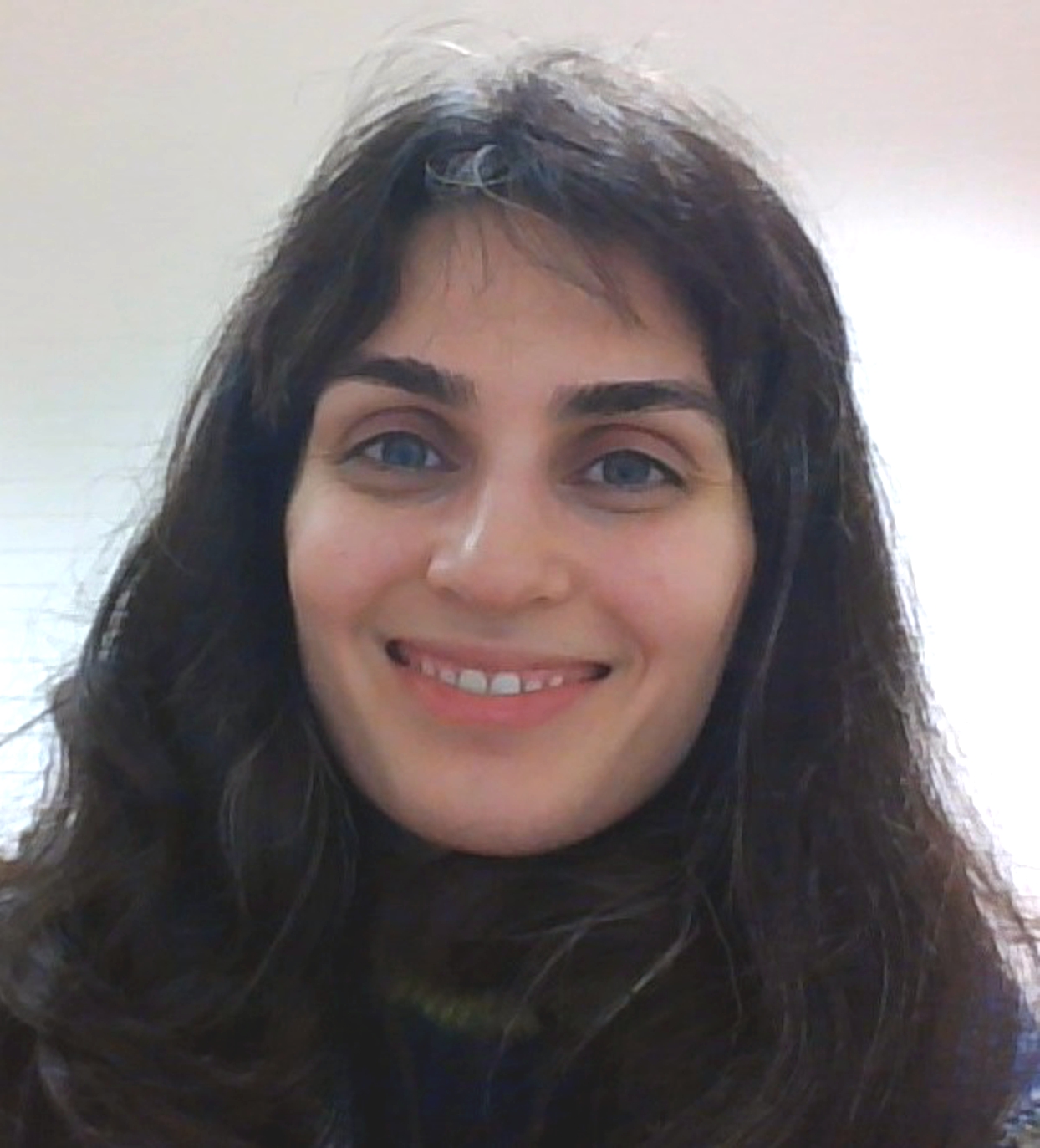}}]{Setareh Maghsudi} received the Ph.D. degree (summa cum laude) from the Technical University of Berlin, Berlin, Germany, in 2015. From 2015 to 2017, she was a Postdoctoral with the University of Manitoba, Winnipeg, MB, Canada, and Yale University, New Haven, CT, USA. From 2017 to 2023, she was an Assistant Professor with the Technical University of Berlin, Berlin, Germany, and Tübingen University, Tübingen, Germany. She is currently a full Professor with the Ruhr-University of Bochum, Bochum, Germany, and a Senior Research Scientist with the Fraunhofer Heinrich-Hertz Institute. Her research interests include the intersection of network analysis and optimization, game theory, machine learning, and data science. She was the recipient several competitive fellowships, awards, and research grants from different institutes, including the German Research Foundation, German Ministry of Education and Research, and Japan Society for the Promotion of Science.
\end{IEEEbiography}

\clearpage


\begin{appendices}

\renewcommand{\thesectiondis}[2]{\Alph{section}:}

\setcounter{equation}{0}
\setcounter{table}{0}
\renewcommand{\theequation}{A.\arabic{equation}}
\renewcommand{\thetable}{A.\arabic{table}}

\begin{table*}[!t]
	\begin{center}
	\renewcommand{\arraystretch}{1.3}
	\fontsize{6}{5.5}\selectfont
	\setlength\tabcolsep{2.8pt}
	\centering
	\caption{Categorization of most relevant related work.}
	\label{tab:related-literature}
	\begin{adjustbox}{width=1\textwidth,center}
	\begin{tabular}{c c c c c c c c c c c c c c c c} 
	\toprule
	\multirow{3}{*}{Works} & 
	\multirow{3}{*}{System} & 
	\multirow{3}{*}{\makecell{Integration \\ level}} & 
	\multirow{3}{*}{\makecell{Time \\ allocation}} & 
	\multirow{3}{*}{\makecell{Beamforming \\ type}} & 
	\multicolumn{4}{c}{Beam adaptation} & 
	\multirow{3}{*}{\makecell{Functionality \\ selection}} & 
	\multirow{3}{*}{\makecell{User-target \\ pairing}} & 
	\multicolumn{4}{c}{Imperfect information} & 
	\multirow{3}{*}{\makecell{Resource \\ economy }}  
	\\
	\cmidrule(lr){6-9}
	\cmidrule(lr){12-15}
	& 
	& 
	& 
	&
	& 
	\makecell{Direction} & 
	\makecell{Power} & 
	\makecell{Beamwidth} & 
	\makecell{Side} & 
	& 
	& 
	\makecell{CSI} & 
	\makecell{AOD} & 
	\makecell{RC} & 
	\makecell{RSI} & 
	\\
	\midrule
	
	\cite{xiao2024:simultaneous-multi-beam-sweeping-mmwave-massive-mimo-integrated-sensing-communication, rahman2019:joint-communication-radar-sensing-5g-mobile-network-compressive-sensing} &
	\multicolumn{1}{>{\columncolor{green!20}}c}{\tabularCenterstack{c}{\textbf{ISAC}}} &
	\multicolumn{1}{>{\columncolor{green!20}}c}{\tabularCenterstack{c}{\textbf{Single} \\ \textbf{waveform}}} &
	N/A &
	\multicolumn{1}{>{\columncolor{green!20}}c}{\tabularCenterstack{c}{\textbf{Analog}}} &
	\multicolumn{1}{>{\columncolor{green!20}}c}{\tabularCenterstack{c}{\textbf{Adaptive} \\ \textbf{\& discrete}}} &
	Fixed &
	Fixed &
	\multicolumn{1}{>{\columncolor{green!20}}c}{\tabularCenterstack{c}{\textbf{Tx \& Rx}}} &
	N/A &
	Fixed &
	\xmark &
	\xmark &
	\xmark &
	\xmark &
	None
	\\
	\midrule

	\cite{wang2024:resource-allocation-isac-networks-application-target-tracking} &
	\multicolumn{1}{>{\columncolor{green!20}}c}{\tabularCenterstack{c}{\textbf{ISAC}}} &
	\multicolumn{1}{>{\columncolor{green!20}}c}{\tabularCenterstack{c}{\textbf{Single} \\ \textbf{waveform}}} &
	\multicolumn{1}{>{\columncolor{green!20}}c}{\tabularCenterstack{c}{\textbf{Discrete} \\ \textbf{multi-slot}}} &
	\multicolumn{1}{>{\columncolor{green!20}}c}{\tabularCenterstack{c}{\textbf{Analog}}} &
	\multicolumn{1}{>{\columncolor{green!20}}c}{\tabularCenterstack{c}{\textbf{Adaptive} \\ \textbf{\& discrete}}} &
	\makecell{Adaptive \\ \& continuous} &
	Fixed &
	\multicolumn{1}{>{\columncolor{green!20}}c}{\tabularCenterstack{c}{\textbf{Tx \& Rx}}} &
	N/A &
	Fixed &
	\xmark &
	\xmark &
	\xmark &
	\xmark &
	None
	\\
	\midrule

	\cite{hersyandika2024:guard-beams-coverage-enhancement-ue-centered-isac-analog-multi-beamforming, ding2018:beam-index-modulation-wireless-communication-analog-beamforming} &
	\multicolumn{1}{>{\columncolor{green!20}}c}{\tabularCenterstack{c}{\textbf{ISAC}}} &
	\multicolumn{1}{>{\columncolor{green!20}}c}{\tabularCenterstack{c}{\textbf{Single} \\ \textbf{waveform}}} &
	N/A &
	\multicolumn{1}{>{\columncolor{green!20}}c}{\tabularCenterstack{c}{\textbf{Analog}}} &
	\multicolumn{1}{>{\columncolor{green!20}}c}{\tabularCenterstack{c}{\textbf{Adaptive} \\ \textbf{\& discrete}}} &
	\makecell{Adaptive \\ \& continuous} &
	Fixed &
	\multicolumn{1}{>{\columncolor{green!20}}c}{\tabularCenterstack{c}{\textbf{Tx \& Rx}}} &
	N/A &
	Fixed &
	\xmark &
	\xmark &
	\xmark &
	\xmark &
	None
	\\
	\midrule

	\cite{karacora2023:event-based-beam-tracking-dynamic-beamwidth-adaptation-terahertz-communications, feng2021:beamwidth-optimization-5g-nr-millimeter-wave-cellular-networks-multi-armed-bandit-approach, chung2021:adaptive-beamwidth-control-mmwave-beam-tracking} &
	Comm. &
	N/A &
	N/A &
	\multicolumn{1}{>{\columncolor{green!20}}c}{\tabularCenterstack{c}{\textbf{Analog}}} &
	Fixed &
	Fixed &
	\makecell{Adaptive \\ \& continuous} &
	\multicolumn{1}{>{\columncolor{green!20}}c}{\tabularCenterstack{c}{\textbf{Tx \& Rx}}} &
	N/A &
	N/A &
	\xmark &
	\xmark &
	\xmark &
	\xmark &
	None 
	\\
	\midrule

	\cite{du2023:integrated-sensing-communications-v2i-networks-dynamic-predictive-beamforming-extended-vehicle-targets, zhang2023:robust-beamforming-design-uav-communications-integrated-sensing-communication} &
	\multicolumn{1}{>{\columncolor{green!20}}c}{\tabularCenterstack{c}{\textbf{ISAC}}} &
	\multicolumn{1}{>{\columncolor{green!20}}c}{\tabularCenterstack{c}{\textbf{Single} \\ \textbf{waveform}}} &
	N/A &
	\multicolumn{1}{>{\columncolor{green!20}}c}{\tabularCenterstack{c}{\textbf{Analog}}} &
	Fixed &
	Fixed &
	\makecell{Adaptive \\ \& continuous} &
	\multicolumn{1}{>{\columncolor{green!20}}c}{\tabularCenterstack{c}{\textbf{Tx \& Rx}}} &
	N/A &
	Fixed &
	\xmark &
	\xmark &
	\xmark &
	\xmark &
	None 
	\\
	\midrule
	
	\cite{liu2020:cognitive-dwell-time-allocation-distributed-radar-sensor-networks-tracking-cone-programming, zhang2023:fast-solver-dwell-time-allocation-phased-array-radar} &
	Sens. &
	N/A &
	Continuous &
	Digital &
	Fixed &
	\makecell{Adaptive \\ \& continuous} &
	Fixed &
	\multicolumn{1}{>{\columncolor{green!20}}c}{\tabularCenterstack{c}{\textbf{Tx \& Rx}}} &
	N/A &
	N/A &
	\xmark &
	\xmark &
	\xmark &
	\xmark &
	None
	\\
	\midrule

	\cite{abanto2020:hydrawave-multi-group-multicast-hybrid-precoding-low-latency-scheduling-ubiquitous-industry-4-0-mmwave-communications, nguyen2017:joint-fractional-time-allocation-beamforming-downlink-multiuser-miso-systems} &
	Comm. &
	N/A &
	Continuous &
	\makecell{Digital} & 
	\multicolumn{3}{c}{\makecell{Beamformer design not subject \\ to practical considerations}} & 
	Tx &
	N/A &
	N/A &
	\xmark &
	\xmark &
	\xmark &
	\xmark &
	\multicolumn{1}{>{\columncolor{green!20}}c}{\tabularCenterstack{c}{\textbf{Time}}}
	\\
	\midrule

	\cite{xu2023:sensing-enhanced-secure-communication-joint-time-allocation-beamforming-design} &
	\multicolumn{1}{>{\columncolor{green!20}}c}{\tabularCenterstack{c}{\textbf{ISAC}}} &
	\makecell{Independent \\ waveforms} &
	Continuous &
	\makecell{Digital} &
	\multicolumn{3}{c}{\makecell{Beamformer design not subject \\ to practical considerations}} & 
	Tx &
	N/A &
	Fixed &
	\xmark &
	\multicolumn{1}{>{\columncolor{green!20}}c}{\tabularCenterstack{c}{\bcmark}} &
	\multicolumn{1}{>{\columncolor{green!20}}c}{\tabularCenterstack{c}{\bcmark}} &
	\xmark &
	None 
	\\
	\midrule

	\cite{li2022:multi-point-integrated-sensing-communication-fusion-model-functionality-selection} &
	\multicolumn{1}{>{\columncolor{green!20}}c}{\tabularCenterstack{c}{\textbf{ISAC}}} &
	\makecell{Single \\ waveform} &
	N/A &
	Digital &
	Fixed &
	\makecell{Adaptive \\ \& continuous} &
	Fixed &
	\multicolumn{1}{>{\columncolor{green!20}}c}{\tabularCenterstack{c}{\textbf{Tx \& Rx}}} &
	\makecell{One at a \\ time} &
	Fixed &
	\xmark &
	\xmark &
	\xmark &
	\xmark &
	None
	\\
	\midrule

	\cite{abanto2024:hierarchical-functionality-prioritization-multicast-isac-optimal-admission-control-discrete-phase-beamforming} &
	\multicolumn{1}{>{\columncolor{green!20}}c}{\tabularCenterstack{c}{\textbf{ISAC}}} &
	\multicolumn{1}{>{\columncolor{green!20}}c}{\tabularCenterstack{c}{\textbf{Single} \\ \textbf{waveform}}} &
	N/A &
	\multicolumn{1}{>{\columncolor{green!20}}c}{\tabularCenterstack{c}{\textbf{Analog}}} &
	\makecell{Adaptive \\ \& discrete} &
	Fixed &
	Fixed &
	\multicolumn{1}{>{\columncolor{green!20}}c}{\tabularCenterstack{c}{\textbf{Tx \& Rx}}} &
	\makecell{One is \\ prioritized} &
	Fixed &
	\xmark &
	\multicolumn{1}{>{\columncolor{green!20}}c}{\tabularCenterstack{c}{\bcmark}} &
	\xmark &
	\xmark &
	None 
	\\
	\midrule

	\cite{dou2024:channel-sharing-aided-integrated-sensing-communication-energy-efficient-sensing-scheduling-approach} &
	\multicolumn{1}{>{\columncolor{green!20}}c}{\tabularCenterstack{c}{\textbf{ISAC}}} &
	\makecell{Independent \\ waveforms} &
	N/A &
	Digital &
	\multicolumn{3}{c}{\makecell{Beamformer design not subject \\ to practical considerations}} & 
	\multicolumn{1}{>{\columncolor{green!20}}c}{\tabularCenterstack{c}{\textbf{Tx \& Rx}}} &
	N/A &
	\multicolumn{1}{>{\columncolor{green!20}}c}{\tabularCenterstack{c}{\textbf{Dynamic}}} &
	\xmark &
	\xmark &
	\xmark &
	\xmark &
	None 
	\\
	\midrule

	\cite{cazzella2024:deep-learning-based-target-to-user-association-integrated-sensing-communication-systems} &
	\multicolumn{1}{>{\columncolor{green!20}}c}{\tabularCenterstack{c}{\textbf{ISAC}}} &
	\multicolumn{1}{>{\columncolor{green!20}}c}{\tabularCenterstack{c}{\textbf{Single} \\ \textbf{waveform}}} &
	N/A &
	\makecell{Hybrid} &
	\multicolumn{3}{c}{\makecell{Beamformer design not subject \\ to practical considerations}} & 
	\multicolumn{1}{>{\columncolor{green!20}}c}{\tabularCenterstack{c}{\textbf{Tx \& Rx}}} &
	N/A &
	\multicolumn{1}{>{\columncolor{green!20}}c}{\tabularCenterstack{c}{\textbf{Dynamic}}} &
	\xmark &
	\xmark &
	\xmark &
	\xmark &
	None
	\\
	\midrule

	\cite{abanto2024:optimal-user-target-scheduling-user-target-pairing-low-resolution-phase-only-beamforming-isac-systems} &
	\multicolumn{1}{>{\columncolor{green!20}}c}{\tabularCenterstack{c}{\textbf{ISAC}}} &
	\multicolumn{1}{>{\columncolor{green!20}}c}{\tabularCenterstack{c}{\textbf{Single} \\ \textbf{waveform}}} &
	N/A &
	\multicolumn{1}{>{\columncolor{green!20}}c}{\tabularCenterstack{c}{\textbf{Analog}}} &
	\makecell{Adaptive \\ \& continuous} &
	Fixed &
	\makecell{Adaptive \\ \& continuous} &
	\multicolumn{1}{>{\columncolor{green!20}}c}{\tabularCenterstack{c}{\textbf{Tx \& Rx}}} &
	N/A &
	\multicolumn{1}{>{\columncolor{green!20}}c}{\tabularCenterstack{c}{\textbf{Dynamic}}} &
	\xmark &
	\xmark &
	\xmark &
	\xmark &
	None 
	\\
	\midrule

	\cite{he2023:full-duplex-communication-isac-joint-beamforming-power-optimization} &
	\multicolumn{1}{>{\columncolor{green!20}}c}{\tabularCenterstack{c}{\textbf{ISAC}}} &
	\makecell{Independent \\ waveforms} &
	N/A &
	Digital &
	\multicolumn{3}{c}{\makecell{Beamformer design not subject \\ to practical considerations}} & 
	\multicolumn{1}{>{\columncolor{green!20}}c}{\tabularCenterstack{c}{\textbf{Tx \& Rx}}} &
	N/A &
	N/A &
	\xmark &
	\xmark &
	\xmark &
	\xmark &
	\multicolumn{1}{>{\columncolor{green!20}}c}{\tabularCenterstack{c}{\textbf{Energy}}} 
	\\
	\midrule

	\cite{huang2022:coordinated-power-control-network-integrated-sensing-communication} &
	\multicolumn{1}{>{\columncolor{green!20}}c}{\tabularCenterstack{c}{\textbf{ISAC}}} &
	\multicolumn{1}{>{\columncolor{green!20}}c}{\tabularCenterstack{c}{\textbf{Single} \\ \textbf{waveform}}} &
	N/A &
	N/A &
	\multicolumn{3}{c}{\makecell{Power control involving multiple \\ single-antenna BSs}} & 
	Tx &
	N/A &
	Fixed &
	\xmark &
	\xmark &
	\xmark &
	\xmark &
	\multicolumn{1}{>{\columncolor{green!20}}c}{\tabularCenterstack{c}{\textbf{Energy}}}  
	\\
	\midrule

	\cite{jia2024:physical-layer-security-optimization-cramer-rao-bound-metric-isac-systems-sensing-specific-imperfect-csi-model} &
	\multicolumn{1}{>{\columncolor{green!20}}c}{\tabularCenterstack{c}{\textbf{ISAC}}} &
	\makecell{Independent \\ waveforms} &
	N/A &
	Digital &
	\multicolumn{3}{c}{\makecell{Beamformer design not subject \\ to practical considerations}} & 
	\multicolumn{1}{>{\columncolor{green!20}}c}{\tabularCenterstack{c}{\textbf{Tx \& Rx}}} &
	N/A &
	N/A &
	\xmark &
	\xmark &
	\xmark &
	\xmark &
	None 
	\\
	\midrule

	\cite{lyu2024:dual-robust-integrated-sensing-communication-beamforming-csi-imperfection-location-uncertainty} &
	\multicolumn{1}{>{\columncolor{green!20}}c}{\tabularCenterstack{c}{\textbf{ISAC}}} &
	\makecell{Independent \\ waveforms} &
	N/A &
	Digital &
	\multicolumn{3}{c}{\makecell{Beamformer design not subject \\ to practical considerations}} & 
	Tx &
	N/A &
	N/A &
	\multicolumn{1}{>{\columncolor{green!20}}c}{\tabularCenterstack{c}{\bcmark}} &
	\multicolumn{1}{>{\columncolor{green!20}}c}{\tabularCenterstack{c}{\bcmark}} &
	\xmark &
	\xmark &
	None
	\\
	\midrule

	\cite{khalili2024:advanced-isac-design-movable-antennas-accounting-dynamic-rcs} &
	\multicolumn{1}{>{\columncolor{green!20}}c}{\tabularCenterstack{c}{\textbf{ISAC}}} &
	\makecell{Independent \\ waveforms} &
	N/A &
	Digital &
	\multicolumn{3}{c}{\makecell{Beamformer design not subject \\ to practical considerations}} & 
	\multicolumn{1}{>{\columncolor{green!20}}c}{\tabularCenterstack{c}{\textbf{Tx \& Rx}}} &
	N/A &
	N/A &
	\xmark &
	\xmark &
	\multicolumn{1}{>{\columncolor{green!20}}c}{\tabularCenterstack{c}{\bcmark}} &
	\xmark &
	\multicolumn{1}{>{\columncolor{green!20}}c}{\tabularCenterstack{c}{\textbf{Energy}}}  
	\\
	\midrule

	\cite{tang2021:self-interference-resistant-ieee-802-11-ad-based-joint-communication-automotive-radar-design} &
	\multicolumn{1}{>{\columncolor{green!20}}c}{\tabularCenterstack{c}{\textbf{ISAC}}} &
	\multicolumn{1}{>{\columncolor{green!20}}c}{\tabularCenterstack{c}{\textbf{Single} \\ \textbf{waveform}}} &
	N/A &
	\multicolumn{1}{>{\columncolor{green!20}}c}{\tabularCenterstack{c}{\textbf{Analog}}} &
	Fixed &
	Fixed &
	Fixed &
	\multicolumn{1}{>{\columncolor{green!20}}c}{\tabularCenterstack{c}{\textbf{Tx \& Rx}}} &
	N/A &
	Fixed &
	\xmark &
	\xmark &
	\xmark &
	\multicolumn{1}{>{\columncolor{green!20}}c}{\tabularCenterstack{c}{\bcmark}} &
	None 
	\\
	\midrule

	\textbf{Proposed} &
	\multicolumn{1}{>{\columncolor{green!20}}c}{\tabularCenterstack{c}{\textbf{ISAC}}} &
	\multicolumn{1}{>{\columncolor{green!20}}c}{\tabularCenterstack{c}{\textbf{Single} \\ \textbf{waveform}}} &
	\multicolumn{1}{>{\columncolor{green!20}}c}{\tabularCenterstack{c}{\textbf{Discrete} \\ \textbf{multi-slot}}} &
	\multicolumn{1}{>{\columncolor{green!20}}c}{\tabularCenterstack{c}{\textbf{Analog}}} &
	\multicolumn{1}{>{\columncolor{green!20}}c}{\tabularCenterstack{c}{\textbf{Adaptive} \\ \textbf{\& discrete}}} & 
	\multicolumn{1}{>{\columncolor{green!20}}c}{\tabularCenterstack{c}{\textbf{Adaptive} \\ \textbf{\& discrete}}} & 
	\multicolumn{1}{>{\columncolor{green!20}}c}{\tabularCenterstack{c}{\textbf{Adaptive} \\ \textbf{\& discrete}}} & 
	\multicolumn{1}{>{\columncolor{green!20}}c}{\tabularCenterstack{c}{\textbf{Tx \& Rx}}} &
	\multicolumn{1}{>{\columncolor{green!20}}c}{\tabularCenterstack{c}{\textbf{None, one,} \\ \textbf{or two}}} &
	\multicolumn{1}{>{\columncolor{green!20}}c}{\tabularCenterstack{c}{\textbf{Dynamic}}} &
	\multicolumn{1}{>{\columncolor{green!20}}c}{\tabularCenterstack{c}{\bcmark}} &
	\multicolumn{1}{>{\columncolor{green!20}}c}{\tabularCenterstack{c}{\bcmark}} &
	\multicolumn{1}{>{\columncolor{green!20}}c}{\tabularCenterstack{c}{\bcmark}} &
	\multicolumn{1}{>{\columncolor{green!20}}c}{\tabularCenterstack{c}{\bcmark}} &
	\multicolumn{1}{>{\columncolor{green!20}}c}{\tabularCenterstack{c}{\textbf{Energy} \\ \textbf{\& time}}} 
	\\
	\bottomrule
	\end{tabular}
	\end{adjustbox}
	\end{center}
	\vspace{-4mm}
\end{table*}

\section{Related Work} \label{app:related-work}

\Cref{tab:related-literature} (see next page) provides a concise visual summary of the key distinctions between the problem addressed in this work and existing studies in the literature, clearly highlighting the substantial differences.

In particular, compared with prior work, our approach uniquely integrates timeslot allocation into the \gls{RRM} design, enables adaptive beam control across multiple dimensions (direction, transmit/receive power, and beamwidth), incorporates functionality selection at the timeslot level, and allows for optimizable user-target pairing.

\setcounter{equation}{0}
\setcounter{table}{0}
\renewcommand{\theequation}{B.\arabic{equation}}
\renewcommand{\thetable}{B.\arabic{table}}
\section{Main Parameters} \label{app:table-parameters}

\cref{tab:parameters-variables} summarizes the most relevant parameters and variable used in the system model.
\begin{table}[H]
 \centering
	\scriptsize
	\caption{Parameters and variables}
	\begin{tabular}{|m{6.7cm} |c|}
		\hline
		\centering {\bf Parameters and Variables} & \bf Notation	\\ 
		\hline
		Number of transmit antennas at the BS  		& $ N_\mathrm{tx} $ \\ 
		Number of receive antennas at the BS  		& $ N_\mathrm{rx} $ \\ 
		Number of transmit codewords at the BS		& $ L_\mathrm{tx} $ \\ 
		Number of receive codewords at the BS		& $ L_\mathrm{rx} $ \\ 
		Number of users										& $ U $ \\
		Number of targets									& $ T $ \\
		Finite horizon									& $ S $ \\
		Set indexing the transmit codewords 			& $ \mathcal{L}_\mathrm{tx} $ \\
		Set indexing the receive codewords 			& $ \mathcal{L}_\mathrm{rx} $ \\
		Set indexing the users  						& $ \mathcal{U} $ \\
		Set indexing the targets 					& $ \mathcal{T} $ \\
		Set indexing the timeslots 						& $ \mathcal{S} $ \\
		Set of channel vectors with errors for $ \mathsf{U}_u $ 			& $ \mathcal{H}_u $ \\
		Set of angles of arrival with errors for $ \mathsf{T}_t $ & $ \Theta_t $ \\
		Set of reflection coefficients with errors for $ \mathsf{T}_t $ & $ \Psi_t $ \\
		Communication channel between the BS and $ \mathsf{U}_u $  			& $ \mathbf{h}_u $ \\
		Sensing channel between the BS and $ \mathsf{T}_t $ 	    	& $ \mathbf{G}_t $ \\
		Residual self-interference channel of the BS's transceiver   	& $ \mathbf{R} $ \\
		Communication SNR threshold for $ \mathsf{U}_u $			& $ \Upsilon_{\mathrm{snr},u} $ \\
		Sensing SINR threshold for $ \mathsf{T}_t $						& $ \Lambda_{\mathrm{sinr},t} $ \\
		Number of timeslots allocated to $ \mathsf{U}_u $			& $ S_{\mathrm{com},u} $ \\
		Number of timeslots allocated to $ \mathsf{T}_t $			& $ S_{\mathrm{sen},t} $ \\
		Communication noise power 							& $ \sigma^2_\mathrm{com} $ \\ 
		Sensing noise power 								& $ \sigma^2_\mathrm{sen} $ \\ 
		Timeslot duration 								& $ S_\mathrm{dur} $ \\ 
		\hline
		Variable indicating whether $ \mathsf{S}_s $ is used for communication & $ \kappa_s $ \\
		Variable indicating whether $ \mathsf{S}_s $ is used for sensing & $ \zeta_s $ \\
		Variable indicating whether $ \mathsf{S}_s $ is used & $ \gamma_s $ \\
		Variable indicating whether $ \mathsf{U}_u $	is served in $ \mathsf{S}_s $ & $ \mu_{u,s} $ \\
		Variable indicating whether $ \mathsf{T}_t $	is served in $ \mathsf{S}_s $ & $ \tau_{t,s} $ \\
		Variable indicating whether $ \textbf{b}_b $ is used in $ \mathsf{S}_s $ & $ \chi_{b,s} $ \\
		Variable indicating whether $ \textbf{c}_c $ is used in $ \mathsf{S}_s $ & $ \rho_{c,s} $ \\
		Transmit beamformer in $ \mathsf{S}_s $			& $ \mathbf{w}_s $ \\
		Receive beamformer in $ \mathsf{S}_s $				& $ \mathbf{v}_s $ \\
		\hline
	\end{tabular}
	\label{tab:parameters-variables}
\end{table}

\setcounter{equation}{0}
\setcounter{table}{0}
\renewcommand{\theequation}{C.\arabic{equation}}
\section{Details on parameters $ S $ and $ \widetilde{S}$} \label{app:finite-horizon}

Without considering pairing, the total number of timeslots required for communication is $ \hat{S} = \sum_{u \in \mathcal{U}} S_{\mathrm{com},u} $ and the total number of timeslots required for sensing is $ \check{S} = \sum_{t \in \mathcal{T}} S_{\mathrm{sen},t} $. 

In the worst case, the \gls{BS} requires up to $ \widetilde{S} = \hat{S} + \check{S} $ timeslots, which corresponds to independent timeslot allocation for users and targets without any pairing. In contrast, the minimum number of required timeslots is $ \bar{S} = \max \big\{ \hat{S}, \check{S} \big\} $ which is achieved when the maximum possible number of user-target pairs share the same timeslots. 

Consequently, the finite scheduling horizon must satisfy $ \bar{S} \leq S \leq \widetilde{S} $. In practice, we intentionally choose $ S \ll \widetilde{S} $, since a smaller finite horizon naturally promotes user-target pairing and promotes timeslot sharing whenever feasible.


\setcounter{equation}{0}
\setcounter{table}{0}
\renewcommand{\theequation}{D.\arabic{equation}}
\section{Details on user-target pairing} \label{app:pairing-details}

Constraints $ \mathrm{C}_{1} $, $ \mathrm{C}_{2} $, $ \mathrm{C}_{5} $, $ \mathrm{C}_{6} $, $ \mathrm{C}_{7} $, and $ \mathrm{C}_{8} $ ensure one-to-one pairing between users and targets whenever feasible, as explained in the following.

For any timeslot $ \mathsf{S}_s $, constraints $ \mathrm{C}_{1} $, $ \mathrm{C}_{2} $, $ \mathrm{C}_{6} $, and $ \mathrm{C}_{8} $ enforce $ \sum_{u \in \mathcal{U}} \mu_{u,s} \in \left\lbrace 0, 1 \right\rbrace $ and $ \sum_{t \in \mathcal{T}} \tau_{t,s} \in \left\lbrace 0, 1 \right\rbrace $. If either or both summations are zero, no pairing occurs in $ \mathsf{S}_s $ since both sensing and communication must be feasible in a timeslot for pairing to be established.  When both summations equal one, constraints $ \mathrm{C}_{5} $ and $ \mathrm{C}_{7} $ ensure that exactly one variable $ \mu_{u,s} $ and one variable $ \tau_{t,s} $ are equal to one. This implies that exactly one user and one target are served in $ \mathsf{S}_s $, establishing a one-to-one pairing between them.

Importantly, pairing may vary across timeslots, as the user and target requirements can differ. Importantly, pairing is not predetermined but rather optimized through \gls{RRM} design, being established only when providing time or energy savings.

\setcounter{equation}{0}
\setcounter{table}{0}
\renewcommand{\theequation}{E.\arabic{equation}}
\section{Extended Model} \label{app:linearized-model}

From (\ref{eqn:response-matrix}) and (\ref{eqn:steering-vectors}), we have the following relation
\begin{align}
	\mathbf{A} \left( \theta_t \right) = \tfrac{1}{\sqrt{N_\mathrm{rx} N_\mathrm{tx}}}  \mathrm{e}^{\mathrm{j} \boldsymbol{\phi}_\mathrm{rx} \cos \left( \theta_t \right)} \left( \mathrm{e}^{\mathrm{j} \boldsymbol{\phi}_\mathrm{tx} \cos \left( \theta_t \right)} \right)^\mathrm{H}
\end{align}

Particularly, note that $ \mathrm{e}^{\mathrm{j} \boldsymbol{\phi}_\mathrm{rx} \cos \left( \theta_t \right)} \left( \mathrm{e}^{\mathrm{j} \boldsymbol{\phi}_\mathrm{tx} \cos \left( \theta_t \right)} \right)^\mathrm{H} = \mathrm{e}^{\mathrm{j} \boldsymbol{\phi}_\mathrm{rx} \cos \left( \theta_t \right)} \left( \mathrm{e}^{-\mathrm{j} \boldsymbol{\phi}_\mathrm{tx} \cos \left( \theta_t \right)} \right)^\mathrm{T} = \mathrm{e}^{\mathrm{j} \boldsymbol{\phi}_\mathrm{rx} \cos \left( \theta_t \right)} \mathrm{e}^{-\mathrm{j} \boldsymbol{\phi}_\mathrm{tx}^\mathrm{T} \cos \left( \theta_t \right)} $. To simplify the expression, we define $ \boldsymbol{\Phi} = \mathbf{1}^\mathrm{T} \otimes \boldsymbol{\phi}_\mathrm{rx} - \mathbf{1} \otimes \boldsymbol{\phi}_\mathrm{tx}^\mathrm{T} $, which allows us to express $ \mathbf{A} \left( \theta_t \right) $ as $ \mathbf{A} \left( \theta_t \right) = \mathrm{e}^{\mathrm{j} \boldsymbol{\Phi}  \cos \left( \theta_t \right)} $. Taking the first-order Taylor expansion of $ \mathbf{A} \left( \theta_t \right) $ around $ \widebar{\theta}_t $, results in
\begin{align}
	\mathbf{A} \left( \theta_t \right) \approx \mathbf{A} \left( \widebar{\theta}_t \right) + \nabla_{\theta_t} \mathbf{A} \left( \theta_t \right) \Bigr\rvert_{\theta_t = \widebar{\theta}_t} \left( \theta_t - \widebar{\theta}_t \right),
\end{align}
where $ \nabla_{\theta_t} \mathbf{A} \left( \theta_t \right) = \mathbf{A} \left( {\theta}_t \right) \circ \mathbf{E} \left( {\theta}_t \right) $ and $ \mathbf{E} \left( {\theta}_t \right) = - \mathrm{j} \boldsymbol{\Phi} \sin \left( {\theta}_t \right) $. Noting that $ \Delta \theta_t = \theta_t - \widebar{\theta}_t $ and defining $ \widetilde{\mathbf{A}} \left( {\theta}_t \right) = \mathbf{A} \left( {\theta}_t \right) \circ \mathbf{E} \left( {\theta}_t \right) $ leads to
\begin{align}
	\mathbf{A} \left( \theta_t \right) \approx  \mathbf{A} \left( \widebar{\theta}_t \right) + \widetilde{\mathbf{A}} \left( \widebar{\theta}_t \right) \Delta \theta_t, 
\end{align}
which is the same expression shown in (\ref{eqn:linearized-response-matrix}).

\setcounter{equation}{0}
\setcounter{table}{0}
\renewcommand{\theequation}{F.\arabic{equation}}
\section{Weight Design} \label{app:proof-lemma-weights}

To ensure cohesive timeslot allocation, it is essential to prioritize consecutive timeslots while giving higher precedence to the earliest available timeslots. Using mathematical induction, we derive a simple weighting policy based on an arithmetic sequence.

Given $ S $ timeslots, we first assume that the \gls{RRM} task is fulfilled using a single timeslot, i.e., $ \sum_{s \in \mathcal{S}} \gamma_s = 1 $. The optimal allocation satisfying the cohesiveness requirement is given by $ \gamma_1 = 1 $ and $ \gamma_s = 0 $, $ \forall s \in \mathcal{S} \setminus \{1\} $, yielding $ \omega_1 < \omega_{i_1} $, $ \forall i_1 \in \{2, 3, \dots, S\} $.  

Next, we consider the case where the RRM task is fulfilled using two timeslots, i.e., $ \sum_{s \in \mathcal{S}} \gamma_s = 2 $. The most cohesive allocation is $ \gamma_1 = \gamma_2 = 1 $ and $ \gamma_s = 0 $, $ \forall s \in \mathcal{S} \setminus \{1,2\} $. This implies that $ \omega_1 + \omega_2 < \omega_{i_1} + \omega_{i_2} $ for all pairs $ (i_1, i_2) \neq (1,2) $. To tighten this inequality, we assume $ \omega_1 = \omega_{i_1} $, leading to $ \omega_2 < \omega_{i_2} $, $ \forall i_2 \in \{3, 4, \dots, S\} $.  

Following the induction process, we generalize to the case where all but one timeslot is allocated, i.e., $ \sum_{s \in \mathcal{S}} \gamma_s = S - 1 $. The optimal allocation is $ \gamma_1 = \gamma_2 = \dots = \gamma_{S-1} = 1 $ and $ \gamma_S = 0 $, resulting in $ \omega_1 + \omega_2 + \dots + \omega_{S-1} < \omega_{i_1} + \omega_{i_2} + \dots + \omega_{i_{S-1}}, \quad \forall (i_1, i_2, \dots, i_{S-1}) \neq (1,2,\dots,S-1). $
To refine this inequality, we assume $ \omega_s = \omega_{i_s} $, $ \forall s \in \{1, 2, \dots, S-2\} $, yielding $ \omega_{S-1} < \omega_S $.  

Collecting all results, we derive the general rule that achieving cohesive timeslot allocation requires satisfying $ \omega_s < \omega_{s+1} $. To generalize this rule in a structured manner, we adopt an arithmetic sequence, defined as
$ \omega_s = \Delta_0 + (s-1) \cdot \Delta_\omega $, where $ \Delta_0 $ is the initial value and $ \Delta_\omega $ represents the increment.

\setcounter{equation}{0}
\setcounter{table}{0}
\renewcommand{\theequation}{G.\arabic{equation}}
\section{Proof to Proposition 1} \label{app:proof-proposition-1}

Leveraging the binary nature of $ \kappa_s $ and $ \zeta_s $, we apply propositional calculus to eliminate the binding logical operation between these variables. Thus, we introduce the substitutions $ a = \gamma_s $, $ b = \kappa_s $, and $ \zeta_s = c $ to simplify notation. The steps involved in the decoupling process are shown below.
\begin{align*} 
	& a \leftrightarrow b \lor c, 
	\\
	& \equiv \left( a \rightarrow \left( b \lor c \right) \right) \wedge \left( \left( b \lor c \right) \rightarrow a \right), 
	\\
	& \equiv \left( \neg a \lor \left( b \lor c \right) \right) \wedge \left( \neg \left( b \lor c \right) \lor a \right), 
	\\
	& \equiv \left( \neg a \lor b \lor c \right) \wedge \left( \left( \neg b \wedge \neg c \right) \lor a \right),
	\\
	& \equiv \left( \neg a \lor b \lor c \right) \wedge \left( \left( \neg b \lor a \right) \wedge \left( \neg c \lor a \right) \right), 
\end{align*} 
\begin{align*} 
	& \equiv \left( \neg a \lor b \lor c \right) \wedge \left( \neg b \lor a \right) \wedge \left( \neg c \lor a \right), 
	\\
	& \equiv \left( \left( 1 - a \right) + b + c \geq 1 \right) \wedge \left( \left( 1 - b \right) + a \geq 1 \right)
	\\
	& ~~~~ \wedge \left( \left( 1 - c \right) + a \geq 1 \right),
	\\
	& \equiv \left( a \leq b + c \right) \wedge \left( a \geq b \right) \wedge \left( a \geq c \right).
\end{align*}

We know that $ a $ must be binary since it results from a logical operation. Yet, this condition is not always enforced by the inequalities above, particularity, when $ b = c = 1 $. While adding constraint $ a \in \{0,1\} $ would prevent this issue, an alternative approach is to define $ a $ within a continuous set, as specified by constraint $ a \leq 1 $. This choice maintains the integrality of $ a $ while reducing the number of binary variables, thereby lowering the complexity associated with binary variable search.

Hence, from the results above, constraint $ \mathrm{C}_{3} $ can be equivalently expressed as the intersection of $ \mathrm{D}_{1}: \gamma_s \leq \kappa_s + \zeta_s, \forall s \in \mathcal{S} $, $ \mathrm{D}_{2}: \gamma_s \geq \kappa_s, \forall s \in \mathcal{S} $, $ \mathrm{D}_{3}: \gamma_s \geq \zeta_s, \forall s \in \mathcal{S} $, and $ \mathrm{D}_{4}: \gamma_s \leq 1, \forall s \in \mathcal{S} $.

\setcounter{equation}{0}
\setcounter{table}{0}
\renewcommand{\theequation}{H.\arabic{equation}}
\section{Proof to Proposition 2} \label{app:proof-proposition-2}

Using the definitions in (\ref{eqn:communication-snr}), and substituting $ \mathbf{w}_s $, as defined in $ \mathrm{C}_{11} $, into  $ \mathrm{C}_{16} $ leads to constraint $ \mathrm{E}_{\mathrm{aux},1}: \min_{ \mathbf{h}_u \in \mathcal{H}_u } \left| \sum_{b \in \mathcal{L}_\mathrm{tx}} \mathbf{h}_u^\mathrm{H} \mathbf{Z}_\mathrm{tx} \mathbf{b}_b \cdot \chi_{b,s} \cdot \mu_{u,s} \right|^2 \geq \Upsilon_{\mathrm{snr},u} \cdot \sigma_\mathrm{com}^2 \cdot \mu_{u,s}, \forall u \in \mathcal{U}, s \in \mathcal{S} $. By analyzing $ \mu_{u,s} $ in $ \mathrm{E}_{\mathrm{aux},1} $, we distinguish two distinct cases, shown in the following.

\noindent \textbf{Case \circled{\footnotesize{1}} $\Rightarrow$ } When $ \mu_{u,s} = 0 $, the inequality is tight as both sides collapse to zero.

\noindent \textbf{Case \circled{\footnotesize{2}} $\Rightarrow$ } When $ \mu_{u,s} = 1 $, the inequality collapses to $ \min_{ \mathbf{h}_u \in \mathcal{H}_u } \left| \sum_{b \in \mathcal{L}_\mathrm{tx}} \mathbf{h}_u^\mathrm{H} \mathbf{Z}_\mathrm{tx} \mathbf{b}_b \cdot \chi_{b,s} \right|^2 \geq \Upsilon_{\mathrm{snr},u} \cdot \sigma_\mathrm{com}^2, \forall u \in \mathcal{U}, s \in \mathcal{S} $.

The two cases above can be combined into a single constraint, leading to  $ \mathrm{E}_{\mathrm{aux},2}: \min_{ \mathbf{h}_u \in \mathcal{H}_u } \left| \sum_{b \in \mathcal{L}_\mathrm{tx}} \mathbf{h}_u^\mathrm{H} \mathbf{Z}_\mathrm{tx} \mathbf{b}_b \cdot \chi_{b,s} \right|^2  \geq \Upsilon_{\mathrm{snr},u} \cdot \sigma_\mathrm{com}^2 \cdot \mu_{u,s}, \forall u \in \mathcal{U}, s \in \mathcal{S} $. Note that the \gls{LHS} of $ \mathrm{E}_{\mathrm{aux},2} $ is not necessarily zero when $ \mu_{u,s} = 0 $. This is because $ \mu_{u,s} = 0 $ only indicates that \( \mathsf{U}_{u} \) is not served in $ \mathsf{S}_s $, but it does not preclude $ \mathsf{S}_s $ from being used to serve another $ \mathsf{U}_{u'} $, where $ u' \neq u $. To improve the tractability of $ \mathrm{E}_{\mathrm{aux},2} $, where  variables $ \chi_{b,s} $ are coupled both additively and multiplicatively in its \gls{LHS}, we apply Jensen's inequality to uncover useful relationships.

Considering arbitrary $ \mathsf{U}_u $ and $ \mathsf{S}_s $, and applying Jensen's inequality to the absolute value term in $ \mathrm{E}_{\mathrm{aux},2} $, we obtain $ \mathrm{E}_{\mathrm{aux},3}: \left| \sum_{b \in \mathcal{L}_\mathrm{tx}} \mathbf{h}_u^\mathrm{H} \mathbf{Z}_\mathrm{tx} \mathbf{b}_b \cdot \chi_{b,s} \right| \leq \sum_{b \in \mathcal{L}_\mathrm{tx}} \left| \mathbf{h}_u^\mathrm{H} \mathbf{Z}_\mathrm{tx} \mathbf{b}_b \cdot \chi_{b,s} \right| $. Let $ \mathbf{b}_{i} $ denote the beamformer serving $ \mathsf{U}_u $ in $ \mathsf{S}_s $. Since only one transmit beamformer can be selected per active timeslot (as enforced by constraint $ \mathrm{C}_{10} $), we have $ \chi_{i,s} = 1 $ and $ \chi_{b,s} = 0 $,  $ \forall b \in \mathcal{L}_\mathrm{tx} \setminus \{i\} $. With this observation, $ \mathrm{E}_{\mathrm{aux},3} $ simplifies to the equivalent form $ \mathrm{E}_{\mathrm{aux},4}: \left| \sum_{b \neq i} \mathbf{h}_u^\mathrm{H} \mathbf{Z}_\mathrm{tx} \mathbf{b}_b \cdot \chi_{b,s} + \mathbf{h}_u^\mathrm{H} \mathbf{Z}_\mathrm{tx} \mathbf{b}_{i} \cdot \chi_{i,s} \right| \leq \sum_{b \neq i} \left| \mathbf{h}_u^\mathrm{H} \mathbf{Z}_\mathrm{tx} \mathbf{b}_b \cdot \chi_{b,s} \right| + \left| \mathbf{h}_u^\mathrm{H} \mathbf{Z}_\mathrm{tx} \mathbf{b}_{i} \cdot \chi_{i,s} \right| $. Exponentiating both sides yields $ \mathrm{E}_{\mathrm{aux},5}: \left| \sum_{b \neq i} \mathbf{h}_u^\mathrm{H} \mathbf{Z}_\mathrm{tx} \mathbf{b}_b \cdot \chi_{b,s} + \mathbf{h}_u^\mathrm{H} \mathbf{Z}_\mathrm{tx} \mathbf{b}_{i} \cdot \chi_{i,s} \right|^2 \leq \left( \sum_{b \neq i} \left| \mathbf{h}_u^\mathrm{H} \mathbf{Z}_\mathrm{tx} \mathbf{b}_b \cdot \chi_{b,s} \right| \right)^2 + \left| \mathbf{h}_u^\mathrm{H} \mathbf{Z}_\mathrm{tx} \mathbf{b}_{i} \cdot \chi_{i,s} \right|^2 + 2 \left( \sum_{b \neq i} \left| \mathbf{h}_u^\mathrm{H} \mathbf{Z}_\mathrm{tx} \mathbf{b}_b \cdot \chi_{b,s} \right| \right) \left| \mathbf{h}_u^\mathrm{H} \mathbf{Z}_\mathrm{tx} \mathbf{b}_{i} \cdot \chi_{i,s} \right| $. Note that the terms under the summations are zero, which makes $ \mathrm{E}_{\mathrm{aux},5} $ hold at the equality, i.e., $ \left| \sum_{b \in \mathcal{L}_\mathrm{tx}} \mathbf{h}_u^\mathrm{H} \mathbf{Z}_\mathrm{tx} \mathbf{b}_b \cdot \chi_{b,s} \right|^2 = \sum_{b \in \mathcal{L}_\mathrm{tx}} \left| \mathbf{h}_u^\mathrm{H} \mathbf{Z}_\mathrm{tx} \mathbf{b}_b \cdot \chi_{b,s} \right|^2 $. This means that Jensen's inequality is tight when applied to $ \mathrm{E}_{\mathrm{aux},2} $, allowing us to equivalently express $ \mathrm{E}_{\mathrm{aux},2} $ as $ \mathrm{E}_{\mathrm{aux},6}: \min_{ \mathbf{h}_u \in \mathcal{H}_u } \sum_{b \in \mathcal{L}_\mathrm{tx}} \left| \mathbf{h}_u^\mathrm{H} \mathbf{Z}_\mathrm{tx} \mathbf{b}_b \cdot \chi_{b,s} \right|^2 \geq \Upsilon_{\mathrm{snr},u} \cdot \sigma_\mathrm{com}^2 \cdot \mu_{u,s}, \forall u \in \mathcal{U}, s \in \mathcal{S} $. Note that extracting the summation from the quadratic term leads to a expression with simplified structure.

Although $ \mathrm{E}_{\mathrm{aux},6} $ is more tractable than $ \mathrm{C}_{16} $, it still contains an infinite number of constraints due to all possible channels for $ \mathsf{U}_u $, as defined by $ \mathrm{C}_{15} $. Thus, we let $ \mathrm{E}_{\mathrm{aux},6} $ absorb $ \mathrm{C}_{15} $, leading to $ \mathrm{E}_{\mathrm{aux},7}:  \min_{ \left\| \boldsymbol{\Delta} \mathbf{h}_u \right\|_2^2 \leq \epsilon_\mathrm{CSI}^2 } \sum_{b \in \mathcal{L}_\mathrm{tx}} \frac{1}{\sigma_\mathrm{com}^2} \left| \left( \widebar{\mathbf{h}}_u^\mathrm{H} + \boldsymbol{\Delta} \mathbf{h}_u^\mathrm{H} \right) \mathbf{Z}_\mathrm{tx} \mathbf{b}_b \right|^2 \chi_{b,s} - \Upsilon_{\mathrm{snr},u} \cdot  \mu_{u,s} \geq 0, \forall u \in \mathcal{U}, s \in \mathcal{S} $, where $ \chi_{b,s} $ has been factored out of the absolute value since $ \chi_{b,s}^2 = \chi_{b,s} $. In particular, $ \mathrm{E}_{\mathrm{aux},7} $ can be split into constraints $ \mathrm{E}_{\mathrm{aux},8} $ and $ \mathrm{E}_{\mathrm{aux},9} $, defined as follows: $ \mathrm{E}_{\mathrm{aux},8}: \boldsymbol{\Delta} \mathbf{h}_u^\mathrm{H}  \boldsymbol{\Delta} \mathbf{h}_u - \epsilon_\mathrm{CSI}^2 \leq 0, \forall u \in \mathcal{U}, $ and $ \mathrm{E}_{\mathrm{aux},9}: \boldsymbol{\Delta} \mathbf{h}_u^\mathrm{H} \mathbf{L}_s \boldsymbol{\Delta} \mathbf{h}_u +  \widebar{\mathbf{h}}_u^\mathrm{H} \mathbf{L}_s \boldsymbol{\Delta} \mathbf{h}_u + \boldsymbol{\Delta} \mathbf{h}_u^\mathrm{H} \mathbf{L}_s \widebar{\mathbf{h}}_u + \widebar{\mathbf{h}}_u^\mathrm{H} \mathbf{L}_s \widebar{\mathbf{h}}_u + \Upsilon_{\mathrm{snr},u} \cdot  \mu_{u,s} \leq 0, \forall u \in \mathcal{U}, s \in \mathcal{S} $, where $ \mathbf{L}_s = - \sum_{b \in \mathcal{L}_\mathrm{tx}} \frac{1}{\sigma_\mathrm{com}^2}  \mathbf{Z}_\mathrm{tx} \mathbf{b}_b \mathbf{b}_b^\mathrm{H} \mathbf{Z}_\mathrm{tx}^\mathrm{H} \chi_{b,s} $. 

We apply \textbf{Lemma \ref{lem:s-procedure}} to $ \mathrm{E}_{\mathrm{aux},8} $ and $ \mathrm{E}_{\mathrm{aux},9} $, leading to constraints $ \mathrm{E}_{1} $ and $ \mathrm{E}_{2} $, shown below,
\begin{align*}
			   	& \mathrm{E}_{1}: \left[ 
 			   					\begin{matrix}
 			   					   	\mathbf{E}_{u,s}^\mathrm{a}
 			   					   	& 
 			   					   	\mathbf{e}_{u,s}^\mathrm{b}
 			   					   	\\
	   					    		\mathbf{e}_{u,s}^\mathrm{c}
	   					    		& 
	   					    		e_{u,s}^\mathrm{d}
 			   				  	\end{matrix} 
			   					\right] 
					   		\succcurlyeq \mathbf{0}, \forall u \in \mathcal{U}, s \in \mathcal{S},
				\\
				& \mathrm{E}_{2}: \alpha_u \geq 0, \forall u \in \mathcal{U},
\end{align*}
where $ \mathbf{E}_{u,s}^\mathrm{a} = \alpha_u \mathbf{I} - \mathbf{L}_s $, $ \mathbf{e}_{u,s}^\mathrm{b} = - \mathbf{L}_s \widebar{\mathbf{h}}_u $, $ \mathbf{e}_{u,s}^\mathrm{c} = - \widebar{\mathbf{h}}_u^\mathrm{H} \mathbf{L}_s $, $ e_{u,s}^\mathrm{d} =  -\alpha_u \cdot \epsilon_\mathrm{CSI}^2 - \widebar{\mathbf{h}}_u^\mathrm{H} \mathbf{L}_s \widebar{\mathbf{h}}_u - \Upsilon_{\mathrm{snr},u} \cdot \mu_{u,s} $, and $ \alpha_u $ are newly introduced auxiliary variables needed by the \emph{S-Procedure}.

\setcounter{equation}{0}
\setcounter{table}{0}
\renewcommand{\theequation}{I.\arabic{equation}}
\section{Proof to Proposition 3} \label{app:proof-proposition-3}

The numerator and denominator of $ \mathsf{SINR}_\mathrm{sen}^{t,s} $, as defined in (\ref{eqn:sensing-sinr}), exhibit couplings between $ \mathbf{w}_s $, $ \mathbf{v}_s $,  and $ \tau_{t,s} $. By substituting the definitions of $ \mathbf{w}_s $ and $ \mathbf{v}_s $, as specified in $ \mathrm{C}_{11} $ and $ \mathrm{C}_{14} $, into $ \mathrm{C}_{21} $, we obtain constraint $ \mathrm{F}_{\mathrm{aux},1}:  \min_{\psi_t \in \Psi_t} \min_{\theta_t \in \Theta_t} \min_{\mathbf{R} \in \mathcal{R}} $ \resizebox{\linewidth}{!}{$ \frac{ \left| \sum_{c \in \mathcal{L}_\mathrm{rx}} \sum_{b \in \mathcal{L}_\mathrm{tx}}  \mathbf{c}_c^\mathrm{H} \mathbf{Z}_\mathrm{rx}^\mathrm{H} \mathbf{G}_t \mathbf{Z}_\mathrm{tx} \mathbf{b}_b \cdot \chi_{b,s} \cdot \rho_{c,s} \cdot \tau_{t,s} \right|^2}{ \left| \sum_{c \in \mathcal{L}_\mathrm{rx}} \sum_{b \in \mathcal{L}_\mathrm{tx}}  \mathbf{c}_c^\mathrm{H} \mathbf{Z}_\mathrm{rx}^\mathrm{H} \mathbf{R} \mathbf{Z}_\mathrm{tx} \mathbf{b}_b \cdot \chi_{b,s} \cdot \rho_{c,s} \cdot \tau_{t,s} \right|^2 + \sigma_\mathrm{sen}^2 \sum_{c \in \mathcal{L}_\mathrm{rx}} \left\| \mathbf{Z}_\mathrm{rx} \mathbf{c}_c \right\|_2^2 \rho_{c,s}} $} $  \geq \Lambda_{\mathrm{sinr},t} \cdot \tau_{t,s}, \forall t \in \mathcal{T}, s \in \mathcal{S} $, revealing a triple coupling between $ \chi_{b,s} $, $ \rho_{c,s} $, and $ \tau_{t,s} $, complicating tractability\footnote{The noise term in the denominator, expressed originally as $ \sigma_\mathrm{sen}^2 \left\| \mathbf{v}_s \right\|_2^2  $, can be equivalently recast as $ \sigma_\mathrm{sen}^2 \sum_{c \in \mathcal{L}_\mathrm{rx}} \left\| \mathbf{Z}_\mathrm{rx} \mathbf{c}_c \right\|_2^2 \rho_{c,s} $. Further elaboration on the transformation of this quadratic term can be found in \textbf{Appendix \ref{app:proof-proposition-8}}.}.

Thus, we introduce new variables $ \pi_{b,c,t,s} $, defined in constraint $ \mathrm{F}_{\mathrm{aux},2}: \pi_{b,c,t,s} = \chi_{b,s} \cdot \rho_{c,s} \cdot \tau_{t,s}, \forall b \in \mathcal{L}_\mathrm{tx}, c \in \mathcal{L}_\mathrm{rx}, t \in \mathcal{T}, s \in \mathcal{S} $, and use propositional calculus to get rid of the coupling between $ \chi_{b,s} $, $ \rho_{c,s} $, and $ \tau_{t,s} $. This process results in constraints $ \mathrm{F}_{1} $, $ \mathrm{F}_{2} $, $ \mathrm{F}_{3} $, and $ \mathrm{F}_{4} $, which are derived as shown at the top of the next page.
\begin{figure*} [!t]
	\fontsize{9}{12}\selectfont
	\begin{align*} 
		& \pi_{b,c,t,s} \leftrightarrow \chi_{b,s} \cdot \rho_{c,s} \cdot \tau_{t,s}, \forall b \in \mathcal{L}_\mathrm{tx}, c \in \mathcal{L}_\mathrm{rx}, t \in \mathcal{T}, s \in \mathcal{S},
		\\
		& \equiv \left( \pi_{b,c,t,s} \rightarrow \left( \chi_{b,s} \wedge \rho_{c,s} \wedge \tau_{t,s} \right) \right) \wedge \left( \left( \chi_{b,s} \wedge \rho_{c,s} \wedge \tau_{t,s} \right) \rightarrow \pi_{b,c,t,s} \right), \forall b \in \mathcal{L}_\mathrm{tx}, c \in \mathcal{L}_\mathrm{rx}, t \in \mathcal{T}, s \in \mathcal{S},  
		\\
		& \equiv \left( \neg \pi_{b,c,t,s} \lor \left( \chi_{b,s} \wedge \rho_{c,s} \wedge \tau_{t,s} \right) \right) \wedge \left( \neg \left( \chi_{b,s} \wedge \rho_{c,s} \wedge \tau_{t,s} \right) \lor \pi_{b,c,t,s} \right), \forall b \in \mathcal{L}_\mathrm{tx}, c \in \mathcal{L}_\mathrm{rx}, t \in \mathcal{T}, s \in \mathcal{S},    
		\\
		& \equiv \left(  \left( \neg \pi_{b,c,t,s} \lor \chi_{b,s} \right) \wedge \left( \neg \pi_{b,c,t,s} \lor \rho_{c,s} \right) \right) \wedge \left( \neg \pi_{b,c,t,s} \lor \tau_{t,s} \right) \wedge \left( \left( \neg \chi_{b,s} \lor \neg \rho_{c,s} \lor \neg \tau_{t,s} \right) \lor \pi_{b,c,t,s} \right), \forall b \in \mathcal{L}_\mathrm{tx}, c \in \mathcal{L}_\mathrm{rx}, t \in \mathcal{T}, s \in \mathcal{S},    
		\\
		& \equiv \left( \left( 1 - \pi_{b,c,t,s} \right) + \chi_{b,s} \geq 1 \right) \wedge \left( \left( 1 - \pi_{b,c,t,s} \right) + \rho_{c,s} \geq 1 \right) \wedge \left( \left( 1 - \pi_{b,c,t,s} \right) + \tau_{t,s} \geq 1 \right) \wedge 
		\\
		& ~~~ \left( \left( 1 - \chi_{b,s} \right) + \left( 1 - \rho_{c,s} \right) + \left( 1 - \tau_{t,s} \right) + \pi_{b,c,t,s} \geq 1 \right), \forall b \in \mathcal{L}_\mathrm{tx}, c \in \mathcal{L}_\mathrm{rx}, t \in \mathcal{T}, s \in \mathcal{S},  
		\\
		& \equiv \underbrace{\left( \pi_{b,c,t,s} \leq \chi_{b,s} \right)}_{\mathrm{F}_1} \wedge \underbrace{\left( \pi_{b,c,t,s} \leq \rho_{c,s} \right)}_{\mathrm{F}_2} \wedge \underbrace{\left( \pi_{b,c,t,s} \leq \tau_{t,s} \right)}_{\mathrm{F}_3} \wedge \underbrace{\left( 2 + \pi_{b,c,t,s} \geq \chi_{b,s} + \rho_{c,s} + \tau_{t,s} \right)}_{\mathrm{F}_4}, \forall b \in \mathcal{L}_\mathrm{tx}, c \in \mathcal{L}_\mathrm{rx}, t \in \mathcal{T}, s \in \mathcal{S}.  
	\end{align*} 
	\hrulefill
\end{figure*}

Additionally, note that variables $ \pi_{b,c,t,s} $ must be binary, which is not always enforced by $ \mathrm{F}_{1} $, $ \mathrm{F}_{2} $, $ \mathrm{F}_{3} $, and $ \mathrm{F}_{4} $. While we could explicitly define $ \pi_{b,c,t,s} $ as binary, we instead define these variables within a continuous closed set, as specified in constraint $ \mathrm{F}_{5} : \pi_{b,c,t,s} \in [0,1] $, $ \forall b \in \mathcal{L}_\mathrm{tx}, c \in \mathcal{L}_\mathrm{rx}, t \in \mathcal{T}, s \in \mathcal{S} $. This formulation is valid because the integrality of $ \pi_{b,c,t,s} $ is inherently preserved, given that the resulting inequalities intersect only at $ 0 $ or $ 1 $, thereby reducing the number of explicitly defined binary variables. Substituting $ \pi_{b,c,t,s} $ into $ \mathrm{F}_{\mathrm{aux},1} $ leads to $ \mathrm{F}_{6}: \min_{\psi_t \in \Psi_t} \min_{\theta_t \in \Theta_t} \min_{\mathbf{R} \in \mathcal{R}} $ $ 
 	\frac
 		{ 
 		\left| \sum_{c \in \mathcal{L}_\mathrm{rx}} \sum_{b \in \mathcal{L}_\mathrm{tx}}  \mathbf{c}_c^\mathrm{H} \mathbf{Z}_\mathrm{rx}^\mathrm{H} \mathbf{G}_t \mathbf{Z}_\mathrm{tx} \mathbf{b}_b \cdot \pi_{b,c,t,s} \right|^2 
 		}
 		{ 
 		\left| \sum_{c \in \mathcal{L}_\mathrm{rx}} \sum_{b \in \mathcal{L}_\mathrm{tx}}  \mathbf{c}_c^\mathrm{H} \mathbf{Z}_\mathrm{rx}^\mathrm{H} \mathbf{R} \mathbf{Z}_\mathrm{tx} \mathbf{b}_b \cdot \pi_{b,c,t,s} \right|^2 + \sigma_\mathrm{sen}^2  \sum_{c \in \mathcal{L}_\mathrm{rx}} \left\| \mathbf{Z}_\mathrm{rx} \mathbf{c}_c \right\|_2^2 \rho_{c,s}
 		} \geq  $ $
 	\Lambda_{\mathrm{sinr},t} \cdot \tau_{t,s}, \forall t \in \mathcal{T}, s \in \mathcal{S} $. As a result, $ \mathrm{C}_{11} $, $ \mathrm{C}_{14} $, and $ \mathrm{C}_{21} $ are equivalently recast as the intersection of $ \mathrm{F}_{1} $, $ \mathrm{F}_{2} $, $ \mathrm{F}_{3} $, $ \mathrm{F}_{4} $, $ \mathrm{F}_{5} $, and $ \mathrm{F}_{6} $.

\setcounter{equation}{0}
\setcounter{table}{0}
\renewcommand{\theequation}{J.\arabic{equation}}
\section{Proof to Proposition 4} \label{app:proof-proposition-4}

By leveraging (\ref{eqn:response-matrix}), we equivalently express $ \mathrm{F}_{6} $ as constraint $ \mathrm{G}_{\mathrm{aux},1}: \min_{\psi_t \in \Psi_t} \left| \psi_t \right|^2 \Big\{ \min_{\theta_t \in \Theta_t} \min_{\mathbf{R} \in \mathcal{R}} $ $  \frac{ 
	\left| \sum_{c \in \mathcal{L}_\mathrm{rx}} \sum_{b \in \mathcal{L}_\mathrm{tx}}  \mathbf{c}_c^\mathrm{H} \mathbf{Z}_\mathrm{rx}^\mathrm{H} \mathbf{A} \left( \theta_t \right) \mathbf{Z}_\mathrm{tx} \mathbf{b}_b \cdot \pi_{b,c,t,s} \right|^2 } 
	{ \left| \sum_{c \in \mathcal{L}_\mathrm{rx}} \sum_{b \in \mathcal{L}_\mathrm{tx}}  \mathbf{c}_c^\mathrm{H} \mathbf{Z}_\mathrm{rx}^\mathrm{H} \mathbf{R} \mathbf{Z}_\mathrm{tx} \mathbf{b}_b \cdot \pi_{b,c,t,s} \right|^2 + \sigma_\mathrm{sen}^2 \sum_{c \in \mathcal{L}_\mathrm{rx}} \left\| \mathbf{Z}_\mathrm{rx} \mathbf{c}_c \right\|_2^2 \rho_{c,s} 
	} \Big\} $ $ \geq \Lambda_{\mathrm{sinr},t} \cdot \tau_{t,s}, \forall t \in \mathcal{T}, s \in \mathcal{S} $, since $ \left| \psi_t \right|^2  $ depends only on $ t $, and can therefore be repositioned within the expression. Further manipulating $ \mathrm{G}_{\mathrm{aux},1} $ leads to $ \mathrm{G}_{\mathrm{aux},2}: \min_{\theta_t \in \Theta_t}  \min_{\mathbf{R} \in \mathcal{R}} $ $ \frac{ 
	\left| \sum_{c \in \mathcal{L}_\mathrm{rx}} \sum_{b \in \mathcal{L}_\mathrm{tx}}  \mathbf{c}_c^\mathrm{H} \mathbf{Z}_\mathrm{rx}^\mathrm{H} \mathbf{A} \left( \theta_t \right) \mathbf{Z}_\mathrm{tx} \mathbf{b}_b \cdot \pi_{b,c,t,s} \right|^2 } 
	{ \left| \sum_{c \in \mathcal{L}_\mathrm{rx}} \sum_{b \in \mathcal{L}_\mathrm{tx}}  \mathbf{c}_c^\mathrm{H} \mathbf{Z}_\mathrm{rx}^\mathrm{H} \mathbf{R} \mathbf{Z}_\mathrm{tx} \mathbf{b}_b \cdot \pi_{b,c,t,s} \right|^2 + \sigma_\mathrm{sen}^2 \sum_{c \in \mathcal{L}_\mathrm{rx}} \left\| \mathbf{Z}_\mathrm{rx} \mathbf{c}_c \right\|_2^2 \rho_{c,s}
	} \geq $ $ \frac{\Lambda_{\mathrm{sinr},t}}{\min_{\psi_t \in \Psi_t} \left| \psi_t \right|^2 } \cdot \tau_{t,s} , \forall t \in \mathcal{T}, s \in \mathcal{S} $.

To ensure that the sensing performance is met even under the worst-case \gls{RC}, we derive the corresponding worst-case \gls{SINR} threshold by analyzing the \gls{RHS} of $ \mathrm{G}_{\mathrm{aux},2} $. Particularly, the worst-case sensing \gls{SINR} threshold occurs when the denominator of the \gls{RHS} of $ \mathrm{G}_{\mathrm{aux},2} $ is minimum because it maximizes the ratio $ \frac{\Lambda_{\mathrm{sinr},t} }{\min_{\psi_t \in \Psi_t} \left| \psi_t \right|^2 } $, thereby imposing a more demanding constraint. Hence, let us consider $ \mathrm{G}_{\mathrm{aux},3}: \min_{\psi_t \in \Psi_t} \left| \psi_t \right|^2 $, which is equivalent to $ \mathrm{G}_{\mathrm{aux},4}: \min_{\left| \Delta \psi_t \right|^2 \leq \epsilon_\mathrm{RC}^2} \left| \widebar{\psi}_t + \Delta  \psi_t \right|^2 $ upon letting it absorb $ \mathrm{C}_{19} $.

Note that the quadratic term in $ \mathrm{G}_{\mathrm{aux},4} $ reaches its minimum when the unknown error $ \Delta \psi_t $ is aligned with the known value $ \widebar{\psi}_t $ but in opposite sense and has maximum power $ \epsilon_{\mathrm{RC}}^2 $, i.e., $ \Delta \psi_t = - \frac{\widebar{\psi}_t}{\left| \widebar{\psi}_t \right|} \epsilon_\mathrm{RC} $. Substituting this expression into $ \mathrm{G}_{\mathrm{aux},4} $ yields the worst-case sensing \gls{SINR} threshold, denoted as $ \widebar{\Lambda}_{\mathrm{sinr},t} = \frac{\Lambda_{\mathrm{sinr},t}}{\left( \left| \widebar{\psi}_t  \right| - \epsilon_{\mathrm{RC}} \right)^2 } $.

As a result, we can equivalently recast $ \mathrm{F}_{6} $ and $ \mathrm{C}_{19} $ as $ \mathrm{G}_{1}: \min_{\theta_t \in \Theta_t}  \min_{\mathbf{R} \in \mathcal{R}} $ $ \frac{ 
	\left| \sum_{c \in \mathcal{L}_\mathrm{rx}} \sum_{b \in \mathcal{L}_\mathrm{tx}}  \mathbf{c}_c^\mathrm{H} \mathbf{Z}_\mathrm{rx}^\mathrm{H} \mathbf{A} \left( \theta_t \right) \mathbf{Z}_\mathrm{tx} \mathbf{b}_b \cdot \pi_{b,c,t,s} \right|^2 } 
	{ \left| \sum_{c \in \mathcal{L}_\mathrm{rx}} \sum_{b \in \mathcal{L}_\mathrm{tx}}  \mathbf{c}_c^\mathrm{H} \mathbf{Z}_\mathrm{rx}^\mathrm{H} \mathbf{R} \mathbf{Z}_\mathrm{tx} \mathbf{b}_b \cdot \pi_{b,c,t,s} \right|^2 + \sigma_\mathrm{sen}^2 \sum_{c \in \mathcal{L}_\mathrm{rx}} \left\| \mathbf{Z}_\mathrm{rx} \mathbf{c}_c \right\|_2^2 \rho_{c,s}
	} \geq  \widebar{\Lambda}_{\mathrm{sinr},t} \cdot $ $ \tau_{t,s}, \forall t \in \mathcal{T}, s \in \mathcal{S} $.

\setcounter{equation}{0}
\setcounter{table}{0}
\renewcommand{\theequation}{K.\arabic{equation}}
\section{Proof to Proposition 5} \label{app:proof-proposition-5}

Since \gls{AOD} errors appear only in the numerator and \gls{RSI} errors only in the denominator, we can equivalently express $ \mathrm{G}_{1} $ as $ \mathrm{H}_{\mathrm{aux},1}: $ \resizebox{\linewidth}{!}{$  \frac{
	 \min_{\theta_t \in \Theta_t} \left| \sum_{c \in \mathcal{L}_\mathrm{rx}} \sum_{b \in \mathcal{L}_\mathrm{tx}}  \mathbf{c}_c^\mathrm{H} \mathbf{Z}_\mathrm{rx}^\mathrm{H} \mathbf{A} \left( \theta_t \right) \mathbf{Z}_\mathrm{tx} \mathbf{b}_b \cdot \pi_{b,c,t,s} \right|^2 
	 } 
	 { \max_{\mathbf{R} \in \mathcal{R}} \left| \sum_{c \in \mathcal{L}_\mathrm{rx}} \sum_{b \in \mathcal{L}_\mathrm{tx}}  \mathbf{c}_c^\mathrm{H} \mathbf{Z}_\mathrm{rx}^\mathrm{H} \mathbf{R} \mathbf{Z}_\mathrm{tx} \mathbf{b}_b \cdot \pi_{b,c,t,s} \right|^2 + \sigma_\mathrm{sen}^2 \sum_{c \in \mathcal{L}_\mathrm{rx}} \left\| \mathbf{Z}_\mathrm{rx} \mathbf{c}_c \right\|_2^2 \rho_{c,s} 
	 } $} 
	 $ \geq \widebar{\Lambda}_{\mathrm{sinr},t} \cdot \tau_{t,s}, \forall t \in \mathcal{T}, s \in \mathcal{S} $. Further manipulation allows  $ \mathrm{H}_{\mathrm{aux},1} $ to be recast as $ \mathrm{H}_{\mathrm{aux},2}: \frac{1}{\sigma_\mathrm{sen}^2} \min_{\theta_t \in \Theta_t} $ $  \left| \sum_{c \in \mathcal{L}_\mathrm{rx}} \sum_{b \in \mathcal{L}_\mathrm{tx}}  \mathbf{c}_c^\mathrm{H} \mathbf{Z}_\mathrm{rx}^\mathrm{H} \mathbf{A} \left( \theta_t \right) \mathbf{Z}_\mathrm{tx} \mathbf{b}_b \cdot \pi_{b,c,t,s} \right|^2 \geq \frac{1}{\sigma_\mathrm{sen}^2} \widebar{\Lambda}_{\mathrm{sinr},t} $ $ \cdot \tau_{t,s}  \max_{\mathbf{R} \in \mathcal{R}} $ $ \left| \sum_{c \in \mathcal{L}_\mathrm{rx}} \sum_{b \in \mathcal{L}_\mathrm{tx}}  \mathbf{c}_c^\mathrm{H} \mathbf{Z}_\mathrm{rx}^\mathrm{H} \mathbf{R} \mathbf{Z}_\mathrm{tx} \mathbf{b}_b \cdot \pi_{b,c,t,s} \right|^2 + \widebar{\Lambda}_{\mathrm{sinr},t} \cdot \tau_{t,s} \sum_{c \in \mathcal{L}_\mathrm{rx}} $ $ \left\| \mathbf{Z}_\mathrm{rx} \mathbf{c}_c \right\|_2^2 \rho_{c,s},  \forall t \in \mathcal{T}, s \in \mathcal{S} $.

Leveraging the transitivity property of inequalities, we introduce new variables $ \mathrm{H}_{1}: z_{t,s} \geq 0, \forall t \in \mathcal{T}, s \in \mathcal{S} $, which allows us to decouple the \gls{LHS} and \gls{RHS} of $ \mathrm{H}_{\mathrm{aux},2} $. This process yields $ \mathrm{H}_{2}: \frac{1}{\sigma_\mathrm{sen}^2} \min_{\theta_t \in \Theta_t}  \left| \sum_{c \in \mathcal{L}_\mathrm{rx}} \sum_{b \in \mathcal{L}_\mathrm{tx}}  \mathbf{c}_c^\mathrm{H} \mathbf{Z}_\mathrm{rx}^\mathrm{H} \mathbf{A} \left( \theta_t \right) \mathbf{Z}_\mathrm{tx} \mathbf{b}_b \cdot \pi_{b,c,t,s} \right|^2 $ $ \geq z_{t,s}, \forall t \in \mathcal{T}, s \in \mathcal{S} $, and $ \mathrm{H}_{3}: \frac{1}{\sigma_\mathrm{sen}^2} \widebar{\Lambda}_{\mathrm{sinr},t} \cdot \tau_{t,s} \cdot \max_{\mathbf{R} \in \mathcal{R}}  \big| \sum_{c \in \mathcal{L}_\mathrm{rx}} \sum_{b \in \mathcal{L}_\mathrm{tx}}  \mathbf{c}_c^\mathrm{H} \mathbf{Z}_\mathrm{rx}^\mathrm{H} \mathbf{R} \mathbf{Z}_\mathrm{tx} \mathbf{b}_b \cdot \pi_{b,c,t,s} \big|^2 + \widebar{\Lambda}_{\mathrm{sinr},t} \cdot \tau_{t,s}  $ $ \sum_{c \in \mathcal{L}_\mathrm{rx}}  \left\| \mathbf{Z}_\mathrm{rx} \mathbf{c}_c \right\|_2^2 \rho_{c,s} \leq z_{t,s}, \forall t \in \mathcal{T}, s \in \mathcal{S} $. 

Additionally, to enforce that variable $ z_{t,s} $ remains zero when $ \mathsf{T}_t $ is not allocated $ \mathsf{S}_s $, we introduce constraint $ \mathrm{H}_{4}: z_{t,s} \leq \tau_{t,s} \cdot \ddot{M}_{t}^\mathrm{UB}, \forall t \in \mathcal{T}, s \in \mathcal{S} $. Particularly, $ \tau_{t,s} = 0 $ forces $ \pi_{b,c,t,s} = 0 $, which in turn makes the \gls{LHS} of $ \mathrm{H}_{2} $ zero. Here, $ \ddot{M}_{t}^\mathrm{UB} $ is an upper bound for $ z_{t,s} $, obtained by maximizing the \gls{LHS} term in $ \mathrm{H}_{2} $, under the ideal assumption that there is no \gls{AOD} error. It is defined as $ \ddot{M}_{t}^\mathrm{UB} = \max_{b \in \mathcal{L}_\mathrm{tx}} \max_{c \in \mathcal{L}_\mathrm{rx}} \frac{\left| \mathbf{c}_c^\mathrm{H} \mathbf{Z}_\mathrm{rx}^\mathrm{H} \mathbf{A} \left( \theta_t \right) \mathbf{Z}_\mathrm{tx}  \mathbf{b}_b \right|^2}{\sigma_\mathrm{sen}^2} $.

\setcounter{equation}{0}
\setcounter{table}{0}
\renewcommand{\theequation}{L.\arabic{equation}}
\section{Proof to Proposition 6} \label{app:proof-proposition-6}

To uncover hidden structural relationships within $ \mathrm{H}_{2} $, we apply Jensen's inequality, which helps simplify the expression, revealing insights that may not be immediately apparent from the original form. 

Considering arbitrary $ \mathsf{T}_t $ and $ \mathsf{S}_s $, and applying Jensen's inequality we obtain $ \mathrm{I}_{\mathrm{aux},1}: \left| \sum_{c \in \mathcal{L}_\mathrm{rx}} \sum_{b \in \mathcal{L}_\mathrm{tx}}  \mathbf{c}_c^\mathrm{H} \mathbf{Z}_\mathrm{rx}^\mathrm{H} \mathbf{A} \left( \theta_t \right) \mathbf{Z}_\mathrm{tx} \mathbf{b}_b \cdot \pi_{b,c,t,s} \right|^2 \leq \sum_{c \in \mathcal{L}_\mathrm{rx}} $ $  \sum_{b \in \mathcal{L}_\mathrm{tx}} \big| \mathbf{c}_c^\mathrm{H} \mathbf{Z}_\mathrm{rx}^\mathrm{H} \mathbf{A} \left( \theta_t \right) \mathbf{Z}_\mathrm{tx} \mathbf{b}_b \cdot \pi_{b,c,t,s} \big|^2 $. This result is derived by applying a procedure similar to that used in \textbf{\cref{thm:proposition-2}}, where the analysis begins with the absolute value terms, followed by exponentiation and elimination of zero-valued terms.

If $ \mathbf{b}_{i} $ and $ \mathbf{c}_{j} $ denote the transmit and receive codewords for $ \mathsf{T}_t $ in $ \mathsf{S}_s $, then $ \chi_{i,s} = 1 $ and $ \rho_{j,s} = 1 $, whereas $ \chi_{b,s} = 0, \forall b \in \mathcal{L}_\mathrm{tx} \setminus \{i\} $, and $ \rho_{c,s} = 0, \forall c \in \mathcal{L}_\mathrm{rx} \setminus \{j\}  $, since only one transmit and one receive codeword can be chosen per timeslot when sensing is active. Selectively expanding the summations of both sides, $ \mathrm{I}_{\mathrm{aux},1} $ is equivalent to $ \mathrm{I}_{\mathrm{aux},2}: \big| \sum_{(c,b) \neq (j,i)} \mathbf{c}_c^\mathrm{H} \mathbf{Z}_\mathrm{rx}^\mathrm{H} \mathbf{A} \left( \theta_t \right) \mathbf{Z}_\mathrm{tx} \mathbf{b}_b \cdot \pi_{b,c,t,s} + \mathbf{c}_j^\mathrm{H} \mathbf{Z}_\mathrm{rx}^\mathrm{H} \mathbf{A} \left( \theta_t \right) \mathbf{Z}_\mathrm{tx} \mathbf{b}_i \cdot \pi_{i,j,t,s} \big|^2 \leq \sum_{(c,b) \neq (j,i)} \big| \mathbf{c}_c^\mathrm{H} \mathbf{Z}_\mathrm{rx}^\mathrm{H} \mathbf{A} \left( \theta_t \right) \mathbf{Z}_\mathrm{tx} \mathbf{b}_b \cdot \pi_{b,c,t,s} \big|^2 + \big| \mathbf{c}_j^\mathrm{H} \mathbf{Z}_\mathrm{rx}^\mathrm{H} \mathbf{A} \left( \theta_t \right) \mathbf{Z}_\mathrm{tx} \mathbf{b}_i \cdot \pi_{i,j,t,s} \big|^2 $.

The terms affected under the summations are all zero, making $ \mathrm{I}_{\mathrm{aux},2} $ hold at the equality and showing that Jensen's inequality is tight when applied to $ \mathrm{H}_{2} $. Based on this outcome, we equivalently redefine $ \mathrm{H}_{2} $ as $ \mathrm{I}_{\mathrm{aux},3}: \frac{1}{\sigma_\mathrm{sen}^2} \min_{\theta_t \in \Theta_t} \sum_{c \in \mathcal{L}_\mathrm{rx}} \sum_{b \in \mathcal{L}_\mathrm{tx}} \big| \mathbf{c}_c^\mathrm{H} \mathbf{Z}_\mathrm{rx}^\mathrm{H} \mathbf{A} \left( \theta_t \right) \mathbf{Z}_\mathrm{tx} \mathbf{b}_b \big|^2  \pi_{b,c,t,s} \geq z_{t,s}, \forall t \in \mathcal{T}, s \in \mathcal{S} $, where $ \pi_{b,c,t,s} $ can be factored out of the absolute value since it is binary.

Considering $ \mathbf{d}_{b,c} = \left( \mathbf{Z}_\mathrm{tx}^* \mathbf{b}_b^* \right) \otimes \left(  \mathbf{Z}_\mathrm{rx} \mathbf{c}_c \right) $ and $ \mathbf{a} \left( \theta_t \right) = \mathrm{vec} \left( \mathbf{A} \left( \theta_t \right) \right) $, the equivalence $ \mathbf{c}_c^\mathrm{H} \mathbf{Z}_\mathrm{rx}^\mathrm{H} \mathbf{A} \left( \theta_t \right) \mathbf{Z}_\mathrm{tx} \mathbf{b}_b = \mathbf{d}_{b,c}^\mathrm{H} \mathbf{a} \left( \theta_t \right) $ holds. Hence, $ \mathrm{I}_{\mathrm{aux},3} $ can be expressed as $ \mathrm{I}_{\mathrm{aux},4}: \frac{1}{\sigma_\mathrm{sen}^2}  \min_{\theta_t \in \Theta_t} \sum_{c \in \mathcal{L}_\mathrm{rx}} \sum_{b \in \mathcal{L}_\mathrm{tx}} \big| \mathbf{d}_{b,c}^\mathrm{H}  \mathbf{a} \left( \theta_t \right)  \big|^2  \pi_{b,c,t,s} \geq z_{t,s}, \forall t \in \mathcal{T}, s \in \mathcal{S} $. Employing (\ref{eqn:linearized-response-matrix}) and $ \mathrm{C}_{18} $, constraint $ \mathrm{I}_{\mathrm{aux},4} $ can be equivalently expressed as $ \mathrm{I}_{\mathrm{aux},5}: \frac{1}{\sigma_\mathrm{sen}^2} \min_{ \left| \Delta \theta_t \right|^2 \leq \epsilon_\mathrm{AOD}^2 } \sum_{c \in \mathcal{L}_\mathrm{rx}} \sum_{b \in \mathcal{L}_\mathrm{tx}} \big| \mathbf{d}_{b,c}^\mathrm{H} \big( \mathbf{a} \left( \widebar{\theta}_t \right) + \widetilde{\mathbf{a}} \left( \widebar{\theta}_t \right) \Delta \theta_t \big) \big|^2 $ $ \pi_{b,c,t,s}  \geq z_{t,s}, \forall t \in \mathcal{T}, s \in \mathcal{S} $, where $ \mathbf{a} \left( \widebar{\theta}_t \right) = \mathrm{vec} \left( \mathbf{A} \left( \widebar{\theta}_t \right) \right) $ and $ \widetilde{\mathbf{a}} \left( \widebar{\theta}_t \right) = \mathrm{vec} \left( \widetilde{\mathbf{A}} \left( \widebar{\theta}_t \right) \right) $. We can separate $ \mathrm{I}_{\mathrm{aux},5} $ into two constraints, specifically, $ \mathrm{I}_{\mathrm{aux},6}: \left| \Delta \theta_t \right|^2 - \epsilon_\mathrm{AOD}^2 \leq 0, \forall t \in \mathcal{T}, $ and $ \mathrm{I}_{\mathrm{aux},7}: z_{t,s} -  \sum_{c \in \mathcal{L}_\mathrm{rx}} \sum_{b \in \mathcal{L}_\mathrm{tx}} \frac{1}{\sigma_\mathrm{sen}^2} \big| \mathbf{d}_{b,c}^\mathrm{H} \big( \mathbf{a} \left( \widebar{\theta}_t \right) + \widetilde{\mathbf{a}} \left( \widebar{\theta}_t \right) \Delta \theta_t \big) \big|^2 $ $ \pi_{b,c,t,s} \leq 0, \forall t \in \mathcal{T}, s \in \mathcal{S} $.

Next, we apply \textbf{Lemma \ref{lem:s-procedure}} to $ \mathrm{I}_{\mathrm{aux},6} $ and $ \mathrm{I}_{\mathrm{aux},7} $, leading to constraints $ \mathrm{I}_{1} $ and $ \mathrm{I}_{2} $, shown below,
\begin{align*}
			   	& 	 			   	\mathrm{I}_{1}: \left[ 
			   			 			   					\begin{matrix}
			   			 			   					   	i_{t,s}^\mathrm{a}
			   			 			   					   	& 
			   			 			   					   	i_{t,s}^\mathrm{b}
			   			 			   					   	\\
			   			 			   					    i_{t,s}^\mathrm{c}
			   			 			   					    & 
			   			 			   					    i_{t,s}^\mathrm{d}
			   			 			   				  	\end{matrix} 
			   		 			   					\right] 
			   		 					   		\succcurlyeq \mathbf{0}, \forall t \in \mathcal{T}, s \in \mathcal{S},
			   	\\	 					   		
				& \mathrm{I}_{2}: \xi_{t} \geq 0, \forall t \in \mathcal{T}, 
\end{align*}
where $ i_{t,s}^\mathrm{a} = \xi_{t} - \widetilde{\mathbf{a}} \left( \widebar{\theta}_t \right)^\mathrm{H} \mathbf{F}_{t,s} \widetilde{\mathbf{a}} \left( \widebar{\theta}_t \right) $, 
$ i_{t,s}^\mathrm{b} = - \widetilde{\mathbf{a}} \left( \widebar{\theta}_t \right)^\mathrm{H} \mathbf{F}_{t,s} \mathbf{a} \left( \widebar{\theta}_t \right) $, 
$ i_{t,s}^\mathrm{c} = - \mathbf{a} \left( \widebar{\theta}_t \right)^\mathrm{H} \mathbf{F}_{t,s} \widetilde{\mathbf{a}} \left( \widebar{\theta}_t \right) $,  
$ i_{t,s}^\mathrm{d} = - \xi_{t} \cdot \epsilon_\mathrm{AOD}^2 - \mathbf{a} \left( \widebar{\theta}_t \right)^\mathrm{H} \mathbf{F}_{t,s} \mathbf{a} \left( \widebar{\theta}_t \right) - z_{t,s} $, 
$ \mathbf{F}_{t,s} = - \sum_{c \in \mathcal{L}_\mathrm{rx}} \sum_{b \in \mathcal{L}_\mathrm{tx}} \frac{1}{\sigma_\mathrm{sen}^2} \mathbf{d}_{b,c} \mathbf{d}_{b,c}^\mathrm{H} \pi_{b,c,t,s} $, and $ \xi_t $ are newly introduced variables needed by the \emph{S-Procedure}.

\setcounter{equation}{0}
\setcounter{table}{0}
\renewcommand{\theequation}{M.\arabic{equation}}
\section{Proof to Proposition 7} \label{app:proof-proposition-7}

Due to the complexity of $ \mathrm{H}_{3} $, we begin by analyzing this constraint for arbitrary $ \mathsf{T}_t $ and $ \mathsf{S}_s $. Assuming $ \mathbf{d}_{b,c} = \left( \mathbf{Z}_\mathrm{tx}^* \mathbf{b}_b^* \right) \otimes \left(  \mathbf{Z}_\mathrm{rx} \mathbf{c}_c \right) $ and $ \mathbf{r} = \mathrm{vec} \left( \mathbf{R}  \right) $, the equivalence $ \mathbf{c}_c^\mathrm{H} \mathbf{Z}_\mathrm{rx}^\mathrm{H} \mathbf{R} \mathbf{Z}_\mathrm{tx} \mathbf{b}_b = \mathbf{d}_{b,c}^\mathrm{H} \mathbf{r} $ holds. Hence, we have $ \mathrm{J}_{\mathrm{aux},1}: \frac{1}{\sigma_\mathrm{sen}^2} \widebar{\Lambda}_{\mathrm{sinr},t} \tau_{t,s} \cdot \max_{\mathbf{r} \in \mathcal{R'}} \big| \sum_{c \in \mathcal{L}_\mathrm{rx}} \sum_{b \in \mathcal{L}_\mathrm{tx}} \mathbf{d}_{b,c}^\mathrm{H} \mathbf{r} \cdot \pi_{b,c,t,s} \big|^2 + \widebar{\Lambda}_{\mathrm{sinr},t} \tau_{t,s}  \sum_{c \in \mathcal{L}_\mathrm{rx}} \left\| \mathbf{Z}_\mathrm{rx} \mathbf{c}_c \right\|_2^2 \rho_{c,s} \leq z_{t,s} $, where $ \mathbf{r} = \upsilon \widebar{\mathbf{r}} + \upsilon \boldsymbol{\Delta} \mathbf{r} $, $ \widebar{\mathbf{r}} = \mathrm{vec} \left( \widebar{\mathbf{R}} \right) $, $ \boldsymbol{\Delta} \mathbf{r} = \mathrm{vec} \left( \boldsymbol{\Delta} \mathbf{R} \right) $, and $ \mathcal{R}' \triangleq \left\lbrace \mathbf{r} \mid \mathbf{r} = \upsilon \widebar{\mathbf{r}} + \upsilon \boldsymbol{\Delta} \mathbf{r}, \left\| \boldsymbol{\Delta} \mathbf{r} \right\|_2^2 \leq \epsilon_\mathrm{RSI}^2 \right\rbrace $. 

Note that $ \pi_{b,c,t,s} $ includes $ \tau_{t,s} $, as detailed in \textbf{\cref{thm:proposition-3}}. Hence, $ \tau_{t,s} $ can be absorbed, transforming $ \mathrm{J}_{\mathrm{aux},1} $ into $ \mathrm{J}_{\mathrm{aux},2}: \frac{1}{\sigma_\mathrm{sen}^2} \widebar{\Lambda}_{\mathrm{sinr},t}  \max_{\mathbf{r} \in \mathcal{R'}} \big| \sum_{c \in \mathcal{L}_\mathrm{rx}} \sum_{b \in \mathcal{L}_\mathrm{tx}} \mathbf{d}_{b,c}^\mathrm{H} \mathbf{r} \cdot \pi_{b,c,t,s} \big|^2 + \widebar{\Lambda}_{\mathrm{sinr},t} \tau_{t,s}  \sum_{c \in \mathcal{L}_\mathrm{rx}} \left\| \mathbf{Z}_\mathrm{rx} \mathbf{c}_c \right\|_2^2 \rho_{c,s} \leq z_{t,s} $

To uncover hidden relationships, we apply Jensen's inequality to the quadratic term on the \gls{LHS} of $ \mathrm{J}_{\mathrm{aux},2} $, which involves additive and multiplicative couplings among multiple variables. This leads us to $ \mathrm{J}_{\mathrm{aux},3}: \big| \sum_{c \in \mathcal{L}_\mathrm{rx}} \sum_{b \in \mathcal{L}_\mathrm{tx}} \mathbf{d}_{b,c}^\mathrm{H} \mathbf{r} \cdot \pi_{b,c,t,s} \big|^2 \leq \sum_{c \in \mathcal{L}_\mathrm{rx}} \sum_{b \in \mathcal{L}_\mathrm{tx}} \big| \mathbf{d}_{b,c}^\mathrm{H} \mathbf{r} \cdot \pi_{b,c,t,s} \big|^2 $. This result is obtained by first applying Jensen's inequality to the absolute value terms, followed by exponentiation and the elimination of zero-valued terms, in a manner similar to that used in \textbf{\cref{thm:proposition-2}}.

  Assuming that $ \mathbf{b}_i $ and $ \mathbf{c}_j $ are the selected transmit and receive codewords, we expand the summations,  yielding $ \mathrm{J}_{\mathrm{aux},4}: \big| \sum_{(c,b) \neq (j,i)} \mathbf{d}_{b,c}^\mathrm{H} \mathbf{r} \cdot \pi_{b,c,t,s} + \mathbf{d}_{i,j}^\mathrm{H} \mathbf{r} \cdot \pi_{i,j,t,s} \big|^2 \leq \sum_{(c,b) \neq (j,i)} \big| \mathbf{d}_{b,c}^\mathrm{H} \mathbf{r} \cdot \pi_{b,c,t,s} \big|^2 + \left| \mathbf{d}_{i,j}^\mathrm{H} \mathbf{r} \cdot \pi_{i,j,t,s} \right|^2 $.

Since each target in an active timeslot can only be assigned one transmit/receive codeword pair, the terms under the summation vanish, ensuring that $ \mathrm{J}_{\mathrm{aux},4} $ holds with equality. Additionally, $ \pi_{b,c,t,s} $ can be factored out of the absolute value because it is binary. Consequently, we can reformulate $ \mathrm{J}_{\mathrm{aux},1} $ as $ \mathrm{J}_{\mathrm{aux},5}: \frac{1}{\sigma_\mathrm{sen}^2} \widebar{\Lambda}_{\mathrm{sinr},t} \cdot \max_{\mathbf{r} \in \mathcal{R'}} \sum_{c \in \mathcal{L}_\mathrm{rx}} \sum_{b \in \mathcal{L}_\mathrm{tx}} \big|  \mathbf{d}_{b,c}^\mathrm{H} \mathbf{r} \big|^2 \pi_{b,c,t,s} + \widebar{\Lambda}_{\mathrm{sinr},t} \tau_{t,s}   \sum_{c \in \mathcal{L}_\mathrm{rx}} \left\| \mathbf{Z}_\mathrm{rx} \mathbf{c}_c \right\|_2^2 \rho_{c,s} \leq z_{t,s} $, which is more amenable for further manipulation. 

Next, we tackle the coupling between $ \rho_{c,s} $ and $ \tau_{t,s} $. Leveraging results from \textbf{\cref{thm:proposition-3}}, we reformulate this coupling as an intersection of multiple constraints. Specifically, the product $ \rho_{c,s} \tau_{t,s} $ can be re-expressed through the intersecting constraints, 
$ \mathrm{J}_{1}: \delta_{c,t,s} \leq \rho_{c,s}, \forall c \in \mathcal{L}_\mathrm{rx}, t \in \mathcal{T}, s \in \mathcal{S} $,
$ \mathrm{J}_{2}: \delta_{c,t,s} \leq \tau_{t,s}, \forall c \in \mathcal{L}_\mathrm{rx}, t \in \mathcal{T}, s \in \mathcal{S}  $, and
$ \mathrm{J}_{3}: \delta_{c,t,s} \geq \rho_{c,s} + \tau_{t,s} - 1, \forall c \in \mathcal{L}_\mathrm{rx}, t \in \mathcal{T}, s \in \mathcal{S} $,
enabling a more tractable optimization approach\footnote{Constraints $ \mathrm{J}_{1} $, $ \mathrm{J}_{2} $, and $ \mathrm{J}_{3} $ can be directly derived from \textbf{\cref{thm:proposition-3}} by setting one of the binary variables to $ 1 $. }. Here, $ \delta_{c,t,s} $ are newly introduced variables that, by definition, must be binary. However, we can relax $ \delta_{c,t,s} $ to a continuous variable to reduce the number of binary variables, thereby lowering computational complexity of the binary search process. To enforce its binary nature in all cases, we impose constraint $ \mathrm{J}_{4}: \delta_{c,t,s} \in \left[ 0, 1 \right], \forall c \in \mathcal{L}_\mathrm{rx}, t \in \mathcal{T}, s \in \mathcal{S} $.

Leveraging the results in $ \mathrm{J}_{\mathrm{aux},5} $, we generalize for all targets and timeslots, allowing us to recast $ \mathrm{H}_{3} $ as $ \mathrm{J}_{\mathrm{aux},6}: \frac{1}{\sigma_\mathrm{sen}^2} \widebar{\Lambda}_{\mathrm{sinr},t} \cdot \max_{\mathbf{r} \in \mathcal{R'}} \sum_{c \in \mathcal{L}_\mathrm{rx}} \sum_{b \in \mathcal{L}_\mathrm{tx}} \big|  \mathbf{d}_{b,c}^\mathrm{H} \mathbf{r} \big|^2 \pi_{b,c,t,s} + \widebar{\Lambda}_{\mathrm{sinr},t} \sum_{c \in \mathcal{L}_\mathrm{rx}} \left\| \mathbf{Z}_\mathrm{rx} \mathbf{c}_c \right\|_2^2 \delta_{c,t,s} \leq z_{t,s}, \forall t \in \mathcal{T}, s \in \mathcal{S} $.

Note that we can express $ \mathrm{J}_{\mathrm{aux},6} $ as 
$ \mathrm{J}_{\mathrm{aux},7}: \frac{1}{\sigma_\mathrm{sen}^2} \widebar{\Lambda}_{\mathrm{sinr},t} \cdot \max_{\left\| \boldsymbol{\Delta} \mathbf{r} \right\|_2^2 \leq  \epsilon_\mathrm{RSI}^2}
\sum_{c \in \mathcal{L}_\mathrm{rx}} \sum_{b \in \mathcal{L}_\mathrm{tx}}
\big| \mathbf{d}_{b,c}^\mathrm{H} \left( \upsilon \widebar{\mathbf{r}} + \upsilon \boldsymbol{\Delta} \mathbf{r} \right) \big|^2 \pi_{b,c,t,s} + \widebar{\Lambda}_{\mathrm{sinr},t} $ $\sum_{c \in \mathcal{L}_\mathrm{rx}} \left\| \mathbf{Z}_\mathrm{rx} \mathbf{c}_c \right\|_2^2 \delta_{c,t,s} \leq z_{t,s}, \forall t \in \mathcal{T}, s \in \mathcal{S} $, after letting $ \mathrm{J}_{\mathrm{aux},6} $ absorb $ \mathcal{R'} $. Here, $ \mathrm{J}_{\mathrm{aux},7} $ can be split into $ \mathrm{J}_{\mathrm{aux},8}: \boldsymbol{\Delta} \mathbf{r}^\mathrm{H} \boldsymbol{\Delta} \mathbf{r} - \epsilon_\mathrm{RSI}^2 \leq 0 $ and $ \mathrm{J}_{\mathrm{aux},9}: \frac{1}{\sigma_\mathrm{sen}^2} \widebar{\Lambda}_{\mathrm{sinr},t} \cdot 
\sum_{c \in \mathcal{L}_\mathrm{rx}} \sum_{b \in \mathcal{L}_\mathrm{tx}}
\big| \mathbf{d}_{b,c}^\mathrm{H} \left( \upsilon \widebar{\mathbf{r}} + \upsilon \boldsymbol{\Delta} \mathbf{r} \right) \big|^2 \pi_{b,c,t,s} + \widebar{\Lambda}_{\mathrm{sinr},t} \sum_{c \in \mathcal{L}_\mathrm{rx}} \left\| \mathbf{Z}_\mathrm{rx} \mathbf{c}_c \right\|_2^2 \delta_{c,t,s} - z_{t,s} \leq 0, \forall t \in \mathcal{T}, s \in \mathcal{S} $.

Applying \textbf{Lemma \ref{lem:s-procedure}} to $ \mathrm{J}_{\mathrm{aux},8} $ and $ \mathrm{J}_{\mathrm{aux},9} $, leads to constraints $ \mathrm{J}_{5} $ and $ \mathrm{J}_{6} $, shown below,
\begin{align*}
			   	& 	 			   	\mathrm{J}_{5}: \left[ 
			   			 			   					\begin{matrix}
			   			 			   					   	\mathbf{J}_{t,s}^\mathrm{a}
			   			 			   					   	& 
			   			 			   					   	\mathbf{j}_{t,s}^\mathrm{b}
			   			 			   					   	\\
			   			 			   					    \mathbf{j}_{t,s}^\mathrm{c}
			   			 			   					    & 
			   			 			   					    j_{t,s}^\mathrm{d}
			   			 			   				  	\end{matrix} 
			   		 			   					\right] 
			   		 					   		\succcurlyeq \mathbf{0}, \forall t \in \mathcal{T}, s \in \mathcal{S},
			  	\\ 		 					   		
				& \mathrm{J}_{6}: \iota \geq 0, 
\end{align*}
where
$ \mathbf{J}_{t,s}^\mathrm{a} = \iota \mathbf{I} + \upsilon^2 \widebar{\Lambda}_{\mathrm{sinr},t} \mathbf{F}_{t,s} $, 
$ \mathbf{j}_{t,s}^\mathrm{b} = \upsilon^2 \widebar{\Lambda}_{\mathrm{sinr},t} \mathbf{F}_{t,s} \widebar{\mathbf{r}} $,
$ \mathbf{j}_{t,s}^\mathrm{c} = \upsilon^2 \widebar{\Lambda}_{\mathrm{sinr},t} \widebar{\mathbf{r}}^\mathrm{H} \mathbf{F}_{t,s}  $,
$ j_{t,s}^\mathrm{d} = - \iota \cdot \epsilon_\mathrm{RSI}^2 + \upsilon^2 \widebar{\Lambda}_{\mathrm{sinr},t} \widebar{\mathbf{r}}^\mathrm{H} \mathbf{F}_{t,s} \widebar{\mathbf{r}} - \widebar{\Lambda}_{\mathrm{sinr},t} \sum_{c \in \mathcal{L}_\mathrm{rx}} \left\| \mathbf{Z}_\mathrm{rx} \mathbf{c}_c \right\|_2^2 \delta_{c,t,s} + z_{t,s} $, 
$ \mathbf{F}_{t,s} = - \sum_{c \in \mathcal{L}_\mathrm{rx}} \sum_{b \in \mathcal{L}_\mathrm{tx}} \frac{1}{\sigma_\mathrm{sen}^2} \mathbf{d}_{b,c} \mathbf{d}_{b,c}^\mathrm{H} \pi_{b,c,t,s} $, and $ \iota $ is a newly introduced variable needed by the \emph{S-Procedure}.

\setcounter{equation}{0}
\setcounter{table}{0}
\renewcommand{\theequation}{N.\arabic{equation}}
\section{Proof to Proposition 8} \label{app:proof-proposition-8}

Since $ f_1 \left( \boldsymbol{\Omega} \right) $ and $ f_2 \left( \boldsymbol{\Omega} \right) $ depend on binary variables constrained to $ \left\lbrace 0, 1 \right\rbrace  $, we can analyze their extremal values and derive optimal weights $ \eta_1 $ and $ \eta_2 $ that enforce a strict prioritization, specifically, favoring the minimization of time consumption over energy expenditure.

Let $ P_{\mathrm{tx}}^{\mathrm{max}} $ and $ P_{\mathrm{rx}}^{\mathrm{max}} $ denote the maximum transmit and receive powers, respectively, and $ P_{\mathrm{tx}}^{\mathrm{min}} $ and $ P_{\mathrm{rx}}^{\mathrm{min}} $ their minimum counterparts. Thus, for a single active timeslot, $ f_1 \left( \boldsymbol{\Omega} \right) $ lies in the interval $ \left[ S_\mathrm{dur} P_{\mathrm{tx}}^{\mathrm{min}} + S_\mathrm{dur} P_{\mathrm{rx}}^{\mathrm{min}}, \; S_\mathrm{dur} P_{\mathrm{tx}}^{\mathrm{max}} + S_\mathrm{dur} P_{\mathrm{rx}}^{\text{max}} \right] $. Extending this to all timeslots, $ f_1 \left( \boldsymbol{\Omega} \right) $ is in the interval $ \left[ S' S_\mathrm{dur} P_{\mathrm{tx}}^{\mathrm{min}} + S' S_\mathrm{dur} P_{\mathrm{rx}}^{\mathrm{min}}, \; S' S_\mathrm{dur} P_{\mathrm{tx}}^{\mathrm{max}} + S' S_\mathrm{dur}  P_{\mathrm{rx}}^{\text{max}} \right] $, where $ S' \leq S $ is the number of active timeslots.

Meanwhile, function $ f_2 \left( \boldsymbol{\Omega} \right) $ takes values in the interval $ \left[ r_1, r_2 \right] $, where $ r_1 = S_\mathrm{dur} \Delta_0 $ and $ r_2 = S_\mathrm{dur} S' \Delta_0 + S_\mathrm{dur} \frac{\Delta_\omega}{2} (S'-2)(S'-1) $. Here, $ r_2 $  corresponds to the cumulative sum of all weights $ \omega_s $, as defined in \textbf{Lemma~\ref{lem:lemma-weights}}, over $ S' $ timeslots.

Notably, the images of $ f_1 \left( \boldsymbol{\Omega} \right) $ and $ f_2 \left( \boldsymbol{\Omega} \right) $ may overlap depending on the values of $ S_\mathrm{dur} $, $ P_{\mathrm{tx}}^{\mathrm{min}} $, $ P_{\mathrm{rx}}^{\mathrm{min}} $, $ P_{\mathrm{tx}}^{\mathrm{max}} $, $ P_{\mathrm{rx}}^{\mathrm{max}} $, $ \Delta_0 $, and $ \Delta_\omega $. To ensure completely non-overlapping intervals for the images of these functions and a strict prioritization of $ f_2 \left( \boldsymbol{\Omega} \right) $, we impose the condition $ \eta_2 f_2 \left( \boldsymbol{\Omega} \right) > \eta_1 f_1 \left( \boldsymbol{\Omega} \right) $, where $ \eta_1 $ and $ \eta_2 $ are weighting coefficients.

For simplicity, we consider the extreme case where the minimum value of $ \eta_2 f_2 $ (i.e., $ \eta_2 S_\mathrm{dur} \Delta_0 $) must exceed the maximum value of $ \eta_1 f_1 $ (i.e., $ \eta_1 S S_\mathrm{dur} P_{\mathrm{tx}}^{\mathrm{max}} + \eta_1 S S_\mathrm{dur}  P_{\mathrm{rx}}^{\mathrm{max}} $), occurring when $ S' = S $. This leads to the sufficient condition  
\begin{align*}
	\eta_2  > \eta_1 \frac{S \left( P_{\mathrm{tx}}^{\mathrm{max}} + P_{\mathrm{rx}}^{\mathrm{max}} \right)}{\Delta_0}.
\end{align*}

We now address the complexity of $ f_1 \left( \boldsymbol{\Omega} \right) $, which arises from its quadratic terms. In contrast, $ f_2 \left( \boldsymbol{\Omega} \right) $ is linear and straightforward to handle.

Given the structural similarity between the terms $ \sum_{s \in \mathcal{S}} \left\| \mathbf{w}_s \right\|_2^2 $ and $ \sum_{s \in \mathcal{S}} \left\| \mathbf{v}_s \right\|_2^2 $, we start by analyzing the first term and later extend the approach to the second. Let us focus on an arbitrary $ \mathsf{S}_s $. Using the definition in $ \mathrm{C}_{11} $, we have $ \left\| \mathbf{w}_s \right\|_2^2 = \mathbf{w}_s^\mathrm{H} \mathbf{w}_s = \left( \sum_{b \in \mathcal{L}_\mathrm{tx}} \mathbf{b}_b \cdot \chi_{b,s} \right)^\mathrm{H} \left( \sum_{b' \in \mathcal{L}_\mathrm{tx}} \mathbf{b}_{b'} \cdot \chi_{b',s} \right) = \sum_{b \in \mathcal{L}_\mathrm{tx}} \sum_{b' \in \mathcal{L}_\mathrm{tx}} \mathbf{b}_b^\mathrm{H} \mathbf{b}_{b'} \cdot \chi_{b,s} \chi_{b',s} $

Employing $ \mathrm{C}_{10} $, we distinguish two cases, which we leverage to simplify the expression above, as shown in the following.

\noindent \textbf{Case \circled{\footnotesize{1}} $\Rightarrow$ } If $ \mathsf{S}_s $ is idle, i.e., $ \gamma_s = 0 $, then $ \sum_{b \in \mathcal{L}_\mathrm{tx}} \chi_{b,s} = 0 $, which implies that $ \chi_{b,s} = 0, \forall b \in \mathcal{L}_\mathrm{tx} $, ensuring that no codeword is selected in $ \mathsf{S}_s $. Hence, $ \left\| \mathbf{w}_s \right\|_2^2 = 0 $.
	
\noindent \textbf{Case \circled{\footnotesize{2}} $\Rightarrow$ } If $ \mathsf{S}_s $ is active, i.e., $ \gamma_s = 1 $, then $ \sum_{b \in \mathcal{L}_\mathrm{tx}} \chi_{b,s} = 1 $, which implies that $ \chi_{i,s} = 1 $ and  $ \chi_{b,s} = 0, \forall b \in \mathcal{L}_\mathrm{tx} \setminus \left\lbrace i \right\rbrace $, assuming that $ \mathbf{b}_i $ denotes the selected codeword. Hence, $ \left\| \mathbf{w}_s \right\|_2^2 =  \sum_{(b,b') \neq (i,i)} \mathbf{b}_b^\mathrm{H} \mathbf{b}_{b'} \cdot \chi_{b,s} \chi_{b',s} +  \mathbf{b}_i^\mathrm{H} \mathbf{b}_{i} \cdot \chi_{i,s} \chi_{i,s} $. The term under the summation yields zero, thereby resulting in $ \left\| \mathbf{w}_s \right\|_2^2 = \mathbf{b}_i^\mathrm{H} \mathbf{b}_{i} $.

We can combine the two cases above, resulting in $ \left\| \mathbf{w}_s \right\|_2^2 =  \sum_{b \in \mathcal{L}_\mathrm{tx}} \left\| \mathbf{b}_b \right\|_2^2 \cdot \chi_{b,s} $. Employing $ \mathrm{C}_{13} $ and $ \mathrm{C}_{14} $, and following a similar procedure as above, we obtain $ \left\| \mathbf{v}_s \right\|_2^2 = \sum_{c \in \mathcal{L}_\mathrm{rx}} \left\| \mathbf{c}_c \right\|_2^2 \cdot \rho_{c,s} $.	As a result, we redefine  $ f_1 \left( \boldsymbol{\Omega} \right) $ as $ f_1' \left( \boldsymbol{\Omega}' \right) = S_\mathrm{dur} \sum_{s \in \mathcal{S}} \sum_{b \in \mathcal{L}_\mathrm{tx}} \left\| \mathbf{b}_b \right\|_2^2 \cdot \chi_{b,s} + S_\mathrm{dur} \sum_{s \in \mathcal{S}} \sum_{c \in \mathcal{L}_\mathrm{rx}} \left\| \mathbf{c}_c \right\|_2^2 \cdot \rho_{c,s} $, which leads to $ f' \left( \boldsymbol{\Omega}' \right) \triangleq \eta_1  f'_1 \left( \boldsymbol{\Omega}' \right) + \eta_2 f_2' \left( \boldsymbol{\Omega}' \right) $, assuming $ f_2' \left( \boldsymbol{\Omega}' \right) = f_2 \left( \boldsymbol{\Omega} \right) $.

\setcounter{equation}{0}
\setcounter{table}{0}
\renewcommand{\theequation}{O.\arabic{equation}}
\section{Derivation of cutting planes} \label{app:derivation-cutting-planes}

To eliminate infeasible solutions without exhaustively searching for the optimal transmit codeword, we derive an upper bound for the numerator of (\ref{eqn:communication-snr}), which helps prune reduce infeasible solutions.

Specifically, when user $ \mathsf{U}_u $ is served in $ \mathsf{S}_s $, the power of the signal received by the user is given by $ \mathrm{K}_{\mathrm{aux},1}: \left| \mathbf{h}_u^\mathrm{H} \mathbf{Z}_\mathrm{tx} \mathbf{w}_s \right|^2 = \left| \left( \widebar{\mathbf{h}}_u^\mathrm{H} + \boldsymbol{\Delta} \mathbf{h}_u^\mathrm{H} \right) \mathbf{Z}_\mathrm{tx} \mathbf{w}_s \right|^2 $, where $ \boldsymbol{\Delta} \mathbf{h}_u $ denotes the channel perturbation.

Applying the triangle inequality and the Cauchy-Schwarz inequality twice, the highest received signal power with respect to the CSI uncertainty can be upper bounded as $ \left| \left( \widebar{\mathbf{h}}_u^\mathrm{H} + \boldsymbol{\Delta} \mathbf{h}_u^\mathrm{H} \right) \mathbf{Z}_\mathrm{tx} \mathbf{w}_s \right| \leq \left| \widebar{\mathbf{h}}_u^\mathrm{H} \mathbf{Z}_\mathrm{tx} \mathbf{w}_s \right| + \epsilon_{\mathrm{CSI}} \left\| \mathbf{Z}_\mathrm{tx} \mathbf{w}_s \right\|_2 \leq \left| \widebar{\mathbf{h}}_u^\mathrm{H} \mathbf{Z}_\mathrm{tx} \mathbf{w}_s \right| + \epsilon_{\mathrm{CSI}} \left\| \mathbf{Z}_\mathrm{tx} \right\|_\mathrm{F} \left\| \mathbf{w}_s \right\|_2 $. Hence, an upper bound for $ \mathrm{K}_{\mathrm{aux},1} $ is expressed by the relation: $ \mathrm{K}_{\mathrm{aux},2}: \left| \mathbf{h}_u^\mathrm{H} \mathbf{Z}_\mathrm{tx} \mathbf{w}_s \right|^2 \leq \left( \left| \widebar{\mathbf{h}}_u^\mathrm{H} \mathbf{Z}_\mathrm{tx} \mathbf{w}_s \right| + \epsilon_{\mathrm{CSI}} \left\| \mathbf{Z}_\mathrm{tx} \right\|_\mathrm{F} \left\| \mathbf{w}_s \right\|_2 \right)^2 $

Next, we maximize the \gls{RHS} term of $ \mathrm{K}_{\mathrm{aux},2} $ with respect to the beamformer $ \mathbf{w}_s $. The term is maximized when $ \mathbf{w}_s $ is chosen to be colinear with $ \mathbf{Z}_\mathrm{tx}^\mathrm{H} \widebar{\mathbf{h}}_u $, i.e., $ \mathbf{w}_s = \frac{\mathbf{Z}_\mathrm{tx}^\mathrm{H} \widebar{\mathbf{h}}_u}{\left\| \mathbf{Z}_\mathrm{tx}^\mathrm{H} \widebar{\mathbf{h}}_u \right\|_2 } \sqrt{P_\mathrm{tx}^\mathrm{max}} $, where $ P_\mathrm{tx}^\mathrm{max} $ denotes the maximum transmit power at the \gls{BS}, i.e, $ P_\mathrm{tx}^\mathrm{max} = \max_{b \in \mathcal{L}_\mathrm{tx}} \left\| \mathbf{b}_b \right\|^2_2 $.

Under this choice, the maximum received signal power admits the upper bound $ \mathrm{K}_{\mathrm{aux},3}: \big( \left\| \mathbf{Z}_\mathrm{tx}^\mathrm{H} \widebar{\mathbf{h}}_u \right\|_2 + \epsilon_{\mathrm{CSI}} \left\| \mathbf{Z}_\mathrm{tx} \right\|_\mathrm{F}  \big)^2 P_\mathrm{tx}^\mathrm{max} $. Consequently, we obtain the following cutting plane, $ \mathrm{K}_{1}: \left| \mathbf{h}_u^\mathrm{H} \mathbf{Z}_\mathrm{tx} \mathbf{w}_s \right|^2 \leq \big( \left\| \mathbf{Z}_\mathrm{tx}^\mathrm{H} \widebar{\mathbf{h}}_u \right\|_2 + \epsilon_{\mathrm{CSI}} \left\| \mathbf{Z}_\mathrm{tx} \right\|_\mathrm{F} \big)^2 P_\mathrm{tx}^\mathrm{max} $.

To further eliminate equally optimal solutions that differ only in the temporal order of user and target assignments, we introduce an additional cutting plane. Specifically, we define constraint $ \mathrm{K}_{\mathrm{aux},4}: \left\| \mathbf{w}_{s+1} \right\|_2^2 + \left\| \mathbf{v}_{s+1} \right\|_2^2 \geq \left\| \mathbf{w}_s \right\|_2^2 + \left\| \mathbf{v}_s \right\|_2^2, \forall s \in \mathcal{S} \setminus \{ S \} $, which enforces that the total transmit power does not increase over successive time slots. This constraint introduces a temporal ordering to the allocation, reducing redundant permutations that yield the same objective value.

To address the quadratic terms in $ \mathrm{K}_{\mathrm{aux},4} $, we leverage the relationships derived in \textbf{Appendix \ref{app:proof-proposition-8}}, which allow for a simplification of these expressions. As a result $ \mathrm{K}_{\mathrm{aux},4} $ can be reformulated into a more tractable form, denoted as $ \mathrm{K}_{2}: \sum_{b \in \mathcal{L}_\mathrm{tx}} \left\| \mathbf{b}_b \right\|_2^2 \cdot \chi_{b,s+1} + \sum_{c \in \mathcal{L}_\mathrm{rx}} \left\| \mathbf{c}_c \right\|_2^2 \cdot \rho_{c,s+1} \geq \sum_{b \in \mathcal{L}_\mathrm{tx}} \left\| \mathbf{b}_b \right\|_2^2 \cdot \chi_{b,s} + \sum_{c \in \mathcal{L}_\mathrm{rx}} \left\| \mathbf{c}_c \right\|_2^2 \cdot \rho_{c,s}, \forall s \in \mathcal{S} \setminus \{ S \} $, yielding a simplified version of the cutting plane.

\setcounter{equation}{0}
\setcounter{table}{0}
\renewcommand{\theequation}{P.\arabic{equation}}
\section{Transformed problem} \label{app:transformed-problem}

For completeness, the problem $ \mathcal{P}' \left( \boldsymbol{\Omega}' \right) $, with all constraints explicitly detailed, is presented in the next page. 

\setcounter{equation}{0}
\setcounter{table}{0}
\renewcommand{\theequation}{Q.\arabic{equation}}
\section{Impact of timeslot requirements and tradeoff weights} \label{app:additional-scenario-viii}

\begin{figure}[!h]
	\centering
	\includegraphics[]{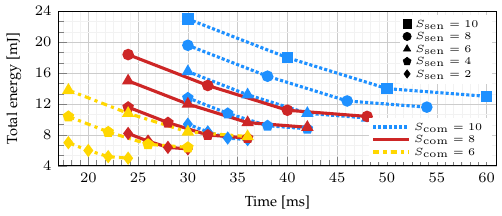}
	\caption{Impact of timeslot requirements and tradeoff weights on resource economy (\emph{Scenario VIII}). \emph{Weight selection serves as an effective mechanism to navigate the tradeoff between energy and time consumption. Also, the chosen weights implicitly influence whether joint or individual servicing of users and targets is favored. Finally, both energy and time consumption scale with the total timeslot requirements.}}
	\label{fig:results-scenario-7}
\end{figure}

In this scenario, we investigate how varying timeslot requirements influence both energy and time consumption. Unlike previous scenarios, which prioritized minimizing time consumption through specific weight selections ($ \eta_1 $ and $ \eta_2 $), we now explore a broader range of weight configurations to illustrate the tradeoff between energy and time across diverse priority settings. We consider $ U = 3 $, $ T = 3 $, $ S_\mathrm{com} = \left\lbrace 6, 8, 10 \right\rbrace $, and $ S_\mathrm{sen} = \left\lbrace 2, 4, 6, 8, 10 \right\rbrace $. The detailed parameter setting is shown in \cref{tab:simulation-settings-s8}.

\begin{table*}[b!]
	\setlength\tabcolsep{3.8pt} 
	\renewcommand{\arraystretch}{1.8}
	\tiny
	\centering
	\caption{Simulation parameters.}
	\begin{tabular}{|c|| c c c c c | c c | c c | c c | c c c c | c c |}
		\hline
		Scenario
		& $ U $
		& $ T $
		& $ S_\mathrm{com} $ 
		& $ S_\mathrm{sen} $
		& $ S $
		& $ \beta_u $ 
		& $ \theta_t \mid \widebar{\theta}_t $ 
		& $ r_u $ 
		& $ \psi_t \mid \widebar{\psi}_t $
		& $ \Upsilon_\mathrm{snr} $
		& $ \Lambda_\mathrm{sinr} $
		& $ \frac{\epsilon_{\mathrm{CSI}}}{\sigma_\mathrm{com}} $ 
		& $ \epsilon_{\mathrm{AOD}} $
		& $ \frac{\epsilon_\mathrm{RC}}{\widebar{\psi}} $
		& $ \frac{\epsilon_\mathrm{RSI}}{\left\| \widebar{\mathbf{R}} \right\|_\mathrm{F} } $
		& $ \upsilon $
		& $ \widebar{d}_c $
		\\
		\hline
		VIII  
		& \multicolumn{1}{>{\columncolor{DodgerBlue1!20}}c}{\tabularCenterstack{c}{$ 3 $}} 
		& \multicolumn{1}{>{\columncolor{DodgerBlue1!20}}c}{\tabularCenterstack{c}{$ 3 $}} 
		& \multicolumn{1}{>{\columncolor{DodgerBlue1!20}}c}{\tabularCenterstack{c}{$ \left[ 6, 10 \right] $}} 
		& \multicolumn{1}{>{\columncolor{DodgerBlue1!20}}c}{\tabularCenterstack{c}{$ \left[ 2, 10 \right] $}} 
		&  \multicolumn{1}{>{\columncolor{DodgerBlue1!20}}c|}{\tabularCenterstack{c}{$ 60 $}} 
		& \multicolumn{1}{>{\columncolor{DodgerBlue1!20}}c}{\tabularCenterstack{c}{$ \left[ 50, 85 \right]  $}} 
		& \multicolumn{1}{>{\columncolor{DodgerBlue1!20}}c|}{\tabularCenterstack{c}{$ \left[ 75, 100 \right] $}} 
		& \multicolumn{1}{>{\columncolor{DodgerBlue1!20}}c}{\tabularCenterstack{c}{$ \left[ 40, 60 \right] $}} 
		& \multicolumn{1}{>{\columncolor{DodgerBlue1!20}}c|}{\tabularCenterstack{c}{$ \left[ 8, 12 \right] 10^{-4} $}} 
		& $ 50 $  
		& $ 2 $  
		& $ 0 $  
		& $ 0 $ 
		& $ 0 $ 
		& $ 0 $ 
		& $ 0 $ 
		& $ - $ 
		\\
		\hline 
	\end{tabular}
	\label{tab:simulation-settings-s8}
	\vspace{-2mm}
\end{table*}

\begin{figure} [!t]
\scalebox{0.96}{\parbox{.25\columnwidth}{%
\begin{align*} 
	\mathcal{P}' \left( \boldsymbol{\Omega}' \right): & \min_{
			\substack{ \boldsymbol{\Omega}' }
	} 
	& & f'\left( \boldsymbol{\Omega}' \right)
	\\
	& ~ \mathrm{s.t.} & \mathrm{C}_{1}: ~ & \kappa_s = \left\lbrace 0, 1 \right\rbrace, \forall s, 
	\\
	& & \mathrm{C}_{2}: ~ & \zeta_s = \left\lbrace 0, 1 \right\rbrace, \forall s,   
	\\
	& & \mathrm{C}_{4}: ~ & \textstyle \sum_{ s \in \mathcal{S} } \gamma_s \leq S, 
	\\
	& & \mathrm{C}_{5}: ~ & \mu_{u,s} \in \left\lbrace 0, 1 \right\rbrace, \forall u, s,   
	\\
	& & \mathrm{C}_{6}: ~ & \textstyle \kappa_s = \sum_{u \in \mathcal{U}} \mu_{u,s}, \forall s,  
	\\
	& & \mathrm{C}_{7}: ~ & \tau_{t,s} \in \left\lbrace 0, 1 \right\rbrace, \forall t, s,   
	\\
	& & \mathrm{C}_{8}: ~ & \textstyle \zeta_s = \sum_{t \in \mathcal{T}} \tau_{t,s}, \forall s,   
	\\
	& & \mathrm{C}_{9}: ~ & \chi_{b,s} \in \left\lbrace 0, 1 \right\rbrace, \forall b, s, 
	\\
	& & \mathrm{C}_{10}: ~ & \textstyle \sum_{b \in \mathcal{L}_\mathrm{tx}} \chi_{b,s} = \gamma_s, \forall s,  
	\\
	& & \mathrm{C}_{12}: ~ & \rho_{c,s} \in \left\lbrace 0, 1 \right\rbrace, \forall c, s,
	\\
	& & \mathrm{C}_{13}: ~ & \textstyle \sum_{c \in \mathcal{L}_\mathrm{rx}} \rho_{c,s} = \zeta_s, \forall s, 
	\\
	& & \mathrm{C}_{17}: ~ & \textstyle \sum_{s \in \mathcal{S}} \mu_{u,s} = S_{\mathrm{com},u}, \forall u,  
	\\
	& & \mathrm{C}_{22}: ~ & \textstyle \sum_{s \in \mathcal{S}} \tau_{t,s} = S_{\mathrm{sent},t}, \forall t,  
	\\ 
	& & \mathrm{D}_{1}: ~ & \gamma_s \leq \kappa_s + \zeta_s, \forall s, 
	\\ 
	& & \mathrm{D}_{2}: ~ & \gamma_s \geq \kappa_s, \forall s,   
	\\ 
	& & \mathrm{D}_{3}: ~ & \gamma_s \geq \zeta_s, \forall s,
	\\ 
	& & \mathrm{D}_{4}: ~ & \gamma_s \leq 1, \forall s, 
	\\ 
	& & \mathrm{E}_{1}: ~ & \left[ 
		 	 			   					\begin{matrix}
		 	 			   					   	\mathbf{E}_{u,s}^\mathrm{a}
		 	 			   					   	& 
		 	 			   					   	\mathbf{e}_{u,s}^\mathrm{b}
		 	 			   					   	\\
		 	 			   					    \mathbf{e}_{u,s}^\mathrm{c}
		 	 			   					    & 
		 	 			   					    e_{u,s}^\mathrm{d}
		 	 			   				  	\end{matrix} 
		  			   					\right] 
		  					   		\succcurlyeq \mathbf{0}, \forall u, s,
	\\ 
	& & \mathrm{E}_{2}: ~ & \alpha_u \geq 0, \forall u,
	\\ 
	& & \mathrm{F}_{1}: ~ & \chi_{b,s} \geq \pi_{b,c,t,s}, \forall b, c, t, s, 
	\\ 
	& & \mathrm{F}_{2}: ~ & \rho_{c,s} \geq \pi_{b,c,t,s}, \forall b, c, t, s,
	\\ 
	& & \mathrm{F}_{3}: ~ & \tau_{t,s} \geq \pi_{b,c,t,s}, \forall b, c, t, s,
	\\ 
	& & \mathrm{F}_{4}: ~ & 2 + \pi_{b,c,t,s} \geq \chi_{b,s} + \rho_{c,s} + \tau_{t,s}, \forall b, c, t, s,
	\\ 
	& & \mathrm{F}_{5}: ~ & \pi_{b,c,t,s} \in \left[ 0, 1 \right], \forall b, c, t, s,
	\\ 
	& & \mathrm{H}_{1}: ~ & z_{t,s} \geq 0, \forall t, s,
	\\ 
	& & \mathrm{H}_{4}: ~ & z_{t,s} \leq \tau_{t,s} \cdot \ddot{M}_{t}^\mathrm{UB}, \forall t, s,
	\\ 
	& & \mathrm{I}_{1}: ~ & \left[ 
					   			 			   					\begin{matrix}
					   			 			   					   	i_{t,s}^\mathrm{a}
					   			 			   					   	& 
					   			 			   					   	i_{t,s}^\mathrm{b}
					   			 			   					   	\\
					   			 			   					    i_{t,s}^\mathrm{c}
					   			 			   					    & 
					   			 			   					    i_{t,s}^\mathrm{d}
					   			 			   				  	\end{matrix} 
					   		 			   					\right] 
					   		 					   		\succcurlyeq \mathbf{0}, \forall t, s,
	\\ 
	& & \mathrm{I}_{2}: ~ & \xi_{t} \geq 0, \forall t, 
	\\ 
	& & \mathrm{J}_{1}: ~ & \delta_{c,t,s} \leq \rho_{c,s}, \forall c, t, s,
	\\ 
	& & \mathrm{J}_{2}: ~ & \delta_{c,t,s} \leq \tau_{t,s}, \forall c, t, s,
	\\ 
	& & \mathrm{J}_{3}: ~ & \delta_{c,t,s} \geq \rho_{c,s} + \tau_{t,s} - 1, \forall c, t, s,
	\\ 
	& & \mathrm{J}_{4}: ~ & \delta_{c,t,s} \geq 0, \forall c, t, s,
	\\ 
	& & \mathrm{J}_{5}: ~ & \left[ 
				   			 			   					\begin{matrix}
				   			 			   					   	\mathbf{J}_{t,s}^\mathrm{a}
				   			 			   					   	& 
				   			 			   					   	\mathbf{j}_{t,s}^\mathrm{b}
				   			 			   					   	\\
				   			 			   					    \mathbf{j}_{t,s}^\mathrm{c}
				   			 			   					    & 
				   			 			   					    j_{t,s}^\mathrm{d}
				   			 			   				  	\end{matrix} 
				   		 			   					\right] 
				   		 					   		\succcurlyeq \mathbf{0}, \forall t, s,
	\\ 
	& & \mathrm{J}_{6}: ~ &  \iota \geq 0. 
	\\ 
	& & \mathrm{K}_{1}: ~ &  \left| \mathbf{h}_u^\mathrm{H} \mathbf{w}_s \right|^2 \leq \dddot{M}_{u}^\mathrm{UB}, \forall u \in \mathcal{U}, s \in \mathcal{S},
	\\ 
	& & \mathrm{K}_{2}: ~ &  \widebar{\ell}_{s+1} (\boldsymbol{\Omega}') \geq \widebar{\ell}_{s} (\boldsymbol{\Omega}'), \forall s \in \mathcal{S} \setminus \{ S \}.
\end{align*}
	}}
\end{figure}

Each curve in \cref{fig:results-scenario-7} is generated by varying the weights $ \eta_1 $ and $ \eta_2 $, thereby adjusting the relative importance assigned to energy and time consumption. For any given curve, the leftmost points represent configurations with a strong emphasis on minimizing time consumption and little regard for energy, while the rightmost points prioritize energy minimization at the expense of time. Notably, the leftmost allocations typically favor joint servicing of users and targets, resulting in fewer active timeslots. In contrast, the rightmost allocations tend to assign separate timeslots to users and targets to minimize energy consumption, particularly in cases of poor alignment.


\end{appendices}

\end{document}